\newtheorem{theorem}{Theorem}[section]
\newtheorem{lemma}[theorem]{Lemma}
\newtheorem{corollary}[theorem]{Corollary}
\theoremstyle{definition}
\newtheorem{definition}[theorem]{Definition}
\theoremstyle{remark}
\newcommand{\calB}{\mathcal{B}}
\newcommand{\calC}{\mathcal{C}}
\newcommand{\calK}{\mathcal{K}}
\newcommand{\calL}{\mathcal{L}}
\newcommand{\calM}{\mathcal{M}}
\newcommand{\calS}{\mathcal{S}}
\newcommand{\calZ}{\mathcal{Z}}
\newcommand{\im}{\mathop{\mathrm{im}}}
\newcommand{\Z}{\mathbb{Z}}
\newcommand{\R}{\mathbb{R}}
\newcommand{\e}{\mathrm{e}}
\renewcommand{\i}{\mathrm{i}}
\renewcommand{\d}{\mathrm{d}}
\newcommand{\Hom}{\mathrm{Hom}}
\newcommand{\Tor}{\mathrm{Tor}}
\newcommand{\id}{\mathrm{id}}
\title{\boldmath Anomaly and invertible field theory with higher-form symmetry: Extended group cohomology}
\author{Shi Chen}
\affiliation{School of Physics and Astronomy, University of Minnesota, Minneapolis, MN 55455, USA}
\emailAdd{chen8743@umn.edu}
\abstract{In the realm of invertible symmetry, the topological approach based on classifying spaces dominates the classification of 't Hooft anomalies and symmetry protected topological phases.
We explore the alternative algebraic approach based on cochains that directly characterize the lattice lagrangian of invertible field theories and the anomalous phase factor of topological operator rearrangements.
In the current literature, the algebraic approach has been systematically described for only finite 0-form symmetries.
In this initial work, we generalize it to finite higher-form symmetries with trivial higher-group structure.
We carefully analyze the algebraic cochains and abstract a purely algebraic structure that naturally generalizes group cohomology.
Using techniques from simplicial homotopy theory, we show its isomorphism to the cohomology of classifying spaces.
The proof is based on an explicit construction of Eilenberg-MacLane spaces and their products.
}
\begin{document}
\maketitle
\flushbottom
\section{Introduction}
\label{sec:introduction}

Symmetry plays a salient role in nonperturbative dynamics.
It is believed that in the quantum/statistical world, anything wild can happen unless a symmetry controls it.
Today, the connotation of symmetry has become very broad, profound, and sophisticated.

\subsection*{Generalized invertible symmetry}

Although spacetime symmetry has been generalized long ago (conformal symmetry, supersymmetry, etc.), internal symmetry has been generalized only recently.
The starting point is to realize that a traditional symmetry operation is realized by an \textit{invertible codimension-1 topological operator}.
Generalization just means to give up the restrictive attributives in the front of ``topological operator''.
Originating from the study of topological line defects in rational conformal field theories (e.g.~\cite{Frohlich:2009gb}), abandoning ``invertible'' has been drawing prominent attention in recent few years.
Nevertheless, we exclusively focus on invertible symmetries throughout this work.

Abandoning ``codimension-1'' was largely motivated by the study of line operators in gauge theories~\cite{Kapustin:2005py, Gukov:2008sn, Aharony:2013hda, Kapustin:2014gua} and topological field theories~\cite{Witten:1988hf}.
In the seminal work~\cite{Gaiotto:2014kfa}, the symmetry generated by codimension-$(p\!+\!1)$ topological operators is called a $p$-form symmetry.
Higher-form symmetries have to be commutative due to the obvious dimensional reason.
Let us present two of the most famous slogans about higher-form symmetries:
Photons are Nambu-Goldstone bosons of (probably emergent) 1-form U(1) symmetries; topological orders are spontaneous breaking of (typically emergent and probably non-invertible) discrete higher-form symmetries.

Around the time of Ref.~\cite{Gaiotto:2014kfa}, it was also noticed that higher-codimensional topological operators a priori live on the junctions between lower-codimensional topological operators, and the admissible connection means of higher-codimensional topological operators are a priori influenced by the ambient lower-dimensional topological operators~\cite{Kapustin:2013uxa, Sharpe:2015mja}.
Mathematically, this comes from an extension of lower-form symmetries by higher-form symmetries, in many aspects much akin to an ordinary group extension.
A web of connected topological operators gives as a concrete realization of a background gauge field (especially for discrete symmetry), and the rearrangements of junctions give gauge transformations.
Hence given the same spectrum of $p$-form symmetries, different extensions require different types of gauge fields and thus mean different symmetries.
These distinct algebraic structures are called higher-groups.
We claim that \textit{the most generalized invertible symmetries are described by higher-groups}.

The most tractable way to characterize a higher-group is to describe its classifying space.
In this topological approach, extensions are realized by fibrations.
The classifying space of a symmetry classifies the gauge fields of the symmetry.
More precisely, the topologically inequivalent gauge fields on a topological space $X$ are classified by the homotopy classes of maps from $X$ to the classifying space.
Hence classifying spaces are well-defined only up to homotopy equivalence.
A discrete higher-group is completely determined by the homotopy type of its classifying spaces.
Conversely, any topological space is a classifying space of a discrete higher-group.

Let us describe extensions slightly more concretely.
For a $p$-form symmetry $G$, Abelian when $p>0$, its classifying space is its $(p\!+\!1)$-th iterated delooping $B^pG$.
For a 2-group comprised by a 0-form symmetry $G_1$ and a 1-form symmetry $G_2$, its classifying space $X$ fits into a fibration $B^2G_2\to X\to BG_1$.
For a 3-group that further takes a 2-form symmetry $G_3$ into account, its classifying space $Y$ fits into a fibration $B^3G_3\to Y\to X$.
Iterating this procedure form by form, we can obtain the classifying space of an $n$-group for arbitrary $n$.
An $n$-group is thus completely characterized by the $n\!-\!1$ fibrations, in addition to the spectrum of $p$-form symmetries $G_1,G_2,\cdots,G_{n}$.
When the symmetry is discrete, the above procedure is called a Postnikov tower.

\subsection*{Anomaly inflow and invertible field theory}

An invertible symmetry gets extremely powerful if it has an 't Hooft anomaly.
An anomaly is an obstruction to coupling the symmetry with a background gauge field:
No local counterterm can remedy gauge invariance and the phase of the partition function is ambiguous under gauge transformations.
Anomalies are powerful because they are preserved by renormalization group flow and thus constrain the infrared behavior of a quantum/statistical system:
The existence of an anomaly strictly excludes a trivially gapped phase.
An anomalous discrete symmetry still allow gapped phases via spontaneous breaking, where the infrared is a nontrivial topological field theory.
Apart from this possibility, the gap has to be closed by massless degrees of freedom such as conformal, Nambu-Goldstone, or chiral-fermionic modes.
Traditional examples include the Lieb-Schultz-Mattis theorems~\cite{Lieb:1961fr, Affleck:1986pq, Oshikawa:2000zwq, Hastings:2003zx, Chen:2010zpc, Kobayashi:2018yuk, Ogata:2020hry} in spin systems and chiral symmetry breaking in fermionic gauge theories~\cite{tHooft:1979rat, Frishman:1980dq, Witten:1982fp, Witten:1983tw, Tanizaki:2018wtg, Lee:2020ojw, Yonekura:2020upo}. 
Conversely, as long as the theory is not trivially gapped, in the infrared necessarily emerge anomalous symmetries, which might persist into ultraviolet or might not.
In particular, in spontaneous breaking of non-anomalous symmetry always emerges an extra symmetry that has a mixed anomaly with the original symmetry.

We may be satisfied with a covariant partition function under gauge transformations but we can actually remedy gauge invariance by a local counterterm one dimension higher.
More accurately, we regard an $n$-dimensional anomalous theory as living on the boundary of an $(n\!+\!1)$-dimensional invertible field theory with the same symmetry.
A field theory is said invertible if it always has 1-dimensional Hilbert spaces.
Its partition is thus always a phase factor on any closed spacetime and loses gauge invariance if the compact spacetime has a boundary.
If a properly anomalous theory lives on the boundary, the whole system restores gauge invariance.
This is called anomaly inflow and establishes a bijective correspondence between $n$-dimensional anomalies and $(n\!+\!1)$-dimensional invertible field theories.
It is thus convenient to characterize an anomaly by its associated invertible field theory through anomaly inflow.

Invertible field theories themselves also describe physical systems.
In the presence of a symmetry, trivially gapped phases can be further differentiated into distinct symmetry protected topological phases~\cite{Gu:2009dr}.
Typical examples include the Haldane phase~\cite{Haldane:1982rj, Haldane:1983ru, Affleck:1987vf}, the Kitaev $E_8$ phase~\cite{Kitaev:2005hzj, Kitaev:2011tal, Lu:2012dt}, the Gu-Levin $\Z_8$ phases~\cite{Gu:2013azn, Bhardwaj:2016clt}, negatively massive fermions~\cite{Witten:2015aba, Witten:2016cio, Yonekura:2016wuc, Witten:2019bou}, and $2\pi$-differed $\theta$-angles in gauge theories~\cite{Hsin:2018vcg, Gaiotto:2017tne}.
The difference between two symmetry protected topological phases is exactly characterized by an invertible field theory~\cite{Freed:2014eja, Wen:2013oza, Thorngren:2015gtw}.
Namely, the infrared of one phase is the infrared of another phase tensored with an invertible field theory.
Depending on the explicit physical system, there might be a preferred trivial phase (the former four examples above) or no preferred trivial phase (the last example above).
In general, symmetry protected topological phases form a principal homogeneous space under the group of invertible field theories\footnote{
This relative nature of symmetry protected topological phases is recently emphasized by the authors of Ref.~\cite{Seifnashri:2024dsd}.
They observed that this relative nature becomes particularly significant if the symmetry ceases to be invertible.
}.
Due to anomaly inflow, an interface between two symmetry protected topological phases either explicitly violates the symmetry or renders the symmetry anomalous.
Hence a symmetric interface cannot be trivially gapped and must have nontrivial infrared dynamics.

\subsection*{Topological approach vs. algebraic approach}

The characterization and classification of invertible field theories are of the first priority in the study of 't Hooft anomalies and symmetry protected topological phases.
Based on the analysis of partition functions, invertible field theories are argued to be characterized and classified by an appropriate cohomology theory related to the classifying space~\cite{Kitaev:2011tal, Kitaev:2013tal}.
The simplest Ansatz is the integral cohomology~\cite{Kapustin:2014lwa}.
More precisely, $\bullet$-dimensional invertible field theories are classified by $H^{\bullet+1}(X,\Z)$ for the classifying space $X$.
When the symmetry is finite (i.e., discrete and compact), the integral cohomology is isomorphic to the shifted U(1) cohomology, i.e., $H^{\bullet+1}(X,\Z)=H^{\bullet}\bigl(X,U(1)\bigr)$.
Such ordinary cohomologies suffice for simple bosonic systems.

The pursuit of more complicated systems involving fermions, time-reversal symmetry, gravitational anomalies, etc.,~requests more sophisticated generalized cohomologies~\cite{Kitaev:2011tal, Kitaev:2013tal, Gaiotto:2017zba}.
In different circumstances apply different cohomologies.
Today, it is widely believed that bordism cohomologies provide universal solutions~\cite{Kapustin:2014tfa, Kapustin:2014dxa, Yonekura:2018ufj, Garcia-Etxebarria:2018ajm, Freed:2014eja, Freed:2016rqq}.
Here a bordism cohomology means the generalized cohomology given by the shifted Anderson dual of an appropriate Thom spectrum.
The choice of the Thom spectrum is determined by the symmetry details of the physical systems and is constructed out of the classifying space.
A piece of strong evidence comes from the axiomatic treatment of invertible fully-extended topological field theories~\cite{Freed:2014eja, Freed:2016rqq}.
Another rationale comes from the study of fermionic path integral~\cite{Witten:2015aba, Witten:2016cio, Witten:2019bou, Yonekura:2016wuc}, where invertible field theories can be systematically captured by the Atiyah-Patodi-Singer $\eta$-invariant through the Dai-Freed theorem~\cite{Dai:1994kq}.
The $\eta$-invariant is a bordism invariant for (many different) appropriate manifold classes.

We refer to the above treatment as the topological approach, which is primarily  based on the partition function of invertible field theories.
By this paper, we initiate an exploration on the alternative algebraic approach that is instead based on the Dijkgraaf-Witten-type discrete formulation of invertible field theories.
In the algebraic approach, anomaly inflow is pretty intuitive and directly characterizes the anomalous phase factor accompanying a discrete gauge transformation, i.e., a junction rearrangement in a web of connected topological operators.
The topological and the algebraic approaches \textit{are supposed to} be equivalent to each other.
They ought to yield the same classification of invertible field theories, as well as the same classification of 't Hooft anomalies and symmetry protected topological phases.
They should be reciprocal and complementary to each other such that we can freely switch from one to the other from time to time.

The ideal vision above is far from the current reality.
The algebraic approach was systematically studied for finite 0-form symmetries and a few cohomologies only.
Group cohomology~\cite{Chen:2011pg} describes bosonic systems and is isomorphic to the ordinary cohomology of classifying spaces.
Group supercohomologies, including the original version~\cite{Gu:2012ib} and an updated version~\cite{Wang:2017moj, Kapustin:2017jrc, Tachikawa:2019lec}, describe fermionic systems and are isomorphic to certain generalized cohomologies of classifying spaces that approximate the spin bordism cohomology from low degrees~\cite{Gaiotto:2017zba}.
No systematic algebraic approach beyond these has been established to adapt higher-form symmetries, bordism cohomologies, or continuous symmetries.
Although pretty impressive progresses have been made to understand the anomalies and invertible field theories of 2-groups~\cite{Kapustin:2013uxa, Benini:2018reh, Cordova:2018cvg}, we are still quite far away from a satisfactory algebraic approach for general invertible symmetries that is equivalent to the topological approach.
This paper is devoted to a step forward toward this direction.


\subsection{Convention and summary}
\label{sec:summary}

To make our narrative smooth, we introduce the following convenient notion which we shall extensively use throughout this paper.
\begin{definition}\label{def:extended-group}
    An \textit{extended group} $G$ is a sequence of groups $G_1,G_2,G_3,\cdots$ such that (i) $G_k$ is Abelian for all $k>1$; (ii) there is a number $N$ such that $G_k$ is trivial for all $k>N$; (iii) every $G_k$ is at most countable.
\end{definition}
Given that $G_1$ can be non-Abelian, we describe the group composition of $G_k$ using the multiplicative convention.
Their identities are denoted by $1$.
An extended group $G$ exactly describes the algebraic structure of a discrete invertible symmetry with trivial higher-group structure.
Namely, $G_{p+1}$ describes the $p$-form discrete invertible symmetry while the absence of the extension information means trivial higher-group structure.
Hence we shall abbreviate a ``discrete invertible symmetry with trivial higher-group structure'' as an ``extended-group symmetry''.

Our goal in this work is to explore the elementary cochain description of anomalies and invertible field theories for finite extended-group symmetries.
The physical results will usually apply to finite extended groups only while the mathematical results will apply to general extended groups.
The paper is organized as follows.
\paragraph{(I)}
In Sec.~\ref{sec:invertible-field-theory}, we shall construct \textit{extended group cohomology} that characterizes the lattice lagrangian of invertible field theories.
\paragraph{(II)}
In Sec.~\ref{sec:equivalence}, we shall show the isomorphism between extended group cohomology and ordinary cohomology of classifying spaces.
\paragraph{(III)}
In Sec.~\ref{sec:anomaly}, we shall describe 't Hooft anomalies using the algebraic cochains of extended group cohomology.


\newpage
\section{Invertible field theory: Algebraic approach}
\label{sec:invertible-field-theory}

We investigate the discrete formulation of invertible field theories with finite extended-group symmetries.
In Sec.~\ref{sec:discrete-formulation}, we count the degrees of freedom and characterize the lattice lagrangian of invertible field theories.
In Sec.~\ref{sec:extended-group-cohomology}, we isolate the purely algebraic structures from the geometric background and mathematically formulate the resulting cohomology.
In Sec.~\ref{sec:low-dim-example}, we elaborate on the low-dimensional cases with illustrations.


\subsection{Discrete formulation}
\label{sec:discrete-formulation}

Let us consider an $n$-dimensional invertible field theory with a discrete symmetry described by a finite extended group $G$.
The compactness of the symmetry ensures the unitarity of the invertible field theory.
Its partition function on a closed smooth $n$-manifold $\calM$ is given by
\begin{equation}
    \e^{\i\calS}\,,\qquad \calS\sim\calS+2\pi\,.
\end{equation}
The action $\calS$ is a $U(1)$-valued function on the space of all background $G$ gauge fields.
A $G$ gauge field contains a $k$-form $G_k$ gauge field for each $k$, which is necessarily locally flat and described by a cohomology class $\in H^{k}(\calM,G_k)$.

Let us formulate invertible field theories on a discrete spacetime.
Following Ref.~\cite{Dijkgraaf:1989pz}, we triangulate\footnote{
More accurately, by triangulation we mean the more flexible $\Delta$-complexes (see~\cite[Sec.~2.1]{Hatcher:478079}), which generalize simplicial complexes.
Every smooth manifold is triangulizable.
}
the spacetime manifold and discretize the $G$ gauge field.
Namely, we assign a $G_1$ element to each 1-simplex, a $G_2$ element to each 2-simplex, and so on.
Since the gauge field has to be locally flat, the $G$ element on each simplex is not independent.
We are going to carefully inspect the independent degrees of freedom in a single $n$-simplex.


\subsubsection{Independent degrees of freedom}
\label{sec:A<-g}

Let us first introduce more accurate language to describe the gauge fields on an $n$-simplex.
An $n$-simplex is spaned by $n\!+\!1$ vertices and we label them by $0,1,\cdots,n$.
Then a $k$-facet is uniquely specified by $k\!+\!1$ vertices.
Let us encode this in the following notation.
\begin{definition}
    For non-negative integers $n$ and $k\leq n$, we define $\langle n\!:\!k\rangle$ to be the set of all $(k\!+\!1)$-element subsets of $\{0,1,\cdots,n\}$.
\end{definition}
$\langle n\!:\!k\rangle$ represents the set of all $k$-facets in an $n$-simplex.
When we refer to a $k$-facet $\{\ell_{0},\ell_{1},\cdots,\ell_{k}\}\in\langle n\!:\!k\rangle$, we always take the vertices in the increasing order $0\leq\ell_{0}<\ell_{1}<\cdots<\ell_{k}\leq n$.
We call this the standard orientation.
When we do a sum/product over the boundary $(k\!-\!1)$-facets of a $k$-facet, we should instead take the Stokes-law orientation with respect to the $k$-facet.
The Stokes-law orientation of the $(k\!-\!1)$-facet $\{\ell_{0},\ell_{1},\cdots,\ell_{k}\}-\{\ell_{j}\}$ with respect to the $k$-facet $\{\ell_{0},\ell_{1},\cdots,\ell_{k}\}$ differs from the standard orientation by the sign
\begin{equation}\label{eq:sign}
    (-)^j\,.
\end{equation}
It comes from the fact that popping $\ell_j$ to the beginning of $\{\ell_{0},\ell_{1},\cdots,\ell_{k}\}$ needs $j$ times of switching an adjacent pair.

The total number of $k$-facets in an $n$-simplex is given by the cardinality of $\langle n\!:\!k\rangle$,
\begin{equation}\label{def:<n:k>}
    \frac{(n+1)!}{(k+1)!(n-k)!}\,.
\end{equation}
On an $n$-simplex, a $G$ gauge field, comprised of $G_k$ gauge fields, is specified by maps
\begin{equation}
    A:\langle n\!:\!k\rangle\mapsto G_k
\end{equation}
for each $k$.
$A$ is strictly constrained by the locally-flatness condition.
Accurately speaking, we call a $G_k$ gauge field locally flat if the product of the $G_{k}$ gauge field over the boundary of any $(k\!+\!1)$-facet (under the Stokes-law orientation) is trivial.
Namely, $A$ is locally flat if
\begin{equation}\label{eq:local-flat}
    \prod_{j=0}^{k+1} A_{\ell_{0},\cdots,\ell_{j-1},\ell_{j+1},\cdots,\ell_{k+1}}^{(-1)^j} = 1\,,\qquad\forall k,\ \forall\{\ell_0,\ell_1,\cdots,\ell_{k+1}\}\in\langle n\!:\!k\!+\!1\rangle\,.
\end{equation}
The constraints themselves are also not independent.
It is not hard to notice that the gauge fields on all the facets that contain a fixed vertex, say 0, constitute a set of independent degrees of freedom.
Hence the number of independent degrees of freedom is
\begin{equation}\label{eq:number}
    \frac{n!}{k!(n-k)!}
\end{equation}
for each $k$.
However, we shall not use this set of independent degrees of freedom.
Instead, we shall take another ansatz as follows.
Let us first introduce a new notion.
\begin{definition}\label{def:[n:k]}
    For positive integers $n$ and $k\leq n$, we define $[n\!:\!k]$ to be the set of all $k$-element subgroups of $\{1,2,\cdots,n\}$.
\end{definition}
$[n\!:\!k]$ has no direct geometric meaning while its cardinality is exactly the number~\eqref{eq:number}.
When we refer to $\{q_{1},\cdots,q_{k}\}\in[n\!:\!k]$, we still stick to the increasing order $1\leq q_{1}<\cdots<q_{k}\leq n$.
We now encode the independent degrees of freedom of $A$ in the maps
\begin{equation}\label{eq:g}
    g:[n\!:\!k]\mapsto G_k
\end{equation}
through the ansatz
\begin{equation}\label{eq:A<-g}
    A_{\ell_0,\ell_1,\cdots,\ell_{k}}\ =\ \prod_{q_1=\ell_0+1}^{\ell_1}\prod_{q_2=\ell_1+1}^{\ell_2}\cdots\prod_{q_k=\ell_{k-1}+1}^{\ell_{k}} g_{q_1,q_2,\cdots,q_k}\,,
\end{equation}
which, we verify, ensures the locally-flatness condition~\eqref{eq:local-flat}.
Through this Ansatz, we have expressed $A_{\ell_0,\ell_1,\cdots,\ell_k}$ as the product of
\begin{equation}
    (\ell_1-\ell_0)(\ell_2-\ell_1)\cdots(\ell_{k}-\ell_{k-1})
\end{equation}
independent degrees of freedom.
$A_{\ell_0,\ell_1,\cdots,\ell_{k}}$ itself is an independent degree of freedom if and only if $\ell_0,\ell_1,\cdots,\ell_{k}$ are consecutive integers and then we have $A_{\ell_0,\ell_1,\cdots,\ell_{k}}=g_{\ell_1,\cdots,\ell_{k}}$.


\subsubsection{Lattice lagrangian}
\label{sec:lattice-lagrangian}

In the discrete formulation, an invertible field theory is specified by a lattice lagrangian,
\begin{equation}
    \calL\sim\calL+2\pi\,.
\end{equation}
It is a $U(1)$-valued function on the space of locally flat $G$ gauge fields on an $n$-simplex.
The sum of a lagrangian over all $n$-simplices in the triangulation gives an action, i.e.,
\begin{equation}\label{eq:S<-L}
    \calS\ =\ \sum_{\text{all $n$-simplices}}\calL\,.
\end{equation}
Having extracted the independent degrees of freedom in an $n$-simplex, we regard a lattice lagrangian as a function of $g$ instead of $A$.

Crucially, an action $\calS$ should not depend on the choice of a particular triangulation.
This is the discrete version of gauge invariance and strictly constrains lagrangians $\calL$.
Let us consider the elementary subdivision that divides a single $n$-simplex into $n\!+\!1$ $n$-simplices. 
Putting it in a fancier way, we attach an $(n\!+\!1)$-simplex to this $n$-simplex and replaces it by other $n$-facets of the $(n\!+\!1)$-simplex.
Hence the condition of gauge invariance can be stated as follows:
The sum of an $n$-dimensional lagrangian over the boundary of an $(n\!+\!1)$-simplex (under the Stokes-law orientation) should be trivial.

In addition to gauge invariance, there are also gauge redundancies.
Some nontrivial lagrangians $\calL$ always sum up to a trivial action.
Such an $n$-dimensional lagrangian on an $n$-simplex is exactly given by the sum of an $(n\!-\!1)$-dimensional lagrangian over the boundary of this $n$-simplex (under the Stokes-law orientation).
We regard these $n$-dimensional lagrangians as gauge redundancies and modulo them out when we characterize and classify lagrangians.

In both gauge invariance and gauge redundancy appears a common operation, i.e., obtaining an $n$-dimensional lagrangian via summing an $(n\!-\!1)$-dimensional lagrangian over the boundary of an $n$-simplex (under the Stokes-law orientation).
This boundary-sum operation provides a homomorphism from the group of $(n\!-\!1)$-dimensional lagrangians to the group of $n$-dimensional lagrangians.
Let us denote it by $d_{n-1}$.
Then the equivalence classes of legitimate $n$-dimensional lagrangians can be compactly expressed by
\begin{equation}
    \frac{\ker d_n}{\im d_{n-1}}\,.
\end{equation}
This is nothing but the cohomology of a cochain complex made by lagrangians and $d_{n}$'s.
Our goal is thus to find out the accurate expression for $d_{n}$.

For this purpose, let us embed an $(n\!-\!1)$-simplex into an $n$-simplex as one of its $(n\!-\!1)$-facets.
The embedding to the $(n\!-\!1)$-facet that does not contain the vertex $j$ is abstractly specified by an order-preserving injective map between the sets of vertices, i.e.,
\begin{equation}
    \eta_{j}\,:\,\{0,1,\cdots,n\!-\!1\}\mapsto\{0,1,\cdots,n\}\,,\qquad j=0,1,\cdots,n
\end{equation}
where $j$ indicates the vertex that falls off the image of $\eta_j$.
Explicitly,
\begin{equation}
    \eta_j(\ell)\,\equiv\begin{cases}
        \ell\,, &\ell<j\\
        \ell+1\,, &\ell\geq j
    \end{cases}\,.
\end{equation}
Accurately speaking, our goal is to find out $\eta_j^*g$, the pullback of $g$ under the embedding $\eta_j$.
First, we can readily find the pullback of $A$,
(with $0\leq\ell_{0}<\cdots<\ell_{k}\leq n\!-\!1$)
\begin{equation}
\begin{split}
    \bigl(\eta_j^*A\bigr)_{\ell_0,\ell_1,\cdots,\ell_{k}}\ &\equiv\ A_{\eta_j(\ell_0),\eta_j(\ell_1),\cdots,\eta_j(\ell_{k})}\\
    &=\ A_{\ell_0,\cdots,\ell_{r-1},\ell_{r}+1,\cdots,\ell_{k}+1}\,,\qquad  \ell_{r-1}<j\leq\ell_{r}\,.
\end{split}
\end{equation}
Plugging this into Ansatz~\eqref{eq:A<-g}, we see that the upper bound of the $r$-th product changes from $\ell_{r}$ to $\ell_{r}+1$, and all the subsequent products simultaneously shift the lower and upper bounds by $+1$.
Explicitly, we have
\begin{equation}\label{eq:pullback-A}
    \bigl(\eta_j^*A\bigr)_{\ell_0,\ell_1,\cdots,\ell_{k}}\ =\ \prod_{q_1=\ell_0+1}^{\ell_1}\cdots\prod_{q_{r-1}=\ell_{r-2}+1}^{\ell_{r-1}}\prod_{q_{r}=\ell_{r-1}+1}^{\ell_{r}+1}\prod_{q_{r+1}=\ell_{r}+2}^{\ell_{r+1}+1}\cdots\prod_{q_k=\ell_{k-1}+2}^{\ell_{k}+1} g_{q_1,\cdots,q_k}\,,
\end{equation}
with $\ell_{r-1}<j\leq\ell_{r}$.
Comparing it with the original Ansatz~\eqref{eq:A<-g} for an $(n\!-\!1)$-simplex, i.e.,
\begin{equation}
    \bigl(\eta_j^*A\bigr)_{\ell_0,\ell_1,\cdots,\ell_{k}}\ =\ \prod_{q_1=\ell_0+1}^{\ell_1}\prod_{q_2=\ell_1+1}^{\ell_2}\cdots\prod_{q_k=\ell_{k-1}+1}^{\ell_{k}} \bigl(\eta_j^*g\bigr)_{q_1,q_2,\cdots,q_k}\,,
\end{equation}
we can extract (with $1\leq q_{1}<\cdots<q_{k}\leq n\!-\!1$)
\begin{equation}\label{eq:face-map-0}
    \bigl(\eta_j^*g\bigr)_{q_1,q_2,\cdots, q_k}=\begin{cases}
            g_{q_1,\cdots,q_{\bullet},q_{\bullet+1}+1,\cdots,q_k+1}\,, & q_{\bullet}< j<q_{\bullet+1}\\
            g_{q_1,\cdots,q_{\bullet},q_{\bullet+1}+1,\cdots,q_k+1}\,g_{q_1,\cdots,q_{\bullet}+1,q_{\bullet+1}+1,\cdots,q_k+1}\,, & q_{\bullet}= j
        \end{cases}\,.
\end{equation}
Here the extra multiplicand on the second line accommodates the additional term in the $r$-th product in Eq.~\eqref{eq:pullback-A}.
Having obtained Eq.~\eqref{eq:face-map-0}, we can then write down the accurate formula of the boundary-sum operation $d_{n-1}$.


\subsection{Extended group cohomology}
\label{sec:extended-group-cohomology}

We are now in the position to formulate the cohomology we found in Sec.~\ref{sec:discrete-formulation} in a purely algebraic manner without referring to the geometric background.
We shall allow general extended groups instead of merely finite ones.
We shall also generalize the target from $U(1)$ to a general Abelian group $M$.
We exclusively use the additive convention to describe the group composition of $M$.
The identity of $M$ is denoted by $0$. 


\subsubsection{Cochain complex}
\label{sec:cochain}
\label{sec:differential}

We begin with cochains, which correspond to lattice lagrangians.
\begin{definition}\label{def:cochain}
    For a positive integer $n$ and an Abelian group $M$, an \textit{$M$-valued $n$-cochain} on an extended group $G$ is a map
    \begin{equation}
        c_n:\prod_{k=1}^{n} G_k^{{[n:k]}}\mapsto M\,.
    \end{equation}
    The set of all $M$-valued $n$-cochains of $G$ will be denoted by $C^n(G,M)$ and is naturally an Abelian group induced by $M$.
\end{definition}
Recall that sets $[n\!:\!k]$ are defined in Def.~\ref{def:[n:k]} and $Y^X$ is the standard notation for the set of all maps $X\mapsto Y$.
Let us enumerate the looks of low-degree cochains: (to make things clear, we separate components by $|$ and $\|$ to avoid a messy hell of commas)
\begin{subequations}
\begin{align}
    &c_1\Bigl(\underbrace{g_1}_{G_1}\Bigr)\,,\label{eq:c1}\\
    &c_2\Bigl(\underbrace{g_1\,\big|\,g_2}_{G_1}\,\Big\|\,\underbrace{g_{1,2}}_{G_2}\Bigr)\,,\label{eq:c2}\\
    &c_3\Bigl(\underbrace{g_1\,\big|\,g_2\,\big|\,g_3}_{G_1}\,\Big\|\, \underbrace{g_{1,2}\,\big|\,g_{1,3}\,\big|\,g_{2,3}}_{G_2}\,\Big\|\,\underbrace{g_{1,2,3}}_{G_3}\Bigr)\,,\label{eq:c3}\\
    &c_4\Bigl(\underbrace{g_1\,\big|\,g_2\,\big|\,g_3\,\big|\,g_4}_{G_1}\,\Big\|\,\underbrace{g_{1,2}\,\big|\,g_{1,3}\,\big|\,g_{2,3}\,\big|\,g_{1,4}\,\big|\,g_{2,4}\,\big|\,g_{3,4}}_{G_2}\,\Big\|\,\underbrace{g_{1,2,3}\,\big|\,g_{1,2,4}\,\big|\,g_{1,3,4}\,\big|\,g_{2,3,4}}_{G_3}\,\Big\|\,\underbrace{g_{1,2,3,4}}_{G_4}\Bigr)\label{eq:c4}\,.
\end{align}
\end{subequations}
$g_{j_1,j_2,\cdots,j_k}$ gives an element of $G_k$ and recall that we exclusively stick to the increasing order $j_1<j_2<\cdots<j_k$.
The total number of the components in $c_n$'s argument is superficially $2^n-1$.
However, since we put a dimension upper bound in Def.~\ref{def:extended-group}, it gradually increases polynomially with $n$.

We now formulate the differentials.
Let us first formally introduce the pullback~\eqref{eq:face-map-0} under the new name of ``face maps''.
\begin{definition}\label{def:face-map}
    Given positive integers $n$ and $k\leq n$, an integer $j$ with $0\leq j\leq n$, and a group $H$, we define the \textit{face maps},
    \begin{equation}
        \partial_j:H^{[n:k]}\mapsto H^{[n-1:k]}\,,
    \end{equation}
    to be
    \begin{equation}\label{eq:face-map}
        (\partial_jh)_{q_1,q_2,\cdots, q_k}\equiv\begin{cases}
            h_{q_1,\cdots,q_{\bullet},q_{\bullet+1}+1,\cdots,q_k+1}\,, & q_{\bullet}< j<q_{\bullet+1}\,,\\
            h_{q_1,\cdots,q_{\bullet},q_{\bullet+1}+1,\cdots,q_k+1}\,h_{q_1,\cdots,q_{\bullet}+1,q_{\bullet+1}+1,\cdots,q_k+1}\,, & q_{\bullet}= j\,,
        \end{cases}
    \end{equation}
    where $1\leq q_1<q_2<\cdots<q_k\leq n\!-\!1$.
\end{definition}
We explicitly enumerate the face maps with $n\leq 5$ in Table~\ref{tab:face-map} of Appendix~\ref{app:differential}.
We need the following property of the face maps to ensure that our construction will be well-defined.
\begin{lemma}\label{the:face-map}
    The face maps $\partial_j$ satisfy the identity
    \begin{equation}\label{eq:ij=(j-1)i}
        \partial_i\partial_j=\partial_{j-1}\partial_i\,,\qquad\text{if }\ 
        i<j\,.
    \end{equation}
\end{lemma}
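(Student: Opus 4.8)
The plan is to recognize the face maps $\partial_j$ of Definition~\ref{def:face-map} as, up to the change of variables $g\mapsto A$ of Sec.~\ref{sec:A<-g}, the restriction maps to the $j$-th face of a simplex. Once that identification is made, Lemma~\ref{the:face-map} reduces to the standard cosimplicial identity for the coface inclusions $\eta_j$, transported through an injective parametrization.

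Fix a group $H$ and an index $k$, and for each $m\geq k$ let $\Phi_m\colon H^{[m:k]}\to H^{\langle m:k\rangle}$ be the map sending $g$ to the function $A$ on $k$-facets of the $m$-simplex defined by the right-hand side of \eqref{eq:A<-g}. Two properties of this parametrization are what the argument needs. First, \emph{$\Phi_m$ is injective}: \eqref{eq:A<-g} is triangular, so $g$ is recoverable from $A$. Indeed $g_{q_1,\dots,q_k}=A_{q_1-1,q_1,\dots,q_k}$ whenever $q_1,\dots,q_k$ are consecutive integers, and in general $A_{q_1-1,q_1,q_2,\dots,q_k}$ equals the product of $g_{q_1,q_2,\dots,q_k}$ with $g$'s carrying lexicographically smaller index tuples, so one recovers every $g_{q_1,\dots,q_k}$ by induction. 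Second, \emph{$\eta_j^*\circ\Phi_m=\Phi_{m-1}\circ\partial_j$}, where $\eta_j^*$ is the ordinary pullback of facet-functions, $(\eta_j^*A)_{\ell_0,\dots,\ell_k}=A_{\eta_j(\ell_0),\dots,\eta_j(\ell_k)}$. This is exactly the computation already carried out in Sec.~\ref{sec:lattice-lagrangian}: Eqs.~\eqref{eq:pullback-A} and \eqref{eq:face-map-0} show that pulling $\Phi_m(g)$ back along $\eta_j$ reproduces the Ansatz \eqref{eq:A<-g} on the $(m-1)$-simplex with $g$ replaced by $\partial_j g$ as in \eqref{eq:face-map}, and that derivation uses nothing about $H$ beyond its group structure.

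Granting these, the lemma is immediate. For $i<j$ the coface maps satisfy $\eta_j\circ\eta_i=\eta_i\circ\eta_{j-1}$, since both sides are the order-preserving injection $\{0,\dots,n-2\}\hookrightarrow\{0,\dots,n\}$ whose image omits exactly $\{i,j\}$: for $\eta_j\circ\eta_i$ the omitted elements are $j$ and $\eta_j(i)=i$, while for $\eta_i\circ\eta_{j-1}$ they are $i$ and $\eta_i(j-1)=j$ (using $i\leq j-1$). By functoriality of pullback, $\eta_i^*\eta_j^*=(\eta_j\eta_i)^*=(\eta_i\eta_{j-1})^*=\eta_{j-1}^*\eta_i^*$, so applying the second property twice gives
\[
\Phi_{n-2}\bigl(\partial_i\partial_j h\bigr)=\eta_i^*\eta_j^*\,\Phi_n(h)=\eta_{j-1}^*\eta_i^*\,\Phi_n(h)=\Phi_{n-2}\bigl(\partial_{j-1}\partial_i h\bigr)
\]
for every $h\in H^{[n:k]}$; injectivity of $\Phi_{n-2}$ then yields $\partial_i\partial_j h=\partial_{j-1}\partial_i h$. (If $k>n-2$ the statement is vacuous.)

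The one genuinely load-bearing step is the injectivity of $\Phi_m$, i.e.\ making rigorous the claim of Sec.~\ref{sec:A<-g} that the $g$'s are independent degrees of freedom; it costs only the observation that \eqref{eq:A<-g} is triangular, and everything else is bookkeeping already in place. The alternative would be to verify \eqref{eq:ij=(j-1)i} head-on from \eqref{eq:face-map}, matching $(\partial_i\partial_j h)_{q_1,\dots,q_k}$ against $(\partial_{j-1}\partial_i h)_{q_1,\dots,q_k}$ via a case split on the positions of $i$ and $j$ relative to the tuple $q_1<\dots<q_k$; the delicate cases are $i=q_a$ with $j=q_a+1$ and $i=q_a,\ j=q_b$ with $a<b$, where the two-summand branch of \eqref{eq:face-map} fires twice and the resulting multi-term product must be reassociated --- not reordered, since $H$ may be non-abelian. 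I expect the conceptual route to be the shorter and more transparent one, so that is the one I would write up.
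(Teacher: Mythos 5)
Your proposal is correct in the regime where the lemma actually holds, and it takes a genuinely different route from the paper: the paper proves Lemma~\ref{the:face-map} by direct verification of \eqref{eq:ij=(j-1)i} from the definition \eqref{eq:face-map} (the case analysis you sketch in your last paragraph is exactly what that entails), whereas you transport the standard cosimplicial identity $\eta_j\eta_i=\eta_i\eta_{j-1}$ ($i<j$) through the parametrization $\Phi_m$ given by Ansatz~\eqref{eq:A<-g}, using its injectivity plus the intertwining $\eta_j^*\circ\Phi_m=\Phi_{m-1}\circ\partial_j$ already computed in Sec.~\ref{sec:lattice-lagrangian}. Your route explains \emph{why} the identity holds (it is functoriality of restriction to faces, read through an injective change of variables), and your triangularity/induction argument for the injectivity of $\Phi_m$ is sound; the paper's route needs no auxiliary apparatus but offers no insight. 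Each is a legitimate proof of the statement as the paper uses it.

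One assertion, however, is wrong and should be repaired: the claim that the intertwining step — the extraction of \eqref{eq:face-map-0} from \eqref{eq:pullback-A} — "uses nothing about $H$ beyond its group structure," together with your closing remark that the delicate cases need only reassociation, never reordering. When $k\geq2$ and the two-summand branch of \eqref{eq:face-map} fires in a slot $r<k$, the two factors being merged are \emph{not} adjacent in the lexicographically ordered product \eqref{eq:pullback-A}: all factors with intermediate suffixes sit between them, so commutativity of $H$ is genuinely invoked. Indeed the identity \eqref{eq:ij=(j-1)i} itself fails for non-abelian $H$ once $k\geq2$: on $H^{[4:2]}$ one finds $(\partial_1\partial_3h)_{1,2}=h_{1,3}h_{1,4}h_{2,3}h_{2,4}$ while $(\partial_2\partial_1h)_{1,2}=h_{1,3}h_{2,3}h_{1,4}h_{2,4}$, which agree only if $h_{1,4}$ and $h_{2,3}$ commute. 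This costs nothing that the paper (or you) actually needs — non-abelian $H$ only ever occurs with $k=1$, where the merged factors are adjacent and associativity suffices, and both Definition~\ref{def:extended-group} and Theorem~\ref{the:K(H,m)} impose abelianness whenever $k\geq2$ — but your write-up should state the standing hypothesis ($H$ abelian when $k\geq2$) rather than claim the argument is commutativity-free.
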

\begin{proof}
    We verify it directly from the definition~\eqref{eq:face-map}.
\end{proof}
We now define the differentials using the face maps and the Stokes-law sign~\eqref{eq:sign}.

\begin{definition}\label{def:diffrential}
    Given a positive integer $n$, an Abelian group $M$, and an extended group $G$, the \textit{$n$-th differential map}
    \begin{equation}
        \d_n:C^n(G,M)\mapsto C^{n+1}(G,M)
    \end{equation}
    is the group homomorphism defined by
    \begin{equation}
    \label{eq:differential}
        \d_n c_n\Bigl(g^{(1)}\Big\|g^{(2)}\Big\|\cdots\Big\|g^{(n)}\Big\|g^{(n+1)}\Bigr) \ \equiv\ \sum_{j=0}^{n+1}\,(-)^{j} \, c_n\Bigl(\partial_j g^{(1)}\Big\| \partial_j g^{(2)}\Big\|\cdots\Big\|\partial_j g^{(n)}\Bigr)\,,
    \end{equation}
    where $g^{(k)}\in G_k^{[n+1:k]}$.
\end{definition}
We present the explicit expressions for these differentials with dimension $\leq4$ in Appendix~\ref{app:differential}, which can be quickly read off from Table~\ref{tab:face-map}.
We now justify that the differentials we defined above are legitimate in the sense of homological algebra.
\begin{lemma}\label{the:cochain-complex}
    For an extended group $G$ and an Abelian group $M$, the cochains defined in Def.~\ref{def:cochain} and the differentials defined in Def.~\ref{def:diffrential} assemble to a connective cochain complex,
    \begin{equation}\label{eq:cochain-complex}
        M \overset{\d_{0}}{\longrightarrow} C^1(G,M) \overset{\d_1}{\longrightarrow} C^2(G,M) \overset{\d_2}{\longrightarrow} C^3(G,M) \overset{\d_3}{\longrightarrow} C^4(G,M) \overset{\d_4}{\longrightarrow} \cdots\,,
    \end{equation}
    where $\d_0$ is the trivial map.
    In other words, $\d_{n+1}\d_{n}=0$ for any integer $n\geq 0$.
\end{lemma}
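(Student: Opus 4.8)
The plan is to reduce the statement to the simplicial identity of Lemma~\ref{the:face-map}, following the classical pattern by which an alternating sum of face operators squares to zero. For $n=0$ the claim is immediate since $\d_0$ is the trivial map, so from now on assume $n\geq 1$.

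First I would expand $\d_{n+1}\d_n c_n$ directly from Def.~\ref{def:diffrential}. Evaluating on an argument $\bigl(g^{(1)}\|\cdots\|g^{(n+2)}\bigr)$ with $g^{(k)}\in G_k^{[n+2:k]}$, the outer differential $\d_{n+1}$ introduces face maps $\partial_i:G_k^{[n+2:k]}\to G_k^{[n+1:k]}$ with signs $(-)^i$, and then $\d_n c_n$ evaluated on those arguments introduces face maps $\partial_j:G_k^{[n+1:k]}\to G_k^{[n:k]}$ with signs $(-)^j$, so that
\begin{equation*}
    \d_{n+1}\d_n c_n\bigl(g^{(1)}\|\cdots\|g^{(n+2)}\bigr)=\sum_{i=0}^{n+2}\sum_{j=0}^{n+1}(-)^{i+j}\,c_n\bigl(\partial_j\partial_i g^{(1)}\|\cdots\|\partial_j\partial_i g^{(n)}\bigr)\,,
\end{equation*}
where $\partial_i$ is applied first and $\partial_j$ second (only the first $n$ form-degree components survive into the argument of $c_n$). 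Since the composite $\partial_j\partial_i$ acts componentwise and Lemma~\ref{the:face-map} is valid for every group $H$ and every $k$, all the composites below may be manipulated component by component.

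Next I would split the double sum into the part with $j\geq i$ and the part with $j<i$. On the part with $j<i$, applying Lemma~\ref{the:face-map} with the index pair $(j,i)$ in place of $(i,j)$ — legitimate precisely because $j<i$ — rewrites $\partial_j\partial_i=\partial_{i-1}\partial_j$. Then the relabelling $i\mapsto j$, $j\mapsto i-1$ is a bijection from $\{(i,j):0\le j<i\le n+2\}$ onto $\{(i',j'):0\le i'\le j'\le n+1\}$, i.e.\ exactly the index set of the $j\geq i$ part, and it reverses the sign since $(-)^{i+j}=(-)^{(j'+1)+i'}=-(-)^{i'+j'}$. Hence the $j<i$ part equals the negative of the $j\geq i$ part term by term, the whole expression vanishes, and as the argument was arbitrary we conclude $\d_{n+1}\d_n=0$.

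The computation is entirely routine; the only point demanding care is the bookkeeping of the index ranges, namely that $i$ runs up to $n+2$ while $j$ runs only up to $n+1$, so that after the substitution the two halves of the double sum have matching ranges and the cancellation is exact. I do not anticipate any genuine obstacle beyond verifying this bijection and the sign flip.
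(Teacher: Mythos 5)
Your proposal is correct and follows essentially the same route as the paper: expand $\d_{n+1}\d_n$ into the double alternating sum of composite face maps, split the index set by the order of the two indices, and use Lemma~\ref{the:face-map} to pair each term of one half against a term of the other half with opposite sign (your explicit relabelling $(i,j)\mapsto(j,i-1)$ is just a more formal bookkeeping of the paper's pairing of the $(i,j)$-term with the $(j-1,i)$-term). The index-range and sign checks you flag are exactly the only delicate points, and you have them right.
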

\begin{proof}
    The cases of $n=0$ is trivial.
    For $n>0$, the definition of differentials gives
    \begin{equation}
    \begin{split}
        \d_{n+1}\d_{n} c_{n}\Bigl(g^{(1)}\Big\|\cdots\Big\|g^{(n+2)}\Bigr) &= \sum_{j=0}^{n+2}\,(-)^{j} \, \d_{n} c_{n}\Bigl(\partial_j g^{(1)}\Big\|\cdots\Big\|\partial_j g^{(n+1)}\Bigr)\,,\\
        &= \sum_{i=0}^{n+1}\sum_{j=0}^{n+2}\,(-)^{i+j}\,c_{n}\Bigl(\partial_i\partial_j g^{(1)}\Big\|\cdots\Big\|\partial_i\partial_j g^{(n)}\Bigr)
    \end{split}
    \end{equation}
    There are in total $(n\!+\!2)(n\!+\!3)$ summands.
    We divide them into two parts, $i<j$ and $i\geq j$.
    Each part contains exactly $(n\!+\!2)(n\!+\!3)/2$ summands.
    Using Lemma~\ref{the:face-map}, we can show that the $(i,j)$-term in the first part cancels the $(j\!-\!1,i)$-term in the second part, i.e.,
    \begin{equation}
    \begin{split}
        (-)^{i+j}\,c_{n}\Bigl(\partial_i\partial_j g^{(1)}\Big\|\cdots\Big\|\partial_i\partial_j g^{(n)}\Bigr)\ =\ -\ (-)^{(j-1)+i}\,c_{n}\Bigl(\partial_{j-1} \partial_i g^{(1)}\Big\|\cdots\Big\|\partial_{j-1}\partial_i g^{(n)}\Bigr)\,.
    \end{split}
    \end{equation}
    Since each term in the first part cancels a distinct term in the second part, we can conclude that $\d_{n+1}\d_n=0$ for any $n$.
    Namely, Eq.~\eqref{eq:cochain-complex} does prescribe a cochain complex.
\end{proof}


\subsubsection{Cohomology and a Hurewicz-type corollary}
\label{sec:cohomology} 

Having obtained a cochain complex, we can now formulate its cohomology.
\begin{definition}\label{def:cohomology}
    The cohomology of the cochain complex~\eqref{eq:cochain-complex} is called the \textit{cohomology of the extended group $G$ with coefficients in $M$}.
    More precisely, for a non-negative integer $n$, the $n$-th cohomology group of $G$ with coefficients in $M$ is defined as the quotient group
    \begin{equation}\label{eq:cohomology}
        H^n(G,M)\ \equiv\ \frac{\ker\d_n}{\im\d_{n-1}}\,,
    \end{equation}
    where $\im\d_{-1}\equiv\{0\}$.
    As usual, the elements of $\ker\d_n$ are called \textit{$M$-valued $n$-cocycles} and the elements of $\im\d_{n-1}$ are called \textit{$M$-valued $n$-coboundaries}.
\end{definition}
It follows immediately that $H^0(G,M)=M$.
When all the higher-form symmetries are set trivial, the extended group cohomology reduces to the original group cohomology of $G_1$; one may want to check the explicit expressions in Appendix~\ref{app:differential}.
In addition, there is another series of special cases that can be immediately evaluated.
\begin{corollary}\label{cor:immediate}
    If $G$ is an extended group such that $G_{\bullet}$ is trivial for all $\bullet<m$, then
    \begin{subequations}
    \begin{align}
        &H^{\bullet}(G,M) = 0\,,\qquad\bullet<m;\label{eq:cor-A}\\
        &H^m(G,M) = \Hom_{\Z}(G_m,M)\label{eq:cor-B}\,.
    \end{align}
    \end{subequations}
\end{corollary}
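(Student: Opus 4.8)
The plan is a direct computation of the cochain complex \eqref{eq:cochain-complex} in degrees $\le m$, exploiting that the hypothesis $G_1=\cdots=G_{m-1}=1$ trivializes almost all of the data in a cochain. First I would unwind the cochain groups: since $G_k^{[n:k]}$ is a one-point set whenever $k<m$, the domain of an $n$-cochain is a single point for $n<m$, so $C^n(G,M)\cong M$ canonically in that range, and for $n=m$ only the factor $k=m$ survives (with $[m\!:\!m]$ a singleton), so $C^m(G,M)\cong M^{G_m}$. Next I would evaluate $\d_n$ for $n\le m-1$ on a constant cochain $c_n\equiv a\in M$: in \eqref{eq:differential} each summand $c_n(\partial_j g^{(1)}\|\cdots\|\partial_j g^{(n)})$ is simply $a$, so $\d_n c_n\equiv\bigl(\sum_{j=0}^{n+1}(-)^j\bigr)a$, which equals $a$ for $n$ odd and $0$ for $n$ even; in particular $\d_0,\dots,\d_{m-2}$ alternate $0,\id,0,\dots$ and $\d_{m-1}\colon M\to M^{G_m}$ is the inclusion of the constant functions when $m$ is even and the zero map when $m$ is odd. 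A short case analysis on the parity of $\bullet$ then gives $H^0(G,M)=M$ and $H^\bullet(G,M)=0$ for $0<\bullet<m$, which is \eqref{eq:cor-A} (in degree $0$ the value is $M$, as already observed).

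For \eqref{eq:cor-B} I would compute $\d_m$ on $M^{G_m}$. The higher data $g^{(m+1)},\dots$ never appear in $\d_m c_m$, so only the top-degree argument $g^{(m)}\in G_m^{[m+1:m]}$ matters; reindexing it as a tuple $(x_1,\dots,x_{m+1})$ by the omitted vertex and evaluating the face maps \eqref{eq:face-map} on $(q_1,\dots,q_m)=(1,\dots,m)$ gives $\partial_0 g^{(m)}=x_1$, $\partial_j g^{(m)}=x_{j+1}x_j$ for $1\le j\le m$, and $\partial_{m+1}g^{(m)}=x_{m+1}$. Thus the cocycle condition reads
\[
    c_m(x_1)+\sum_{j=1}^{m}(-)^j\,c_m(x_{j+1}x_j)+(-)^{m+1}c_m(x_{m+1})=0\qquad\text{for all }x_1,\dots,x_{m+1}\in G_m.
\]
Specializing to $x_1=\cdots=x_{m-1}=1$ with $x_m,x_{m+1}$ free shows that $\phi:=c_m-c_m(1)$ is a homomorphism $G_m\to M$, and forces $c_m(1)=0$ when $m$ is odd; a telescoping computation gives the converse. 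Hence, as subgroups of $C^m(G,M)=M^{G_m}$, one has $\ker\d_m=\Hom_\Z(G_m,M)\oplus\im\d_{m-1}$ (the second summand being the constant functions when $m$ is even and $0$ when $m$ is odd, matching the description of $\d_{m-1}$ above), and dividing out $\im\d_{m-1}$ yields $H^m(G,M)=\Hom_\Z(G_m,M)$ uniformly in the parity of $m$.

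The computation is elementary throughout; the only points requiring care are the evaluation of the face maps $\partial_j$ on the top-degree data entering $\d_m$, together with the accompanying signs, and the $m$-even-versus-$m$-odd dichotomy, where the fact that $\im\d_{m-1}$ equals exactly the constant functions ($m$ even) or exactly $0$ ($m$ odd) precisely compensates for whether $c_m(1)$ is unconstrained or forced to vanish, so that the quotient $H^m$ comes out the same in both parities.
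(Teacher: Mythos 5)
Your proposal is correct and follows essentially the same route as the paper's own proof: cochains below degree $m$ are constants with the alternating-sum differential giving the parity pattern $0,\id,0,\dots$, and degree-$m$ cocycles are identified, after relabeling $G_m^{[m+1:m]}$ by the omitted index, as homomorphisms up to a constant, with the parity of $m$ deciding whether the constant is forced to vanish or is absorbed by $\im\d_{m-1}$. Your write-up is somewhat more explicit (face-map evaluation, the specialization $x_1=\cdots=x_{m-1}=1$, the telescoping converse, and the degree-$0$ caveat $H^0=M$), but these are elaborations of the same argument rather than a different one.
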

\begin{proof}
In this case, all the $\bullet$-cochains with $\bullet<m$ are constant cochains, i.e., 
\begin{equation}
    C^{\bullet}(G,M)\ \simeq\ M\,.
\end{equation}
Because $\d_{\bullet}$ is defined via an alternating sum with $(\bullet\!+\!2)$ summands, $\d_{\bullet}$ is trivial for even $\bullet<m$, is an isomorphism for odd $\bullet<m\!-\!1$, and is a monomorphism for odd $\bullet=m\!-\!1$.
Hence we always have $\ker\d_{\bullet}=\im\d_{\bullet-1}$ for all $\bullet<m$.

The probably nontrivial $m$-cochains only depend on $G_m$ and are functions on
\begin{equation}
    G_m^{[m:m]}\ \simeq\ G_m\,.
\end{equation}
Namely, they are univariate functions on $G_m$.
Their $m$-cocycle condition has the argument
\begin{equation}
    G_m^{[m+1:m]}\ \simeq\ G_m^{m+1}\,.
\end{equation}
Thus we may relabel an $m$-tuple by its complementary $1$-tuple with respect to $m\!+\!1$.
After this relabelling, the $m$-cocycle condition looks like
\begin{equation}\label{eq:Hurewicz-cocycle}
    c_m(g_1) - c_m(g_2g_1) + c_m(g_3g_2) - \cdots + (-)^{m} c_m(g_{m+1}g_{m}) + (-)^{m+1} c_m(g_{m+1})=0\,.
\end{equation}
When $m$ is odd, Eq.~\eqref{eq:Hurewicz-cocycle} holds if and only if $c_m$ is a group homomorphism.
Since now $\d_{m-1}$ is trivial, we obtain Eq.~\eqref{eq:cor-B}.
When $m$ is even, Eq.~\eqref{eq:Hurewicz-cocycle} holds if and only if $c_m$ is a group homomorphism up to a constant $m$-cochain.
Since now $\im\d_{m-1}$ precisely consists of constant $m$-cochains, we still obtain Eq.~\eqref{eq:cor-B}.
\end{proof}

We hope this corollary reminds the readers of the cohomology version of the Hurewicz theorem: 
For an $(m\!-\!1)$-connected topological space $X$, its cohomology groups vanish below degree $m$, and at degree $m$ we have $H^m(X,M)=\Hom_{\Z}\bigl(\pi_m(X),M\bigr)$.
Indeed, Corollary~\ref{cor:immediate} is a scrawled omen of forthcoming Theorem~\ref{the:isomorphism-theorem}.


\subsection{Low-dimensional illustration}
\label{sec:low-dim-example}

We now illustrate the lattice lagrangian of low-dimensional invertible field theories.
The extended group $G$ will always be assumed finite.


\subsubsection{\texorpdfstring{$H^1\bigl(G,U(1)\bigr)$}{H1(G,U(1))}}
\label{sec:lagrangian-1}

A 1-cochain depends only on $G_1$ and is given by Eq.~\eqref{eq:c1}:
\begin{equation}
    c_1\Bigl(\underbrace{g_1}_{G_1}\Bigr)\,.
\end{equation}
Accordingly, the valid content of $G$ in 1-dimensional spacetime is just the 0-form symmetry $G_1$.
On a 1-simplex, according to Ansatz~\eqref{eq:A<-g}, the $G$ gauge field is specified by
\begin{equation}
\label{pic:cochain-1}
\begin{split}
\begin{tikzpicture}
\node (r0) at ( 0, 0) {};
\node (r1) at ( 2, 0) {};
\draw[RoyalBlue2, ultra thick, ->] (r0) -- (r1) 
node[midway, anchor=north]{$g_1$};
\filldraw[gray] (r0) circle (2pt) node[anchor=east]{0};
\filldraw[gray] (r1) circle (2pt) node[anchor=west]{1};
\end{tikzpicture}    
\end{split}\,.
\end{equation}
A $U(1)$-valued 1-cochain then gives the lagrangian of a 1-dimensional invertible field theory on this 1-simplex.

According to Def.~\ref{def:diffrential}, the 1-cocycle condition is
\begin{equation}\label{eq:cocycle-1}
    c_1\Bigl({g_2}\Bigr)\,-c_1\Bigl({g_1g_2}\Bigr)\,+c_1\Bigl({g_1}\Bigr)\ =\ 0\,.
\end{equation}
It just says that $c_1$ is a group homomorphism and we simultaneously reach ordinary group cohomology of $G_1$ and Corollary~\ref{cor:immediate}.
We can easily observe the geometric meaning of the 1-cocycle condition~\eqref{eq:cocycle-1}:
The sum of a lagrangian over the boundary of a 2-simplex is trivial if we arrange the gauge fields as follows, i.e.,
\begin{equation}
\label{pic:cocycle-1}
\begin{aligned}
\begin{tikzpicture}
\node (r0) at ( 0, 0) {};
\node (r1) at ( 3, 0) {};
\node (r2) at ( 1.8, 2.6) {};
\node (c) at ( 2, 0.9) {};
\draw[RoyalBlue2, ultra thick, ->] (r0) -- (r1) 
node[midway, anchor=north]{$g_1$};
\draw[RoyalBlue2, ultra thick, ->] (r1) -- (r2)
node[midway, anchor=west]{$g_2$};
\draw[DarkGoldenrod1, ultra thick, ->] (r0) -- (r2)
node[midway, anchor=east]{$g_1g_2$};
\filldraw[gray] (r0) circle (2pt) node[anchor=east]{0};
\filldraw[gray] (r1) circle (2pt) node[anchor=west]{1};
\filldraw[gray] (r2) circle (2pt) node[anchor=south]{2};
\end{tikzpicture}    
\end{aligned}\,.
\end{equation}
Each of the three 1-facets of this 2-simplex corresponds to a distinct term in the 1-cocycle condition~\eqref{eq:cocycle-1}.
This indeed gives a locally flat gauge field on the entire 2-simplex.


\subsubsection{\texorpdfstring{$H^2\bigl(G,U(1)\bigr)$}{H2(G,U(1))}}
\label{sec:lagrangian-2}

Things start to get interesting from this dimension.
A 2-cochain is given by Eq.~\eqref{eq:c2}:
\begin{equation}
    c_2\Bigl(\underbrace{g_1\,\big|\,g_2}_{G_1}\,\Big\|\,\underbrace{g_{1,2}}_{G_2}\Bigr)\,.
\end{equation}
Accordingly, the valid content of $G$ in 2-dimensional spacetime only contains 0-form symmetry $G_1$ and 1-form symmetry $G_2$.
We now assign the $G$ gauge fields to a $2$-simplex according to Ansatz~\eqref{eq:A<-g}.
The result is
\begin{equation}
\label{pic:cochain-2}
\begin{split}
\begin{tikzpicture}
    \node (r0) at ( 0, 0) {};
    \node (r1) at ( 3, 0) {};
    \node (r2) at ( 1.8, 2.6) {};
    \node (c) at ( 2, 0.9) {};
    \fill[fill=RoyalBlue1!16] (r0.center)--(r1.center)--(r2.center); 
    \draw[very thick, ->, dashed, RoyalBlue2] (c) arc (0:220:10pt) node[below=0pt, right=0pt, RoyalBlue2]{$g_{1,2}$};
    \draw[RoyalBlue2, ultra thick, ->] (r0) -- (r1) 
    node[midway, anchor=north]{$g_1$};
    \draw[RoyalBlue2, ultra thick, ->] (r1) -- (r2)
    node[midway, anchor=west]{$g_2$};
    \draw[DarkGoldenrod1, ultra thick, ->] (r0) -- (r2)
    node[midway, anchor=east]{$g_1g_2$};
    \filldraw[gray] (r0) circle (2pt) node[anchor=east]{0};
    \filldraw[gray] (r1) circle (2pt) node[anchor=west]{1};
    \filldraw[gray] (r2) circle (2pt) node[anchor=south]{2};
\end{tikzpicture} 
\end{split}\,.
\end{equation}
The agreement between the 2-simplex gauge field~\eqref{pic:cochain-2} and the geometric 1-cocycle condition~\eqref{pic:cocycle-1} checks our derivation in Sec.~\ref{sec:lattice-lagrangian}.

A $U(1)$-valued 2-cochain then gives the lagrangian of a 2-dimensional invertible field theory on this 2-simplex.
According to Def.~\ref{def:diffrential}, the 2-cocycle condition is
\begin{equation}\label{eq:cocycle-2}
\begin{split}
    c_2\Bigl({g_2\,\big|\,g_3}\,\Big\|\,{g_{2,3}}\Bigr)\,
    -\,c_2\Bigl({g_1g_2\,\big|\,g_3}\,\Big\|\,{g_{1,3}g_{2,3}}\Bigr)&\,\\
    +\ c_2\Bigl({g_1\,\big|\,g_2g_3}\,\Big\|\,{g_{1,2}g_{1,3}}\Bigr)\,
    -\,c_2\Bigl({g_1\,\big|\,g_2}\,\Big\|\,{g_{1,2}}\Bigr)&\:\ =\ 0\,.
\end{split}
\end{equation}
When $G_2$ is trivial, this reduces to the 2-cocycle condition for $G_1$'s group cohomology.
When $G_1$ is trivial, we again reach Corollary~\ref{cor:immediate}.
The geometric meaning of the 2-cocycle condition~\eqref{eq:cocycle-2} is also clear: 
The sum of lagrangian over the boundary of a 3-simplex is trivial if we arrange the gauge fields in the following way, i.e.,
\begin{equation}
\label{pic:cocycle-2}
\begin{aligned}
\begin{aligned}
\begin{tikzpicture}
    \node (r0) at (0.2, 0) {};
    \node (r1) at (2.3, -1) {};
    \node (r2) at (1.8, 2.6) {};
    \node (r3) at (4, 1.1) {};
    \node (c1) at ( 1.7, 0.4) {};
    \node (c2) at ( 3.2, 0.8) {};
    \fill[fill=RoyalBlue1!16] (r0.center)--(r1.center)--(r2.center); 
    \fill[fill=RoyalBlue1!16] (r1.center)--(r2.center)--(r3.center); 
    \draw[very thick, ->, dashed, RoyalBlue2] (c1) arc (0:220:10pt) node[below=0pt, right=0pt, RoyalBlue2]{$g_{1,2}$};
    \draw[very thick, <-, dashed, RoyalBlue2] (c2) arc (0:220:10pt) node[below=0pt, right=0pt, RoyalBlue2]{$g_{2,3}$};
    \draw[RoyalBlue2, ultra thick, ->] (r0) -- (r1) 
    node[midway, anchor=north]{$g_1$};
    \draw[RoyalBlue2, ultra thick, ->] (r1) -- (r2)
    node[midway, above left=5pt and 0pt]{$g_2$};
    \draw[RoyalBlue2, ultra thick, ->] (r2) -- (r3) 
    node[midway, anchor=south]{$g_3$};
    \draw[DarkGoldenrod1, ultra thick, ->] (r0) -- (r2)
    node[midway, anchor=east]{$g_1g_2$};
    \draw[DarkGoldenrod1, ultra thick, ->] (r1) -- (r3)
    node[midway, anchor=west]{$g_2g_3$};
    \filldraw[gray] (r0) circle (2pt) node[anchor=east]{0};
    \filldraw[gray] (r1) circle (2pt) node[anchor=west]{1};
    \filldraw[gray] (r2) circle (2pt) node[anchor=south]{2};
    \filldraw[gray] (r3) circle (2pt) node[anchor=west]{3};
\end{tikzpicture} 
\end{aligned}\ &\ 
\begin{aligned}
\begin{tikzpicture}
    \node (r0) at (0.2, 0) {};
    \node (r1) at (2.3, -1) {};
    \node (r2) at (1.8, 2.6) {};
    \node (r3) at (4, 1.1) {};
    \node (c1) at ( 2.3, -0.1) {};
    \node (c2) at ( 2.2, 1.5) {};
    \fill[fill=Green1!10] (r0.center)--(r1.center)--(r3.center); 
    \fill[fill=Green1!10] (r0.center)--(r2.center)--(r3.center); 
    \draw[very thick, <-, dashed, Green3] (c1) arc (0:-220:10pt) node[below=0pt, right=0pt, Green3]{$g_{1,2}g_{1,3}$};
    \draw[very thick, <-, dashed, Green3] (c2) arc (0:220:10pt) node[below=0pt, right=0pt, Green3]{$g_{1,3}g_{2,3}$};
    \draw[RoyalBlue2, ultra thick, ->] (r0) -- (r1) 
    node[midway, anchor=north]{$g_1$};
    \draw[DarkGoldenrod1, ultra thick, ->] (r0) -- (r3)
    node[midway, anchor=south]{$g_1g_2g_3$};
    \draw[RoyalBlue2, ultra thick, ->] (r2) -- (r3) 
    node[midway, anchor=south]{$g_3$};
    \draw[DarkGoldenrod1, ultra thick, ->] (r0) -- (r2)
    node[midway, anchor=east]{$g_1g_2$};
    \draw[DarkGoldenrod1, ultra thick, ->] (r1) -- (r3)
    node[midway, anchor=west]{$g_2g_3$};
    \filldraw[gray] (r0) circle (2pt) node[anchor=east]{0};
    \filldraw[gray] (r1) circle (2pt) node[anchor=west]{1};
    \filldraw[gray] (r2) circle (2pt) node[anchor=south]{2};
    \filldraw[gray] (r3) circle (2pt) node[anchor=west]{3};
\end{tikzpicture} 
\end{aligned}\\
(\text{front layer})\qquad\qquad&\qquad\qquad(\text{back layer})
\end{aligned}\,.
\end{equation}
Each of the four 2-facets of this 3-simplex corresponds to a distinct term in the 2-cocycle condition~\eqref{eq:cocycle-2}.
This is indeed a locally flat gauge field on the entire 3-simplex.


\subsubsection{\texorpdfstring{$H^3\bigl(G,U(1)\bigr)$}{H3(G,U(1))}}
\label{sec:lagrangian-3}

Things start to get complicated from this dimension.
A 3-cochain is given by Eq.~\eqref{eq:c3}:
\begin{equation}
    c_3\Bigl(\underbrace{g_1\,\big|\,g_2\,\big|\,g_3}_{G_1}\,\Big\|\, \underbrace{g_{1,2}\,\big|\,g_{1,3}\,\big|\,g_{2,3}}_{G_2}\,\Big\|\,\underbrace{g_{1,2,3}}_{G_3}\Bigr)\,.
\end{equation}
Accordingly, the valid content of $G$ in 3-dimensional spacetime only contains 0-form symmetry $G_1$, 1-form symmetry $G_2$, and 2-form symmetry $G_3$.
We now assign the $G$ gauge fields to a $3$-simplex according to Ansatz~\eqref{eq:A<-g}.
The result is
\begin{equation}
\label{pic:cochain-3}
\begin{split}
\begin{split}
\begin{tikzpicture}
    \node (r0) at (0.2, 0) {};
    \node (r1) at (2.3, -1) {};
    \node (r2) at (1.8, 2.6) {};
    \node (r3) at (4, 1.1) {};
    \node (c1) at ( 1.7, 0.4) {};
    \node (c2) at ( 3.2, 0.8) {};
    \fill[fill=RoyalBlue1!16] (r0.center)--(r1.center)--(r2.center); 
    \fill[fill=RoyalBlue1!16] (r1.center)--(r2.center)--(r3.center); 
    \draw[very thick, ->, dashed, RoyalBlue2] (c1) arc (0:220:10pt) node[below=0pt, right=0pt, RoyalBlue2]{$g_{1,2}$};
    \draw[very thick, <-, dashed, RoyalBlue2] (c2) arc (0:220:10pt) node[below=0pt, right=0pt, RoyalBlue2]{$g_{2,3}$};
    \draw[RoyalBlue2, ultra thick, ->] (r0) -- (r1) 
    node[midway, anchor=north]{$g_1$};
    \draw[RoyalBlue2, ultra thick, ->] (r1) -- (r2)
    node[midway, above left=5pt and 0pt]{$g_2$};
    \draw[RoyalBlue2, ultra thick, ->] (r2) -- (r3) 
    node[midway, anchor=south]{$g_3$};
    \draw[DarkGoldenrod1, ultra thick, ->] (r0) -- (r2)
    node[midway, anchor=east]{$g_1g_2$};
    \draw[DarkGoldenrod1, ultra thick, ->] (r1) -- (r3)
    node[midway, anchor=west]{$g_2g_3$};
    \filldraw[gray] (r0) circle (2pt) node[anchor=east]{0};
    \filldraw[gray] (r1) circle (2pt) node[anchor=west]{1};
    \filldraw[gray] (r2) circle (2pt) node[anchor=south]{2};
    \filldraw[gray] (r3) circle (2pt) node[anchor=west]{3};
\end{tikzpicture} 
\end{split}\quad&
\begin{split}
\begin{tikzpicture}
    \node (r0) at (0.2, 0) {};
    \node (r1) at (2.3, -1) {};
    \node (r2) at (1.8, 2.6) {};
    \node (r3) at (4, 1.1) {};
    \node (c1) at ( 2.3, -0.1) {};
    \node (c2) at ( 2.2, 1.5) {};
    \fill[fill=Green1!10] (r0.center)--(r1.center)--(r3.center); 
    \fill[fill=Green1!10] (r0.center)--(r2.center)--(r3.center); 
    \draw[very thick, <-, dashed, Green3] (c1) arc (0:-220:10pt) node[below=0pt, right=0pt, Green3]{$g_{1,2}g_{1,3}$};
    \draw[very thick, <-, dashed, Green3] (c2) arc (0:220:10pt) node[below=0pt, right=0pt, Green3]{$g_{1,3}g_{2,3}$};
    \draw[RoyalBlue2, ultra thick, ->] (r0) -- (r1) 
    node[midway, anchor=north]{$g_1$};
    \draw[DarkGoldenrod1, ultra thick, ->] (r0) -- (r3)
    node[midway, anchor=south]{$g_1g_2g_3$};
    \draw[RoyalBlue2, ultra thick, ->] (r2) -- (r3) 
    node[midway, anchor=south]{$g_3$};
    \draw[DarkGoldenrod1, ultra thick, ->] (r0) -- (r2)
    node[midway, anchor=east]{$g_1g_2$};
    \draw[DarkGoldenrod1, ultra thick, ->] (r1) -- (r3)
    node[midway, anchor=west]{$g_2g_3$};
    \filldraw[gray] (r0) circle (2pt) node[anchor=east]{0};
    \filldraw[gray] (r1) circle (2pt) node[anchor=west]{1};
    \filldraw[gray] (r2) circle (2pt) node[anchor=south]{2};
    \filldraw[gray] (r3) circle (2pt) node[anchor=west]{3};
\end{tikzpicture} 
\end{split}\quad
\begin{split}
\begin{tikzpicture}
    \node (r0) at (0.2, 0) {};
    \node (r1) at (2.3, -1) {};
    \node (r2) at (1.8, 2.6) {};
    \node (r3) at (4, 1.1) {};
    \node (c1) at ( 1.6, 0.3) {};
    \node (c2) at ( 2.8, 0.5) {};
    \fill[fill=RoyalBlue2!16] (r0.center)--(r1.center)--(r3.center)--(r2.center); 
    \draw[very thick, RoyalBlue2, dashed,decoration={aspect=0.51, segment length=9pt, amplitude=13pt,coil},decorate,arrows = {<[bend]-}] (c2)--(c1) node[above right=15pt and 0pt, RoyalBlue2]{$g_{1,2,3}$};
    \filldraw[gray] (r0) circle (2pt) node[anchor=east]{0};
    \filldraw[gray] (r1) circle (2pt) node[anchor=west]{1};
    \filldraw[gray] (r2) circle (2pt) node[anchor=south]{2};
    \filldraw[gray] (r3) circle (2pt) node[anchor=west]{3};
\end{tikzpicture}
\end{split}\\
(\text{front layer})\qquad\qquad\ &\ \ \quad\qquad(\text{back layer})\qquad\qquad\qquad\qquad\quad\ \ (\text{bulk})\,.
\end{split}
\end{equation}
Again, the agreement between the 3-simplex gauge field~\eqref{pic:cochain-3} and the geometric 2-cocycle condition~\eqref{pic:cocycle-2} checks our derivation in Sec.~\ref{sec:lattice-lagrangian}.

A $U(1)$-valued 3-cochain then gives the lagrangian of a 3-dimensional invertible field theory on this 3-simplex.
According to Def.~\ref{def:diffrential}, the 3-cocycle condition is
\begin{equation}\label{eq:cocycle-3}
\begin{split}
    c_3\Bigl({g_2\,\big|\,g_3\,\big|\,g_4}\,\Big\|\, {g_{2,3}\,\big|\,g_{2,4}\,\big|\,g_{3,4}}\,\Big\|\,{g_{2,3,4}}\Bigr)&\\
    -\,c_3\Bigl({g_1g_2\,\big|\,g_3\,\big|\,g_4}\,\Big\|\, {g_{1,3}g_{2,3}\,\big|\,g_{1,4}g_{2,4}\,\big|\,g_{3,4}}\,\Big\|\,{g_{1,3,4}g_{2,3,4}}\Bigr)&\\
    +\,c_3\Bigl({g_1\,\big|\,g_2g_3\,\big|\,g_4}\,\Big\|\, {g_{1,2}g_{1,3}\,\big|\,g_{1,4}\,\big|\,g_{2,4}g_{3,4}}\,\Big\|\,{g_{1,2,4}g_{1,3,4}}\Bigr)&\\
    -\,c_3\Bigl({g_1\,\big|\,g_2\,\big|\,g_3g_4}\,\Big\|\, {g_{1,2}\,\big|\,g_{1,3}g_{1,4}\,\big|\,g_{2,3}g_{2,4}}\,\Big\|\,{g_{1,2,3}g_{1,2,4}}\Bigr)&\\
    +\,c_3\Bigl({g_1\,\big|\,g_2\,\big|\,g_3}\,\Big\|\, {g_{1,2}\,\big|\,g_{1,3}\,\big|\,g_{2,3}}\,\Big\|\,{g_{1,2,3}}\Bigr)&\:\ =\ 0\,.
\end{split}
\end{equation}
When $G_2,G_3$ are trivial, this reduces to the cocycle condition for $G_1$'s group cohomology.
When $G_1,G_2$ are trivial, we again recover Corollary~\ref{cor:immediate}.
The geometric meaning of the 3-cocycle condition~\eqref{eq:cocycle-3} is that the sum of lagrangian over the boundary of a 4-simplex is trivial if the gauge field is arranged according to Ansatz~\eqref{eq:A<-g}, which ensures locally-flatness on the 4-simplex.


\subsection{Remark}
\label{sec:Dijkgraaf-Witten}

A few remarks follow.

\subsubsection*{Choice of ansatz}

Ansatz~\eqref{eq:A<-g} is not the unique choice.
In principle, other ans\"{a}tze are also allowed, such as the one we used to count the degrees of freedom in Sec.~\ref{sec:A<-g}.
One may even abandon any ansatz and just work with the constraint~\eqref{eq:local-flat}, as what Eilenberg and MacLane did in early years~\cite{Eilenberg:1945kpi, Eilenberg:1950kpi, Eilenberg:1953kpi}.
Then the cochains and the differentials will take a very different form, but the resulting cohomology would still be isomorphic.

Our Ansatz~\eqref{eq:A<-g}, nevertheless, has quite a few advantages over other ans\"{a}tze or no ansatz.
Our cochains and differentials take pretty tractable forms and generalize the familiar group cohomology in a natural and intuitive way.
In particular, the face maps~\eqref{eq:face-map} involve multiplications of at most two group elements only.
As we will see in Sec.~\ref{sec:equivalence}, our ansatz also has a direct connection to a particular construction of Eilenberg-MacLane spaces and their products, the Dold-Kan correspondence, which is standard in simplicial homotopy theory.

\subsubsection*{Generalized Dijkgraaf-Witten-Yetter model}

We can obtain a topological field theory by dynamically gauging the global symmetry of an invertible field theory with finite symmetry.
The partition function $\e^{\i\calS}$ of the invertible field theory becomes the weight in a path integral over all gauge fields.
Physically, this means obtaining a topological order by dynamically gauging the global symmetry of a symmetry protected topological phase.
We call these topological field theories the generalized Dijkgraaf-Witten-Yetter models.
They are examples of fully-extended topological field theories (see e.g.~\cite{Freed:2009qp}).

From the angle of dynamical gauging, Dijkgraaf and Witten~\cite{Dijkgraaf:1989pz} gauged 0-form symmetry and Yetter~\cite{yetter1993tqfts} gauged 2-group symmetry.
The lattice formulation is also useful to study the generalized Dijkgraaf-Witten-Yetter models. 
It benefits the evaluations of partition functions, Hilbert spaces, and higher-categorical data.
It also benefits the evaluation of the spectrum of topological operators and their anomalies.
People have considered and described the lattice formulation for 0-form symmetries~\cite{Dijkgraaf:1989pz}, 1-form symmetries~\cite{Delcamp:2019fdp}, and 2-group symmetries~\cite{Kapustin:2013uxa}.
Our findings can be used to construct the lattice formulation for the generalized Dijkgraaf-Witten-Yetter models with extended-group gauge symmetries.


\newpage
\section{Equivalence of algebraic and topological approaches}
\label{sec:equivalence}

For invertible field theories, the algebraic approach characterizes and classifies lagrangians $\calL$ by extended group cohomology, $H^{\bullet}\bigl(G,U(1)\bigr)$ (see Def.~\ref{def:cohomology}).
The topological approach characterizes and classifies actions $\calS$ by cohomology of the classifying space, $H^{\bullet}\bigl(\calB G,U(1)\bigr)$ (see Def.~\ref{def:BG} for $\calB G$).
There is a rather straightforward connection~\eqref{eq:S<-L} between them, namely $\calS=\sum\calL$.
We thus expect an equivalence between the algebraic approach and the topological approach.

In this section, we shall present a rigorous proof of $H^{\bullet}(G,M)=H^{\bullet}(\calB G,M)$.
Following Ref.~\cite{Dijkgraaf:1989pz}, we can readily construct an algebraic cochain on $G$ from a singular cochain on $\calB G$.
Nevertheless, it is highly nontrivial to show that such a construction eventually induces an isomorphism between the two different cohomologies.
An essentially equivalent result was first stated by Eilenberg and MacLane~\cite{Eilenberg:1945kpi}; they provided a proof for the case of group cohomology only, though.
The proof we shall exhibit is based on a now standard theoretical toolkit, simplicial homotopy theory.

If we assumed the readers' familiarity with simplicial homotopy theory, we could conclude this section almost immediately.
However, to convince more readers, we shall merely assume the familiarity with the basics of general and algebraic topology, and also make our exposition as self-contained as possible.
Thus this section can also be regarded as an introduction to simplicial homotopy theory.
Our main references are Refs.~\cite{Hatcher:478079, aguilar2002algebraic} for algebraic topology and Refs.~\cite{may1992simplicial, weibel1994introduction, goerss2009simplicial, Sergeraert:2008int} for simplicial homotopy.


\subsection{Isomorphism theorem}
\label{sec:isomorphism-theorem}

The classifying space of an individual discrete $(m\!-\!1)$-form symmetry $H$ is the Eilenberg-MacLane space $K(H,m)$, also usually denoted by the delooping notation $B^mH$ (especially in physics literature), which is uniquely characterized by its homotopy groups~\cite{Hatcher:478079, aguilar2002algebraic}.
\begin{definition}\label{def:Eilenberg-MacLane}
    Consider a group $H$ and a positive integer $m$, and assume $H$ is Abelian when $m>1$.
    A space $W$ is called an \textit{Eilenberg-MacLane space $K(H,m)$} if (i) $W$ has the homotopy type of a CW complex; (ii) $W$ is path-connected; (iii) for any positive integer $\bullet$,
    \begin{equation}
        \pi_{\bullet}(W,*)=\begin{cases}
            H\,, &\bullet=m \\
            \{1\}\,, &\bullet\neq m
        \end{cases}\,,
    \end{equation}
    for a basepoint $*\in W$.
\end{definition}
CW complexes~\cite{Whitehead:1949cwc}, invented by Whitehead, are the well-behaved class of spaces under the consideration of algebraic topology.
We direct the readers to~\cite[Proposition~4.30]{Hatcher:478079} for a proof of the uniqueness of $K(H,m)$ up to homotopy equivalence.
The classifying space of an extended group is the product of all the individual classifying spaces.
\begin{definition}\label{def:BG}
    The \textit{classifying space} $\calB G$ of an extended group $G$ is given by
    \begin{equation}
        \calB G\equiv\prod_{k=1}^{\infty}K(G_k,k)
    \end{equation}
    up to homotopy equivalence.
\end{definition}
This is actually a finite product in disguise since we put a dimension cutoff in Def.~\ref{def:extended-group}; this is exactly the reason why we did that.
Our goal in this section is to prove the following isomorphism theorem.
\begin{theorem}\label{the:isomorphism-theorem}
    For an extended group $G$, its cohomology (Def.~\ref{def:cohomology}) is isomorphic to the cohomology of its classifying space.
    Namely, for any non-negative integer $n$ and any Abelian group $M$, there is a group isomorphism
    \begin{equation}
        H^n(G,M) = H^n(\calB G,M)\,.
    \end{equation}
\end{theorem}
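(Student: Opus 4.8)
The plan is to realize the algebraic cochain complex of Definition~\ref{def:cochain} as the normalized cochain complex of an explicit simplicial set whose geometric realization is $\calB G$, and then invoke the standard fact that the cohomology of a space agrees with the simplicial cohomology of any simplicial set realizing it. First I would recall the Eilenberg--MacLane construction $\overline{W}$ (or the Dold--Kan correspondence): for an abelian group $A$ and $k\geq 1$, the simplicial abelian group $K(A,k)$ has $n$-simplices given by the normalized $k$-cocycles, which—after choosing the ``last-vertex'' or degeneracy-based coordinates—are exactly indexed by the nondegenerate $k$-faces, i.e.\ by $A^{[n:k]}$ in our notation. The key observation is that Ansatz~\eqref{eq:A<-g} is precisely the Dold--Kan/Eilenberg--Zilber normalization: the map $g\mapsto A$ records a flat cochain in terms of its values on the ``staircase'' simplices $\{\ell_{i-1}{+}1,\dots,\ell_i\}$, and the face maps $\partial_j$ of Definition~\ref{def:face-map} are then forced to be the simplicial face operators of $K(G_k,k)$. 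Lemma~\ref{the:face-map} is exactly the simplicial identity $d_id_j=d_{j-1}d_i$ for $i<j$, confirming this.

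The concrete steps I would carry out are: (1) Define the simplicial set $\mathbf{G}_\bullet$ with $\mathbf{G}_n=\prod_{k=1}^n G_k^{[n:k]}$ and face/degeneracy maps built from the $\partial_j$'s (degeneracies insert trivial group elements on the new faces). Verify all simplicial identities—the face--face ones from Lemma~\ref{the:face-map}, the face--degeneracy and degeneracy--degeneracy ones by direct check. (2) Show $\mathbf{G}_\bullet \cong \prod_{k=1}^{\infty} \underline{K}(G_k,k)$ as simplicial sets, where $\underline{K}(G_k,k)$ is the standard minimal simplicial model of the Eilenberg--MacLane space; since products of simplicial sets realize to products of spaces (for the relevant cofibrant/CW situation), $|\mathbf{G}_\bullet|\simeq \prod_k K(G_k,k)=\calB G$. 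Here the dimension cutoff in Definition~\ref{def:extended-group} makes the product finite, so no convergence issues arise. (3) Identify $C^n(G,M)=\Hom_{\mathrm{Set}}(\mathbf{G}_n,M)$ with the normalized simplicial cochains $N^n(\mathbf{G}_\bullet,M)$—this uses that our cochains are defined on $[n:k]$-indexed data, which is exactly the \emph{nondegenerate} part, so the ``normalized'' qualifier is automatic—and check that $\d_n$ of Definition~\ref{def:diffrential} is the alternating-sum simplicial coboundary. (4) Conclude $H^n(G,M)=H^n(N^\bullet(\mathbf{G}_\bullet,M))=H^n(|\mathbf{G}_\bullet|,M)=H^n(\calB G,M)$, using the theorem that simplicial cohomology of a simplicial set equals singular cohomology of its realization, and that normalized and unnormalized cochain complexes are chain-homotopy equivalent.

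The main obstacle, I expect, is step (2): cleanly matching $\mathbf{G}_\bullet$ with the textbook Eilenberg--MacLane simplicial sets and justifying that the product of minimal simplicial models realizes to the product of the spaces. The subtlety is that $\underline{K}(A,k)$ as usually presented (e.g.\ $n$-simplices $=Z^k(\Delta^n;A)$, normalized cochains on the standard $n$-simplex) must be shown isomorphic to our ``staircase-coordinate'' version via the Dold--Kan normalization isomorphism $N_k(A[k])\cong A$, and one must track the face maps through this identification to recover exactly \eqref{eq:face-map}—this is a bookkeeping-heavy but ultimately mechanical verification, and is presumably what the author means by ``an explicit construction of Eilenberg--MacLane spaces and their products.'' A secondary point requiring care is that realization commutes with the relevant finite products (true because simplicial sets form a cartesian-closed category and $|\cdot|$ preserves finite products after passing to compactly generated spaces, or because all objects here are countable CW after realization); and that Corollary~\ref{cor:immediate} correctly matches the Hurewicz computation $\pi_m$ for each factor, which is a useful sanity check along the way. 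Everything else—steps (1), (3), (4)—is routine once the dictionary is set up.
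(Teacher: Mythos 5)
Your outline is correct and follows essentially the same route as the paper: assemble the data $\prod_k G_k^{[n:k]}$ with the face maps of Def.~\ref{def:face-map} into a simplicial set, identify its geometric realization with $\calB G$, and use the comparison between simplicial and singular cohomology together with the compatibility of realization with products (the paper's Lemma~\ref{the:simplicial-cochain}, Theorem~\ref{the:cohomology-unique}, Theorem~\ref{the:product}, Theorem~\ref{the:cohomology-evaluation}). The one genuine divergence is how the factors are recognized as Eilenberg--MacLane spaces. You propose to match the staircase-coordinate model with the textbook simplicial Eilenberg--MacLane models via the Dold--Kan normalization and then quote their known homotopy type; the paper instead stays self-contained, verifying the Kan condition for $\calK(H,m)$ (Theorem~\ref{the:fibrant}) and computing its simplicial homotopy groups directly (Theorems~\ref{the:group-structure}, \ref{the:homotopy-group}, \ref{the:pi-of-K(H,m)}, \ref{the:space-identify}). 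Your route outsources the homotopy-group computation to the literature at the price of exactly the coordinate-matching bookkeeping you flag as the main obstacle, and it needs a caveat at $k=1$: for non-Abelian $G_1$ the factor is not a simplicial Abelian group, so Dold--Kan does not apply and you must instead recognize $\calK(G_1,1)$ as the nerve/bar construction $\overline{W}G_1$ (the paper makes this point in Sec.~\ref{sec:Dold-Kan}).

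Two corrections. First, step (3) as stated is wrong: the algebraic cochains of Def.~\ref{def:cochain} are arbitrary functions on \emph{all} $n$-simplices $\prod_k G_k^{[n:k]}$, including the degenerate ones, so they form the full (unnormalized) simplicial cochain complex, not the normalized one; the claim that ``the normalized qualifier is automatic'' conflates the index set $[n\!:\!k]$ of the components of a single simplex with the set of non-degenerate simplices. This does not sink the argument, since normalized and unnormalized cochains are quasi-isomorphic, as you note, but the cleaner statement --- and the one the paper uses in Theorem~\ref{the:cohomology-evaluation} --- is that the algebraic complex coincides on the nose with the unnormalized complex of $\prod_k\calK(G_k,k)$, with the differential of Def.~\ref{def:diffrential} being exactly the alternating sum of pullbacks along the face maps. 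Second, for $|X\times Y|\cong|X|\times|Y|$ you should record the countability (or compactly-generated) hypothesis explicitly; it holds here because of condition (iii) in Def.~\ref{def:extended-group}, which is precisely how the paper states Theorem~\ref{the:product}.
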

\begin{proof}
    Combine Theorem~\ref{the:K(H,m)} (plus Def.~\ref{def:K(H,m)}), Theorem~\ref{the:cohomology-evaluation}, and Theorem~\ref{the:space-identify}.
\end{proof}

This whole section is devoted to presenting this proof, which includes three steps.
\paragraph{(I)}
We construct a series of topological spaces $|\calK(H,m)|$ by gluing numerous simplices in appropriate ways.
This corresponds to Sec.~\ref{sec:step-1} and Theorem~\ref{the:K(H,m)} (plus Def.~\ref{def:K(H,m)}).
\paragraph{(II)}
We evaluate the cohomology of $\prod_{k}|\calK(G_k,k)|$ to show its isomorphism to extended group cohomology.
This corresponds to Sec.~\ref{sec:step-2} and Theorem~\ref{the:cohomology-evaluation}.
\paragraph{(III)}
We show that the spaces we have constructed are Eilenberg-MacLane spaces, i.e., $|\calK(H,m)|$ is a $K(H,m)$.
This corresponds to Sec.~\ref{sec:step-3} and Theorem~\ref{the:space-identify}.

\vspace{1em}
Our construction for $|\calK(H,m)|$ will not be new in essence because it actually matches the simplicial Abelian group yielded by the Dold-Kan correspondence (assuming Abelian $H$ even if $m=1$).
The Dold-Kan theorem is a result in simplicial homotopy theory and we shall say a bit more about it in Sec.~\ref{sec:Dold-Kan}.


\subsection{Step I: Simplicial construction}
\label{sec:step-1}

Despite the early appearance, a simplicial complex or a $\Delta$-complex behaves terribly for many purposes of algebraic topology.
Given this, Eilenberg and Zilberg invented simplicial sets aimed at providing an ideal purely combinatory approach to homotopy~\cite{Eilenberg:1950sim}. 


\subsubsection{Simplicial set}

We want to construct a topological space by gluing numerous simplices.
Let us first introduce the building blocks and the elementary gluing manipulations.
\begin{definition}\label{def:standard-simplex}
    For a non-negative integer $n$, the \textit{standard $n$-simplex} $\Delta^n$ is defined by
    \begin{equation}
        \Delta^n \equiv \left\{ (t_0,t_1,\cdots,t_{n})\,\Bigg|\,\sum_{i=0}^{n}t_i=1\text{ and }t_i\geq 0\text{ for all }i\right\}\subseteq\R^{n+1}\,.
    \end{equation}
    For any integer $i$ with $0\leq i\leq n$, the \textit{inclusion maps} $\delta^i:\Delta^{n-1}\mapsto\Delta^n$ and the \textit{collapse maps} $\sigma^i:\Delta^{n+1}\mapsto\Delta^n$ are respectively defined by
    \begin{subequations}
    \begin{align}
        \delta^i(t_0,t_1,\cdots,t_{n-1})\,&\equiv\, (t_0,\cdots,t_{i-1},0,t_{i},\cdots,t_{n-1})\,,\\
        \sigma^i(t_0,t_1,\cdots,t_{n+1})\,&\equiv\,(t_0,\cdots,t_{i-1},t_{i}\!+\!t_{i+1},t_{i+2}\cdots,t_{n+1})\,.\label{eq:sigma^i}
    \end{align}
    \end{subequations}
\end{definition}
The inclusion map $\delta^i$ attaches $\Delta^{n-1}$ onto $\Delta^n$ as the facet opposite to the $i$-vertex.
The collapse map $\sigma^i$ shrinks $\Delta^{n+1}$ into $\Delta^n$ along the $(i,i\!+\!1)$-edge.
They are the elementary gluing manipulations we shall use.
They satisfy the following property.
\begin{lemma}\label{the:consimplicial}
    The inclusion and the collapse maps satisfy the {\normalfont cosimplicial identities},
    \begin{subequations}
    \label{eq:cosimplicial-identity}
    \begin{align}
    &\delta^j\delta^i = \delta^i\delta^{j-1}\,,\qquad i< j\,,\\
    &\sigma^j\sigma^i = \sigma^i\sigma^{j+1}\,,\qquad i\leq j\,,\\
    \begin{split}
        \sigma^j\delta^i=\begin{cases}
            \delta^i\sigma^{j-1}\,, &i<j\\
            \id\,, & i=j\text{ or }j+1\\
            \delta^{i-1}\sigma^j\,, &i>j+1
        \end{cases}\,.
    \end{split}
    \end{align}
    \end{subequations}
\end{lemma}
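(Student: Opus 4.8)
The plan is to verify the three families of cosimplicial identities by direct computation from Definition~\ref{def:standard-simplex}, as the statement already advertises; the argument is entirely mechanical once the bookkeeping is set up correctly. The key preliminary observation is that each $\delta^i$ and each $\sigma^i$ is an \emph{affine} map of standard simplices, hence is completely determined by its action on the vertices $e_0,\dots,e_m$ (with $e_k$ the point having $t_k=1$). A one-line check gives $\delta^i(e_k)=e_k$ for $k<i$ and $\delta^i(e_k)=e_{k+1}$ for $k\geq i$, while $\sigma^i(e_k)=e_k$ for $k\leq i$, $\sigma^i(e_{i+1})=e_i$, and $\sigma^i(e_k)=e_{k-1}$ for $k\geq i+2$. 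Since affine maps compose by composing their vertex assignments, each claimed identity reduces to the corresponding identity between order-preserving maps of finite ordinals, which one proves by tracking an arbitrary index $k$ through both sides.

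First I would treat $\delta^j\delta^i=\delta^i\delta^{j-1}$ for $i<j$. Reading off the vertex maps, applying $\delta^i$ then $\delta^j$ sends $e_k$ to the vertex obtained by inserting a ``gap'' at $i$ and then at $j$; because $i<j$ the two gaps land exactly at $i$ and $j$, and the very same two gaps are produced by inserting first at $j-1$ and then at $i$, the earlier gap (originally at $j-1$) being shifted to $j$ by the later insertion at $i\leq j-1$. The unit shift $j\mapsto j-1$ is precisely the relabelling forced by the earlier insertion, and no other subtlety arises. The identity $\sigma^j\sigma^i=\sigma^i\sigma^{j+1}$ for $i\leq j$ is established in the same way: both composites perform the same pair of coordinate mergers on the source, and the shift $j\mapsto j+1$ records the relabelling caused by the first merger.

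For the mixed relation $\sigma^j\delta^i$ I would split into the cases $i<j$, $i\in\{j,j+1\}$, and $i>j+1$ and check exhaustiveness. When $i<j$ the inserted gap lies strictly to the left of the merged pair, so the operations commute up to the shift $j\mapsto j-1$, giving $\delta^i\sigma^{j-1}$; when $i>j+1$ it lies strictly to the right, giving $\delta^{i-1}\sigma^j$; when $i=j$ or $i=j+1$, the vertex merged away by $\sigma^j$ is exactly the one freshly created by $\delta^i$, so the two operations annihilate and the composite is $\id$. Verifying that these three cases exhaust the relevant range of $i$, and that \emph{both} boundary values $i=j$ and $i=j+1$ genuinely collapse to the identity, is the one place where one should slow down, but it remains a finite, routine check.

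Accordingly, the main (and only) obstacle is bookkeeping: keeping straight which slot a gap or a merge occupies after a previous gap or merge, and being careful with the three-case split for $\sigma^j\delta^i$. No topological input is needed and everything is local to a single pair of maps, so I anticipate no genuine difficulty. One could alternatively observe that $\delta^i$ and $\sigma^i$ are by construction the coface and codegeneracy maps of the cosimplicial space $\Delta^{\bullet}$ and cite the standard fact that these satisfy the relations dual to the simplicial identities, but a short self-contained verification fits better with the expository aims of this section.
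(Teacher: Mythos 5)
Your proposal is correct and matches the paper's proof in substance: the paper simply verifies the cosimplicial identities by direct evaluation from Definition~\ref{def:standard-simplex}, which is exactly what you do. Your extra packaging --- noting that $\delta^i$ and $\sigma^i$ are affine, recording their action on the vertices $e_k$, and reducing each identity to the corresponding identity of order-preserving vertex assignments --- is a harmless (and slightly cleaner) way of organizing the same routine coordinate check, and your vertex formulas and three-case analysis for $\sigma^j\delta^i$ are all accurate.
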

\begin{proof}
    We verify them just by direct evaluations.
\end{proof}

Given building blocks and elementary manipulations, we also need a blueprint to tell us how to organize them.
Different blueprints will lead us to different topological spaces.
Such a blueprint is called a simplicial set~\cite{may1992simplicial, weibel1994introduction, goerss2009simplicial, Sergeraert:2008int}.
\begin{definition}\label{def:simplicial-set}
    A \textit{simplicial set} $X$ contains a sequence of sets $X_0,X_1,X_2,\cdots$ together with the \textit{face maps} $\delta_i:X_n\to X_{n-1}$ and the \textit{degeneracy maps} $\sigma_i:X_n\to X_{n+1}$ for all $i=0,1,\cdots,n$, such that the following \textit{simplicial identities} are satisfied:
    \begin{subequations}
    \label{eq:simplicial-identity}
    \begin{align}
    &\delta_i\delta_j = \delta_{j-1}\delta_i\,,\qquad i< j\,,\label{eq:simplicial-identity-1}\\
    &\sigma_i\sigma_j = \sigma_{j+1}\sigma_i\,,\qquad i\leq j\,,\label{eq:simplicial-identity-2}\\
    \begin{split}
        \delta_i\sigma_j=\begin{cases}
            \sigma_{j-1}\delta_i\,, &i<j\\
            \id\,, & i=j\text{ or }j+1\\
            \sigma_j\delta_{i-1}\,, &i>j+1
        \end{cases}\label{eq:simplicial-identity-3}\,.
    \end{split}
    \end{align}
    \end{subequations}
    These simplicial identities are the dual of the cosimplicial identities~\eqref{eq:cosimplicial-identity}.
\end{definition}
In this blueprint, $X_n$ specifies how many $n$-simplices we use. 
The maps indicate how to organize the elementary manipulations to glue building blocks. 
The simplicial identities are designed to accommodate the cosimplicial identities of elementary gluing manipulations to make the resulting space have perfectly ideal behaviors.

\subsubsection{Geometric realization}

Let us accurately describe how to construct a topological space from a simplicial set~\cite{may1992simplicial, weibel1994introduction, goerss2009simplicial, Sergeraert:2008int}.
\begin{definition}\label{def:realization}
    For a simplicial set $X$, its \textit{geometric realization} $|X|$ is a space defined by
    \begin{equation}
        |X|\,\equiv\,\coprod_{n=0}^{\infty} X_n\times \Delta^n\,\Big/\,\sim
    \end{equation}
    where $\sim$ is the equivalence relation generated by 
    \begin{subequations}\label{eq:equi-relation}
    \begin{align}
        (x,\delta^i p)\sim(\delta_i x,p)\quad\text{for}\quad x\in X_{n},\,p\in\Delta^{n-1}\,,\\
        (\sigma_i x,p)\sim(x,\sigma^i p)\quad\text{for}\quad p\in\Delta^{n},\,x\in X_{n-1}\,.
    \end{align}
    \end{subequations}
    Here $X_n$ is endowed with the discrete topology and $\coprod$ denotes the disjoint union.
\end{definition}
Although the disjoint union seems big, the equivalence classes are usually also big and the resulting space has a pretty reasonable size.
To gain more insights into $|X|$, we consider an alternative description of the above equivalence relation.
\begin{lemma}\label{the:transitive}
    Let $\vartriangleright$ be the transitive relation on $\coprod_{n=0}^{\infty} X_n\times \Delta^n$ generated by
    \begin{equation}\label{eq:trans-relation}
        (y,q)\vartriangleright(\delta_i y,\sigma^i q)\quad\text{ if}\quad y\in\im\sigma_i\ \text{or}\ q\in\im\delta^i\,.
    \end{equation}
    Then $\vartriangleright$ generates the equivalence relation $\sim$ in Def.~\ref{def:realization}.
    Furthermore, $a\sim b$ if and only if $a=b$ or $a\vartriangleright b$ or $a\vartriangleleft b$ or $a\vartriangleright c\vartriangleleft b$ for some $c$.
\end{lemma}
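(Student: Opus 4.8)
The statement splits into two assertions, which I would establish in turn: that $\vartriangleright$ generates the same equivalence relation on $\coprod_{n}X_n\times\Delta^n$ as the relation $\sim$ of Definition~\ref{def:realization}, and then the sharper normal-form claim. The second assertion is exactly the Church--Rosser property for $\vartriangleright$, so the plan is to present $\vartriangleright$ as the transitive closure of a terminating, locally confluent set of reduction steps and to invoke Newman's lemma.

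For the first assertion I would check both inclusions by hand from the (co)simplicial identities. Each generating step of $\vartriangleright$ is already a single $\sim$-step: if $q=\delta^i p$ then $\sigma^i q=p$ because $\sigma^i\delta^i=\id$ (Lemma~\ref{the:consimplicial}), so $(y,q)\sim(\delta_i y,p)=(\delta_i y,\sigma^i q)$; and if $y=\sigma_i x$ then $\delta_i y=x$ because $\delta_i\sigma_i=\id$ (the simplicial identities of Definition~\ref{def:simplicial-set}), so $(y,q)\sim(x,\sigma^i q)=(\delta_i y,\sigma^i q)$. Conversely each generator of $\sim$ is a $\vartriangleright$-step, since $(x,\delta^i p)\vartriangleright(\delta_i x,\sigma^i\delta^i p)=(\delta_i x,p)$ and $(\sigma_i x,p)\vartriangleright(\delta_i\sigma_i x,\sigma^i p)=(x,\sigma^i p)$, the hypotheses $\delta^i p\in\im\delta^i$ and $\sigma_i x\in\im\sigma_i$ being precisely what makes these steps admissible. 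Hence the two equivalence relations coincide.

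For the normal-form assertion, regard each admissible datum $\bigl(i,(y,q)\bigr)$ as giving one reduction step $(y,q)\mapsto(\delta_i y,\sigma^i q)$, so that $\vartriangleright$ is the transitive closure of these steps and the irreducible pairs are exactly those $(x,p)$ with $x$ non-degenerate and $p$ interior to $\Delta^n$. Termination is immediate because every step lowers the simplicial dimension by one, so it suffices to verify local confluence: given two steps out of $(x,p)$, at indices $i<j$ say, exhibit a common $\vartriangleright$-descendant. This is a finite case analysis driven entirely by Lemma~\ref{the:consimplicial} and the simplicial identities of Definition~\ref{def:simplicial-set}. For instance, when the $j$-step is triggered by $x=\sigma_j x'$ one uses $\delta_i\sigma_j=\sigma_{j-1}\delta_i$ to see that $\delta_i x$ is again degenerate and reduce once more; when it is triggered by $q\in\im\delta^j$ one uses $\sigma^i\delta^j=\delta^{j-1}\sigma^i$ (valid for $j>i+1$, with the adjacent case $j=i+1$ treated separately via $\sigma^i\delta^{i+1}=\id$ and $\sigma^i\sigma^i=\sigma^i\sigma^{i+1}$) to see that $\sigma^i q$ lies on a face; the shifted indices then match, and $\delta_{j-1}\delta_i=\delta_i\delta_j$ closes the diamond. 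By Newman's lemma the relation is confluent, hence Church--Rosser with respect to the equivalence relation it generates, which by the first assertion is $\sim$; spelling out the four ways two elements can reduce to a common descendant $c$ (namely $a=b$, $a\vartriangleright b$, $a\vartriangleleft b$, or $a\vartriangleright c\vartriangleleft b$) gives the stated characterization, and the reverse implications are immediate since $\vartriangleright\subseteq\sim$.

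The real work is the local-confluence check, and within it the only place demanding care is the mixed situation where one of the two steps is forced by the simplex being degenerate and the other by the base point lying on a face, together with the adjacent-index subcases; everything else collapses mechanically onto a single (co)simplicial identity. An alternative route sidesteps the rewriting bookkeeping: define the normal form $(x,p)\mapsto(\bar x,p^{\circ})$ using the Eilenberg--Zilber decomposition of $x$ as a degeneracy of a unique non-degenerate simplex and of $p$ as an interior point of a unique face, show it is constant on $\sim$-classes and reachable by $\vartriangleright$, and conclude; the price is that one must then prove uniqueness of non-degenerate representatives, i.e.\ injectivity of $\coprod_n X_n^{\mathrm{nd}}\times\mathrm{int}\,\Delta^n\to|X|$, which is the genuinely substantive input in that approach. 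I would keep the self-contained rewriting argument as the main line and relegate the Eilenberg--Zilber viewpoint to a remark.
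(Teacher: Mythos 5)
Your proposal is correct and takes essentially the same route as the paper: the first assertion is handled exactly as there via $\sigma^i\delta^i=\id$ and $\delta_i\sigma_i=\id$, and your local-confluence case analysis (including the careful treatment of adjacent indices and the mixed degeneracy/face triggers) is precisely the content of the paper's Eq.~\eqref{eq:down-move}. The only difference is packaging: the paper proves the slightly stronger statement that two one-step reducts are either equal or joinable in one further step and then eliminates peaks in a zigzag chain by hand, whereas you invoke termination (dimension drops) plus Newman's lemma—both globalizations are valid and rest on the same (co)simplicial computations.
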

\begin{proof}
    Using $\sigma^i\delta^i=\id$ and $\delta_i\sigma_i=\id$ from the third (co)simplicial identity, we see that the relations~\eqref{eq:equi-relation} and~\eqref{eq:trans-relation} are equivalent. 
    Hence $\vartriangleright$ generates $\sim$.
    We now come to the next part.
    Extensively using all (co)simplicial identities, we can show that $(\delta_iy,\sigma^iq)\vartriangleleft(y,q)\vartriangleright(\delta_jy,\sigma^jq)$ implies (we assume $i\geq j$ without loss of generality)
    \begin{equation}\label{eq:down-move}
        \begin{cases}
            (\delta_iy,\sigma^iq)=(\delta_jy,\sigma^jq)\,, &i=j\\
            (\delta_iy,\sigma^iq)=(\delta_jy,\sigma^jq)\,, &i=j+1\text{ with }y\in\im\sigma_j,q\in\im\delta^i\\
            (\delta_iy,\sigma^iq)\vartriangleright(\delta_j\delta_iy,\sigma^j\sigma^iq)\vartriangleleft(\delta_jy,\sigma^jq)\,, &i=j+1\text{ with all other possibilities}\\
            (\delta_iy,\sigma^iq)\vartriangleright(\delta_j\delta_iy,\sigma^j\sigma^iq)\vartriangleleft(\delta_jy,\sigma^jq)\,, &i>j+1
        \end{cases}\,.
    \end{equation}
    Let us consider an equivalence between $a,b\in\coprod_{n=0}^{\infty} X_n\times \Delta^n$, which is a finite chain made by elementary $\vartriangleleft$ and $\vartriangleright$ in some order.
    Once we see a fragment $\cdots f\vartriangleleft g\vartriangleright h\cdots$ in this chain, using Eq.~\eqref{eq:down-move}, we can replace it by $\cdots f\cdots$ or $\cdots f\vartriangleright g'\vartriangleleft h\cdots$.
    Iterating this procedure, we will eventually obtain a chain of elementary relations in the form
    \begin{equation}
        a\vartriangleright f_1 \vartriangleright \cdots \vartriangleright f_k\vartriangleleft\cdots\vartriangleleft f_{N}\vartriangleleft b
    \end{equation}
    where both the numbers of $\vartriangleright$'s and $\vartriangleleft$'s might be zero.
    Depending on zero or nonzero, we reach $a=b$ or $a\vartriangleright b$ or $a\vartriangleleft b$ or $a\vartriangleright f_k\vartriangleleft b$.
\end{proof}
This characterization of the equivalence relation is illuminating:
Every equivalence class has a unique lowest-dimensional point and we can ``$\vartriangleright$'' all other points to it.
This lowest-dimensional point is the perfect representative for this equivalence class.
\begin{definition}
    If $X$ is a simplicial set, then $x\in X_n$ is called \textit{non-degenerate} if it is not the image of any $\sigma_j$.
    Moreover, $(x,p)\in X_n\times\Delta^n$ is called \textit{non-degenerate} if $x$ is non-degenerate and $p$ lies in the interior (i.e., not the image of any $\delta^i$).
\end{definition}
\begin{theorem}\label{the:non-degenerate}
    $X$ is a simplicial set.
    In $\coprod_{n=0}^{\infty} X_n\times \Delta^n$, non-degenerate points bijectively represent the equivalence classes generated by Eq.~\eqref{eq:equi-relation}.
\end{theorem}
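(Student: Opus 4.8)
The plan is to derive the theorem quickly from the characterization of the equivalence relation established in Lemma~\ref{the:transitive}. The key auxiliary observation is that a point $(x,p)\in X_n\times\Delta^n$ is non-degenerate \emph{if and only if} it is a sink for $\vartriangleright$, meaning there is no index $i$ with $x\in\im\sigma_i$ or $p\in\im\delta^i$. One implication is essentially the definition: non-degeneracy of $x$ says $x\notin\im\sigma_i$ for all $i$, and $p$ lying in the open interior of $\Delta^n$ is precisely the statement $p\notin\im\delta^i$ for all $i$ (every proper face of a simplex is contained in a codimension-one face, i.e.\ in some $\delta^i(\Delta^{n-1})$). The converse is the contrapositive: if $x=\sigma_i y$ or $p=\delta^i q$ for some $i$, then $(x,p)\vartriangleright(\delta_i x,\sigma^i p)$, so $(x,p)$ is not a sink. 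I would also record the obvious fact that an elementary step $(y,q)\vartriangleright(\delta_i y,\sigma^i q)$ strictly lowers the dimension, from $X_n\times\Delta^n$ to $X_{n-1}\times\Delta^{n-1}$; hence every $\vartriangleright$-chain starting from an $n$-dimensional point has length at most $n$.

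For \textbf{existence} of a non-degenerate representative, start from an arbitrary $(x,p)$ in a given equivalence class and keep applying elementary $\vartriangleright$-steps. Since the dimension strictly decreases at each step and is bounded below by $0$, the process terminates at some $(x_0,p_0)$ admitting no further $\vartriangleright$-step, i.e.\ a sink; by the observation above $(x_0,p_0)$ is non-degenerate, and because $\vartriangleright$ generates $\sim$ (Lemma~\ref{the:transitive}) it lies in the same equivalence class.

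For \textbf{uniqueness}, let $(x,p)$ and $(x',p')$ be non-degenerate with $(x,p)\sim(x',p')$. By the dichotomy in the second half of Lemma~\ref{the:transitive}, either $(x,p)=(x',p')$, or $(x,p)\vartriangleright(x',p')$, or $(x',p')\vartriangleright(x,p)$, or there is a $c$ with $(x,p)\vartriangleright c\vartriangleleft(x',p')$. The latter three each demand a non-trivial outgoing $\vartriangleright$-step from $(x,p)$ or from $(x',p')$, which is impossible because both are sinks. Hence $(x,p)=(x',p')$, so the non-degenerate points biject with the equivalence classes, as claimed.

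I do not anticipate a genuine obstacle here: the substantive input is Lemma~\ref{the:transitive}, which already absorbs the heavy use of the (co)simplicial identities. The only place demanding a little care is the equivalence ``non-degenerate $\Leftrightarrow$ $\vartriangleright$-sink'', specifically the geometric remark that interiority of $p\in\Delta^n$ coincides with $p$ not lying in the image of any face inclusion $\delta^i$. A more classical route would instead invoke the Eilenberg--Zilber normal form for simplices of $X$ and treat the simplicial and geometric components separately, but funneling everything through Lemma~\ref{the:transitive} keeps the argument short and self-contained.
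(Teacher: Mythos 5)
Your proposal is correct and follows essentially the same route as the paper: both arguments funnel everything through Lemma~\ref{the:transitive}, observing that a non-degenerate point admits no outgoing elementary $\vartriangleright$-step (so the dichotomy forces equality of equivalent non-degenerate points), and obtaining existence by iterating $\vartriangleright$-steps, which must terminate because each step strictly lowers the dimension, bounded below by zero. Your explicit remark that interiority of $p$ means $p\notin\im\delta^i$ is already the paper's definition of non-degeneracy for the $\Delta^n$ factor, so no extra content is needed there.
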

\begin{proof}
    A non-degenerate point cannot appear on the left of $\vartriangleright$ defined in Lemma~\ref{the:transitive}.
    Then Lemma~\ref{the:transitive} tells us that $a\sim b$ implies $a=b$ when $a$ and $b$ are both non-degenerate.
    We thus proved injectivity.
    As for surjectivity, let us consider $a\in X_n\times \Delta^n$.
    If $a$ degenerates, we can always find a non-degenerate $b\in X_{n-m}\times \Delta^{n-m}$ for a sufficiently large $m$ such that $a\vartriangleright b$ simply due to the dimension lower bound $n-m\geq0$.
    We hence proved surjectivity.
\end{proof}
Given this theorem, one may want to construct geometric realization solely from non-degenerate elements.
Then such a construction can hardly be combinatory.
\begin{corollary}\label{the:CW-complex}
    If $X$ is a simplicial set, then $|X|$ is a CW complex with one $n$-cell for each non-degenerate element of $X_n$.
\end{corollary}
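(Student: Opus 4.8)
The plan is to put on $|X|$ the standard skeletal filtration and recognize that its cells are indexed by non-degenerate simplices, using Theorem~\ref{the:non-degenerate} for the combinatorial bookkeeping and the quotient-topology description of Def.~\ref{def:realization} for the point-set part. Define the \emph{$n$-skeleton} $|X|^{(n)}$ to be the image in $|X|$ of $\coprod_{k\leq n}X_k\times\Delta^k$. Since every point of a simplex lies in the interior of a unique face and the relation~\eqref{eq:equi-relation} lets one replace $(x,\delta^i p)$ by $(\delta_i x,p)$, the set $|X|^{(n)}$ is also the image of $\coprod_{k\leq n}X_k^{\mathrm{nd}}\times\mathrm{int}\,\Delta^k$, where $X_k^{\mathrm{nd}}\subseteq X_k$ denotes the non-degenerate $k$-simplices; and $|X|=\bigcup_{n\geq0}|X|^{(n)}$ by the dimension lower bound exploited in the surjectivity half of Theorem~\ref{the:non-degenerate}. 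By that same theorem, the obvious map $\coprod_{k}X_k^{\mathrm{nd}}\times\mathrm{int}\,\Delta^k\to|X|$ is a bijection of sets, which is the candidate cell decomposition.

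First I would show, by induction on $n$, that $|X|^{(n)}$ is obtained from $|X|^{(n-1)}$ by attaching one $n$-cell for each $x\in X_n^{\mathrm{nd}}$. The candidate characteristic map of the cell indexed by $x$ is $\Phi_x\colon\Delta^n\to|X|^{(n)}$, $p\mapsto[x,p]$, and its restriction to $\partial\Delta^n$ factors through $|X|^{(n-1)}$ because $[x,\delta^i p]=[\delta_i x,p]$ with $\delta_i x\in X_{n-1}$ (whether $\delta_i x$ is degenerate is irrelevant here: degeneracy is absorbed by the induction, since a degenerate $(n\!-\!1)$-simplex is already glued into a lower skeleton). The content of the attachment claim is that the induced map $\bigl(\coprod_{x\in X_n^{\mathrm{nd}}}\Delta^n\bigr)\sqcup|X|^{(n-1)}\to|X|^{(n)}$ is a quotient map identifying precisely the points forced by the $\partial\Delta^n$-gluings, and here Theorem~\ref{the:non-degenerate} is exactly what is needed: distinct non-degenerate interior points are never identified, and an interior point of $\Delta^n$ is never identified with a point of $\partial\Delta^n$ (such a boundary point already reduces to dimension $<n$), so the only identifications on the interiors are trivial while those on the boundaries are pulled back from $|X|^{(n-1)}$.

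The remaining work is topological: one must check that the quotient topology on $|X|$ from Def.~\ref{def:realization} agrees with the weak topology determined by the cells, and that each $\Phi_x$ restricts to a homeomorphism of $\mathrm{int}\,\Delta^n$ onto an open cell. I would argue that each $|X|^{(n)}$ is closed in $|X|$ and carries the quotient topology from $\coprod_{k\leq n}X_k\times\Delta^k$; since a subset of $|X|$ is closed precisely when its preimage in every $X_k\times\Delta^k$ is closed, and each such preimage is detected on the finitely many simplices of dimension $\leq k$, it follows that $|X|$ carries the colimit topology of the chain $|X|^{(0)}\subseteq|X|^{(1)}\subseteq\cdots$. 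Combined with the pushout description of each $|X|^{(n)}$ obtained above, this exhibits $|X|$ as a CW complex with the claimed cells; that $\Phi_x$ embeds $\mathrm{int}\,\Delta^n$ onto the corresponding open cell then follows because $\mathrm{int}\,\Delta^n$ is locally compact Hausdorff and the attaching quotients are closed maps on the relevant compact pieces. The main obstacle is precisely this point-set part --- reconciling the quotient topology of the geometric realization with the CW (weak) topology and confirming that the open cells are genuinely embedded --- whereas the identification of which cells appear has already been handed to us by Theorem~\ref{the:non-degenerate}.
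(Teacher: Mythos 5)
Your argument is correct and follows essentially the same route as the paper: the cell decomposition is read off from Theorem~\ref{the:non-degenerate}, and the remaining work is the point-set reconciliation of the realization's quotient topology with the weak/colimit topology of the cells. The only difference is presentational --- you verify the CW structure via the skeletal filtration and pushout attachments, while the paper checks closure finiteness and the weak topology directly against Whitehead's axioms, citing the finiteness of the face maps and general properties of quotient maps.
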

\begin{proof}
    The cell structure immediately follows Theorem~\ref{the:non-degenerate}, the closure finiteness (C) follows the finite number of $\delta_i$'s at each dimension, and the weak topology (W) follows the general-topology property of quotient maps.
\end{proof}


\subsubsection{Construction of \texorpdfstring{$\calK(H,m)$}{K(H,m}}
\label{sec:step-1-construction}

We now construct the simplicial set $X$ whose geometric realization is going to be $K(H,m)$.
The correct recipe is not hard to guess.
First, $X_n$ ought to be the trivial group $\{1\}$ when $n<m$ and $H^{[n:m]}$ when $n\geq m$, respectively (symbol $[n\!:\!m]$ is defined in Def.~\ref{def:[n:k]}).
Second, the face maps ought to be those defined in Def.~\ref{def:face-map} as we already call them ``face maps'' there.
Finally, we need to find the degeneracy maps and show the simplicial identities.
\begin{definition}\label{def:degeneracy-map}
    Given positive integers $n$ and $k\leq n$, an integer $j$ with $0\leq j\leq n$, and a group $H$, we define the map
    \begin{equation}
        s_j:H^{[n:k]}\mapsto H^{[n+1:k]}
    \end{equation}
    to be
    \begin{equation}\label{eq:degeneracy-map}
        (s_jh)_{q_1,q_2,\cdots, q_k}\equiv\begin{cases}
            h_{q_1,\cdots,q_{\bullet},q_{\bullet+1}-1,\cdots,q_k-1}\,, & q_{\bullet}< j\!+\!1 <q_{\bullet+1}\,,\\
            1\,, & j\!+\!1\in\{q_{\bullet}\}\,,
        \end{cases}
    \end{equation}
    where $1\leq q_1<q_2<\cdots<q_k\leq n\!+\!1$ and $1$ denotes the identity element of $H$.
\end{definition}
\begin{theorem}\label{the:K(H,m)}
    Consider a group $H$ and a positive integer $m$, and assume $H$ is Abelian if $m>1$.
    Then the following data,
    \begin{subequations}
    \begin{gather}
        X_n \equiv \begin{cases}
        \{1\}\,, &0\leq n<m\\
        H^{[n:m]}\,, &n\geq m
        \end{cases}\,,\\
        \delta_j:X_n\mapsto X_{n-1} \equiv \begin{cases}
        \boldsymbol{1}\,, &1\leq n\leq m\\
        \partial_j\,, &n>m
        \end{cases}\,,\quad
        \sigma_j:X_n\mapsto X_{n+1} \equiv \begin{cases}
        \boldsymbol{1}\,, &0\leq n<m\\
        s_j\,, &n\geq m
        \end{cases}\,,
    \end{gather}
    \end{subequations}
    assemble to a simplicial set.
    Here $\partial_j$ is defined in Def.~\ref{def:face-map}, $s_j$ is defined in Def.~\ref{def:degeneracy-map}, and $\boldsymbol{1}$ denotes the trivial map that sends everything to the identity element.
\end{theorem}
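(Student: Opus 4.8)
The claim is precisely that the data of Theorem~\ref{the:K(H,m)} obey the three families of simplicial identities~\eqref{eq:simplicial-identity} of Definition~\ref{def:simplicial-set}, so the plan is to verify each family, first disposing of the degenerate degree ranges where a map collapses to $\boldsymbol 1$. The decisive bookkeeping observation is that $X_n=\{1\}$ is a one-point set for every $n<m$, so on those degrees $\boldsymbol 1$ literally \emph{is} $\id$; hence whenever one of the composites occurring in an identity starts at a degree $\le m-2$, or lands at a degree $<m$, both sides are automatically equal. This leaves only the ranges where the maps are honestly the face maps $\partial_j$ of Definition~\ref{def:face-map} (degrees $>m$) and the maps $s_j$ of Definition~\ref{def:degeneracy-map} (degrees $\ge m$), with a little extra care at the two boundary degrees $n=m$ and $n=m+1$.

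In that reduced setting the first family $\delta_i\delta_j=\delta_{j-1}\delta_i$ with $i<j$ is nothing but Lemma~\ref{the:face-map}, i.e.\ the identity~\eqref{eq:ij=(j-1)i}, supplemented by the trivial observation above in the edge case $n=m+1$ (where the outer map is $\boldsymbol 1$ into $X_{m-1}=\{1\}$). So the new content consists of the second family $\sigma_i\sigma_j=\sigma_{j+1}\sigma_i$ for $i\le j$ together with the mixed third family, namely $\delta_i\sigma_j=\sigma_{j-1}\delta_i$ for $i<j$, then $\delta_i\sigma_j=\id$ for $i\in\{j,j+1\}$, and $\delta_i\sigma_j=\sigma_j\delta_{i-1}$ for $i>j+1$, all to be verified on $H^{[n:m]}$ directly from the piecewise formulas~\eqref{eq:degeneracy-map} and~\eqref{eq:face-map}. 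Each is a finite case analysis: fixing a target index tuple $q_1<\dots<q_m$, one locates the distinguished slot — the value $j+1$ for a degeneracy, the value $j$ for a face map — among the $q_\bullet$'s, reads off which branch of each formula is selected, and composes the resulting monotone index shifts, using the natural end conventions on the index tuple to handle the extremal choices of $j$. The two $\id$ cases are the most transparent and deserve to be singled out: $s_j$ inserts an identity element of $H$ into the slot labelled by $j+1$, and $\partial_j$ (respectively $\partial_{j+1}$) contracts exactly that slot, so the inserted $1$ is discarded and the original cochain returns; concretely, in the two-term branch $q_\bullet=j$ of~\eqref{eq:face-map} the second multiplicand is a value of $s_jh$ on a tuple that contains $j+1$, hence equals $1$, and only the first multiplicand — the original entry — survives.

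I do not expect a conceptual obstacle: the difficulty is entirely the combinatorial bookkeeping of nested monotone index shifts and of the piecewise branches, and the only spots that require genuine attention are the boundary degrees (where $H^{[m:m]}\cong H$ carries a singleton index set while $H^{[m+1:m]}$ already carries $m+1$ of them) and the interaction of the inserted identity with the at-most-two-term products appearing in $\partial_j$ — the hypothesis that $H$ is Abelian when $m>1$ being exactly what makes those (possibly nested) products unambiguous. As an a priori reason the verification must succeed, one may note that this simplicial set is precisely the underlying simplicial set of the simplicial Abelian group obtained by applying the Dold--Kan correspondence (see Sec.~\ref{sec:Dold-Kan}) to the chain complex with $H$ placed in degree $m$, for which the simplicial identities are automatic; but to keep this section self-contained we carry out the elementary direct check instead.
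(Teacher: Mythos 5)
Your proposal is correct and follows essentially the same route as the paper, whose proof consists precisely of verifying the simplicial identities~\eqref{eq:simplicial-identity} by direct evaluation of Eqs.~\eqref{eq:face-map} and~\eqref{eq:degeneracy-map}, with the $\delta_i\delta_j$ family covered by Lemma~\ref{the:face-map} and the low degrees handled trivially as you note. One minor caveat on wording: the hypothesis that $H$ is Abelian for $m>1$ is not needed to make the (ordered) two-term products well defined, but rather to permit the reordering of factors that arises when checking $\partial_i\partial_j=\partial_{j-1}\partial_i$ in the case where both face indices hit entries of the same index tuple; this does not affect your plan.
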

\begin{proof}
    We verify the simplicial identities just by directly evaluating them.
\end{proof}
\begin{definition}\label{def:K(H,m)}
    We use $\calK(H,m)$ to denote the simplicial set defined by Theorem~\ref{the:K(H,m)}.
\end{definition}
Though perhaps not immediately obvious, this is identical to what can be deduced from the Dold-Kan correspondence (see Sec.~\ref{sec:Dold-Kan}).
Given Corollary~\ref{the:CW-complex}, let us inspect the non-degenerate elements of $\calK(H,m)$ to see the cell structure of $|\calK(H,m)|$.
First, we have a unique element $*\in\calK(H,m)_0$, contributing a 0-cell.
$\calK(H,m)_n$ with $0\!<\!n\!<\!m$ contains $\sigma_0^n*$ only, i.e., there is no $n$-cell with $0\!<\!n\!<\!m$.
Every element in $\calK(H,m)_m$ except $\sigma_0^m*$ is non-degenerate, i.e., there are as many $m$-cells as $H\!-\!\{1\}$.
When $H$ is not trivial, higher dimensions are harder to track but we can readily see a bunch of non-degenerate elements, i.e., there are a bunch of $n$-cells with $n>m$.
This cell structure echoes a methodology of constructing Eilenberg-MacLane spaces called ``attaching higher-dimensional cells to kill higher homotopy groups'' (see~\cite[Sec.~4.2]{Hatcher:478079}).
We conclude this discussion with a cute result.
\begin{lemma}\label{the:trivial-case}
    If $H$ is the trivial group, then $|\calK(H,m)|$ is a point for any $m$.
\end{lemma}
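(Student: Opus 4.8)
The plan is to observe that when $H$ is trivial every level set of the simplicial set $\calK(H,m)$ collapses to a singleton, and then to invoke the cell-structure results already established. First I would note that $X_n=\{1\}$ for $0\leq n<m$ directly by definition, while for $n\geq m$ the set $H^{[n:m]}$ is the set of functions from the index set $[n\!:\!m]$ into the one-point group $H=\{1\}$, which is again a single point (indeed $\{1\}^{S}$ is a singleton for any set $S$). Hence $\calK(H,m)_n$ is a one-element set for every $n$, and all face maps $\delta_j$ and all degeneracy maps $\sigma_j$ are necessarily the unique maps between singletons; the simplicial identities hold automatically, as is in any case guaranteed by Theorem~\ref{the:K(H,m)}.

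Next I would pin down the non-degenerate elements. The unique element $*\in\calK(H,m)_0$ is non-degenerate because no degeneracy map lands in degree $0$. For $n\geq 1$, the unique element of $\calK(H,m)_n$ must coincide with $\sigma_0$ applied to the unique element of $\calK(H,m)_{n-1}$ (there being no other candidate), so every such element is degenerate. Therefore $*$ is the only non-degenerate element in the entire simplicial set.

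Finally I would conclude by Corollary~\ref{the:CW-complex}: $|\calK(H,m)|$ is a CW complex with exactly one cell, a single $0$-cell, and is thus a point. Equivalently, one may appeal to Theorem~\ref{the:non-degenerate}: the non-degenerate point $(*,p)$ with $p$ the unique point of $\Delta^0$ is the sole representative of the unique equivalence class in $\coprod_n X_n\times\Delta^n$, so the quotient is a point. I do not expect any genuine obstacle here; the statement is an immediate corollary of the structural machinery set up for general simplicial sets, and the only point deserving a moment's care is confirming that $H^{[n:m]}$ is a singleton rather than empty for all relevant $n$, which is immediate since a set of functions into a one-point set is always a one-point set.
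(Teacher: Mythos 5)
Your proof is correct and follows essentially the same route as the paper: observe that every level $\calK(H,m)_n$ is a singleton when $H$ is trivial, so the only non-degenerate datum is the $0$-simplex $*$, and conclude via Theorem~\ref{the:non-degenerate} (equivalently Corollary~\ref{the:CW-complex}) that the realization is a single point. The extra details you supply (why $H^{[n:m]}$ is a singleton, why the higher elements are degenerate) are just the steps the paper leaves implicit.
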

\begin{proof}
    In this case, every $\calK(H,m)_n$ contains just one element $\sigma_0^n*$.
    Then Theorem~\ref{the:non-degenerate} tells us that $|\calK(H,m)|$ is just a point.
\end{proof}


\subsection{Step II: Cohomology evaluation}
\label{sec:step-2}

We want to evaluate the cohomology of $\prod_{k}|\calK(G_k,k)|$.
The strategy is to convert everything into computations on simplicial sets.
Namely, we want to convert product and cohomology into structures on simplicial sets.
Let us start with cohomology.
It is actually pretty easy to construct a cochain complex from each simplicial set.

\begin{lemma}\label{the:simplicial-cochain}
    $X$ is a simplicial set and $M$ is an Abelian group.
    The following data,
    \begin{equation}\label{eq:simplicial-cochain}
        M^{X_0} \overset{\d_0}{\longrightarrow} M^{X_1} \overset{\d_1}{\longrightarrow}  M^{X_2} \overset{\d_2}{\longrightarrow}  M^{X_3} \overset{\d_3}{\longrightarrow}  M^{X_4} \overset{\d_4}{\longrightarrow} \cdots\,,
    \end{equation}
    where
    \begin{equation}
        \d_n\ \equiv\ \sum_{j=0}^{n}\,(-)^{j}\,\delta_j^{*}\,,\qquad\delta_j^*(X_n\mapsto M)\equiv X_{n+1}\overset{\delta_j}{\mapsto} X_n\mapsto M\,,
    \end{equation}
    assemble to a cochain complex.
\end{lemma}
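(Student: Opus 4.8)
The plan is to run, almost verbatim, the argument already used for Lemma~\ref{the:cochain-complex}: there the vanishing of $\d_{n+1}\d_n$ for the extended-group cochains rested solely on the face-map identity of Lemma~\ref{the:face-map}, and here the identical role is played by the first simplicial identity~\eqref{eq:simplicial-identity-1}. Concretely, this is the standard statement that the (unnormalized) cochain complex attached to a simplicial set is a cochain complex.

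First I would dualize the relevant simplicial identity. Since precomposition with a map of sets is contravariant, applying $(-)^{*}$ to $\delta_i\delta_j=\delta_{j-1}\delta_i$ (valid whenever $i<j$) yields
\begin{equation}
    \delta_j^{*}\,\delta_i^{*}\ =\ \delta_i^{*}\,\delta_{j-1}^{*}\,,\qquad i<j\,,
\end{equation}
as homomorphisms between the appropriate function groups $M^{X_{\bullet}}$. Note that only~\eqref{eq:simplicial-identity-1} enters; the degeneracy maps $\sigma_j$ and the other two simplicial identities play no role whatsoever, which is precisely why this unnormalized construction needs no mention of non-degenerate elements.

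Next I would expand the composite. Writing $\d_n=\sum_{j}(-)^{j}\delta_j^{*}$ (with $j$ running over the range indicated in the statement), one has
\begin{equation}
    \d_{n+1}\d_n\ =\ \sum_{i}\sum_{j}\,(-)^{i+j}\,\delta_i^{*}\,\delta_j^{*}\,,
\end{equation}
where in each summand $\delta_j^{*}$ maps into $M^{X_{n+1}}$ and $\delta_i^{*}$ then maps into $M^{X_{n+2}}$. Split the double sum into the range $i\leq j$ and the range $i>j$. In every summand with $i>j$ the displayed dual identity applies and rewrites $\delta_i^{*}\delta_j^{*}=\delta_j^{*}\delta_{i-1}^{*}$; this sends the $(i,j)$-summand onto a summand of the form $\delta_{i'}^{*}\delta_{j'}^{*}$ with $i'=j$ and $j'=i-1$, hence $i'\leq j'$, carrying the sign $(-)^{i+j}=-(-)^{i'+j'}$. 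Since $(i,j)\mapsto(j,i-1)$ is a bijection from $\{i>j\}$ onto $\{i'\leq j'\}$, each term of the $i>j$ part cancels exactly one term of the $i\leq j$ part with the opposite sign. Therefore the whole double sum vanishes, i.e.\ $\d_{n+1}\d_n=0$, and~\eqref{eq:simplicial-cochain} is a cochain complex.

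I do not expect any genuine obstacle here: the entire content is the contravariance of $(-)^{*}$ combined with one simplicial identity, and the pairing combinatorics is the same shuffle already executed in Lemma~\ref{the:cochain-complex}. The only points demanding care are the bookkeeping of the index ranges at the ends of the sums (and the matching of the boundary cases around $i=j$) together with the sign bookkeeping $(-)^{i+j}$ versus $(-)^{(i-1)+j}$; once the dual identity has been recorded these are entirely routine.
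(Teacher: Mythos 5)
Your proposal is correct and matches the paper's proof, which simply observes that the argument of Lemma~\ref{the:cochain-complex} carries over verbatim using only the first simplicial identity~\eqref{eq:simplicial-identity-1}; your dualized pairing $(i,j)\mapsto(j,i-1)$ is exactly the same shuffle, just written on the pullback operators rather than on the cochain arguments. The only cosmetic caveat is the index range: for $\d_n\colon M^{X_n}\to M^{X_{n+1}}$ the sum should run over all face maps $\delta_j$ of $X_{n+1}$, i.e.\ $j=0,\dots,n+1$ (as in Def.~\ref{def:diffrential}), which your "range indicated in the statement" remark should be read as.
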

\begin{proof}
    The proof is completely identical to that of Lemma~\ref{the:cochain-complex} and makes use of the first simplicial identity only.
\end{proof}
\begin{definition}
    The cohomology of the cochain complex defined by Lemma~\ref{the:simplicial-cochain} is denoted by $H^{\bullet}(X,M)$. 
\end{definition}

If a space is the geometric realization $|X|$ of some simplicial set $X$, there are two distinct cohomologies associated to it.
The first is the topological $H^{\bullet}(|X|,M)$, usually defined as the singular cohomology (or directly via the Eilenberg–Steenrod axioms), which depends only on the homotopy type of $|X|$ only.
The second is the simplicial $H^{\bullet}(X,M)$, which a priori depends on $X$.
Fortunately, the two cohomologies can be proven isomorphic.
\begin{theorem}\label{the:cohomology-unique}
    For a simplicial set $X$ and any Abelian group $M$, there are group isomorphisms for all non-negatives $n$,
    \begin{equation}
        H^{n}(X,M) = H^{n}(|X|,M)\,.
    \end{equation}
\end{theorem}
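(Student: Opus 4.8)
The plan is to route the comparison through the cellular cohomology of the CW complex $|X|$: by the standard theorem identifying cellular and singular cohomology of a CW complex \cite{Hatcher:478079}, one has $H^\bullet_{\mathrm{cell}}(|X|,M)=H^\bullet(|X|,M)$, so it suffices to exhibit a cochain homotopy equivalence between the simplicial cochain complex $\{M^{X_n},\d_n\}$ of Lemma~\ref{the:simplicial-cochain} and the cellular cochain complex of $|X|$ with coefficients in $M$.

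First I would replace the simplicial cochain complex by its normalized version. Writing $\Z X_n$ for the free Abelian group on $X_n$, the complex $\{M^{X_n},\d_n\}$ is precisely $\Hom_\Z(-,M)$ applied to the simplicial chain complex $\Z X_\bullet$ with boundary $\sum_j(-)^j\delta_j$. Let $D_n\subseteq\Z X_n$ be the subgroup generated by the degenerate simplices; it is a subcomplex, and the classical normalization theorem \cite{may1992simplicial, weibel1994introduction} — proved by an explicit contracting homotopy built entirely from the simplicial identities \eqref{eq:simplicial-identity} — shows that $\Z X_\bullet\to\Z X_\bullet/D_\bullet$ is a chain homotopy equivalence, with $\Z X_\bullet/D_\bullet$ free on the non-degenerate simplices. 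Dualizing, the inclusion $N^\bullet(X,M)\hookrightarrow M^{X_\bullet}$ is a cochain homotopy equivalence, where $N^n(X,M)$ consists of the maps $X_n\to M$ that vanish on $\im\sigma_0\cup\cdots\cup\im\sigma_{n-1}$; in particular $H^\bullet(N^\bullet(X,M))\cong H^\bullet(X,M)$.

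Next I would identify $N^\bullet(X,M)$ with the cellular cochain complex of $|X|$. By Corollary~\ref{the:CW-complex}, $|X|$ carries exactly one $n$-cell for each non-degenerate $x\in X_n$, so $N^n(X,M)$ is already, as an Abelian group, the group of $M$-valued cellular $n$-cochains. To match the differentials, recall that the characteristic map of the $x$-cell is $\Delta^n\to|X|$, $p\mapsto[(x,p)]$, whose restriction to the $i$-th boundary face is $p\mapsto[(\delta_ix,p)]$ by the relation \eqref{eq:equi-relation}. If $\delta_ix$ is degenerate this face lands in the $(n\!-\!2)$-skeleton and contributes nothing to the cellular boundary; if $\delta_ix$ is non-degenerate it contributes its cell with incidence number $(-)^i$ — the very orientation sign \eqref{eq:sign} already governing the simplicial differential. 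Hence the cellular boundary of the $x$-cell is $\sum_i(-)^i[\delta_ix]$ with degenerate terms discarded, which is exactly the boundary in $\Z X_\bullet/D_\bullet$; applying $\Hom_\Z(-,M)$ recovers $\d_n$ on $N^\bullet(X,M)$. Chaining the three identifications yields $H^n(X,M)\cong H^n(N^\bullet(X,M))=H^n_{\mathrm{cell}}(|X|,M)=H^n(|X|,M)$.

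I expect the middle step to be the main obstacle: producing the contracting homotopy that proves the normalization theorem and checking that its relations follow from \eqref{eq:simplicial-identity} alone is the one genuinely combinatorial computation, and it is also where one must verify naturality in $X$ and $M$. A secondary point requiring care is the incidence-sign computation in the cellular boundary — justifying that the inclusion $\delta^i:\Delta^{n-1}\hookrightarrow\Delta^n$ contributes precisely $(-)^i$ — but this is the identical orientation bookkeeping already used in Definition~\ref{def:diffrential}, so it can be imported essentially verbatim.
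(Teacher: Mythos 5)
Your proposal is correct, but it should be said that the paper does not in fact prove Theorem~\ref{the:cohomology-unique} in the text: it defers to the literature (\cite[Sec.~16]{may1992simplicial}, whose Proposition~16.2 treats the homology version) and offers, as an alternative, verifying the Eilenberg--Steenrod axioms for a relative simplicial cohomology. Your route --- normalization theorem, identification of the normalized complex with the cellular cochain complex of the CW structure from Corollary~\ref{the:CW-complex}, then cellular $=$ singular --- is in substance the argument behind the cited homology statement, and your choice to dualize the chain homotopy equivalence $\Z X_\bullet\to\Z X_\bullet/D_\bullet$ (rather than a mere quasi-isomorphism) is exactly what makes the passage to cohomology with arbitrary coefficients immediate; this is precisely the adaptation that the paper's footnote leaves to the reader, so your proposal supplies what the paper omits. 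Two details deserve the care you already flag: first, $|X|$ is generally not a $\Delta$-complex, since a non-degenerate simplex may have degenerate faces, so one cannot simply quote the simplicial-equals-singular theorem for $\Delta$-complexes; your degree computation, in which a degenerate face lands in the $(n\!-\!2)$-skeleton and contributes zero, is the correct substitute, and when several faces $\delta_i x$ coincide as non-degenerate simplices the incidence number is the signed sum of their $(-)^i$, which still agrees with the boundary of $\Z X_\bullet/D_\bullet$. Second, the sign $(-)^i$ itself requires fixing generators of $H_n\bigl(|X|^{(n)},|X|^{(n-1)}\bigr)$ via the characteristic maps $p\mapsto[(x,p)]$ and running the local-degree computation; this is the same orientation bookkeeping as in the $\Delta$-complex case and, once written out, closes the argument.
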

\begin{proof}
    This is one of the key results in simplicial homotopy theory\footnote{
    Do not confuse this with another theorem in algebraic topology.
    The simplicial cohomology defined on simplicial complexes or $\Delta$-complexes takes an identical form to Theorem~\ref{the:simplicial-cochain}.
    But the difference is crucial:
    Simplicial complexes or $\Delta$-complexes do not make use of degeneracy maps.
    }
    and its proof is not as elementary as others in this paper.
    Therefore, instead of presenting a proof here, we direct the readers to the well-established literature~\cite[Sec.~16]{may1992simplicial}\footnote{
    We note that~\cite[Sec.~16]{may1992simplicial} does not directly contain a proof of the cohomology version.
    It does give a proof of the homology version~\cite[Proposition~16.2]{may1992simplicial}.
    One may either modify that proof to adapt cohomology or derive the cohomology version based on it.
    }.
    Alternatively, one may prove this by establishing the relative $H^{\bullet}(X,Y;M)$ and verify the Eilenberg–Steenrod axioms.
\end{proof}

Having learned how to dealing with cohomology, we now turn to products.

\begin{definition}\label{def:product}
    $X$ and $Y$ are simplicial sets.
    Their \textit{product} $X\times Y$ is a simplicial set defined by the following data:
    \begin{equation}
        (X\times Y)_n\equiv X_n\times Y_n\,,\qquad \delta_i(x,y)\equiv(\delta_ix,\delta_iy)\,,\qquad \sigma_i(x,y)\equiv(\sigma_ix,\sigma_iy)\,.
    \end{equation}
    The simplicial identities are satisfied in an obvious way.
\end{definition}
Milnor~\cite{milnor1957geometric} first discovered that, thanks to degeneracy maps, this disturbingly simple-minded definition works perfectly.
To show this, we need a pretty technical geometric property of collapse maps that cannot be deduced solely from cosimplicial identities.
\begin{lemma}\label{the:technical}
    If $p\in\Delta^{\alpha}$ and $q\in\Delta^{\beta}$ are interior points, then there are two unique ordered strings of non-negative integers $i_{1}<\cdots<i_{K}$ and $j_{1}<\cdots<j_{M}$ such that $\{i_{\bullet}\}\cap\{j_{\bullet}\}=\varnothing$ and $p=\sigma^{i_1}\cdots\sigma^{i_K}r$ and $q=\sigma^{j_1}\cdots\sigma^{i_M}r$ for some $r$.
\end{lemma}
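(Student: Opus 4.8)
The plan is to recast both the collapse maps and the interior points in terms of the combinatorics of order-preserving surjections, and then to make that combinatorics transparent by viewing an interior point of $\Delta^{n}$ as a subdivision of the unit interval. Concretely, I would first set up a dictionary. A composite of collapse maps $\sigma^{i_1}\cdots\sigma^{i_K}\colon\Delta^{\alpha+K}\to\Delta^{\alpha}$ with $i_1<\cdots<i_K$ is the same datum as an order-preserving surjection $\phi\colon\{0,1,\dots,\alpha+K\}\to\{0,1,\dots,\alpha\}$, the induced map sending $(r_0,\dots,r_{\alpha+K})$ to the point of $\Delta^{\alpha}$ whose $a$-th coordinate is $\sum_{\phi(b)=a}r_b$. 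By the cosimplicial identity $\sigma^{j}\sigma^{i}=\sigma^{i}\sigma^{j+1}$ (for $i\le j$) the increasing-index form is the unique normal form of such a composite, and the string is read off from $\phi$ via $\{i_1+1,\dots,i_K+1\}=\{b\ge 1:\phi(b)=\phi(b-1)\}=:T_{\phi}$ (the ``repeated'' vertices of $\phi$), a correspondence that identifies such strings with arbitrary $K$-element subsets of $\{1,\dots,\alpha+K\}$. Under this dictionary the requirement $\{i_\bullet\}\cap\{j_\bullet\}=\varnothing$ becomes $T_{\phi}\cap T_{\psi}=\varnothing$, equivalently injectivity of the joint map $(\phi,\psi)$.

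Next I would identify an interior point $p=(p_0,\dots,p_\alpha)$ of $\Delta^{\alpha}$ with the subdivision of $[0,1]$ into the $\alpha+1$ intervals of positive lengths $p_0,\dots,p_\alpha$, that is, with the set of $\alpha$ distinct breakpoints $\bar p_a:=p_0+\cdots+p_a$. The content of this translation is: for interior $r$, one has $\phi_{*}r=p$ for some (then unique) $\phi$ exactly when the subdivision of $r$ refines that of $p$---every breakpoint of $p$ is a breakpoint of $r$---and in that case $T_{\phi}$ is precisely the set of breakpoints of $r$ that are not breakpoints of $p$. It is essential that $r$ be interior, so that its $\gamma$ breakpoints are genuinely distinct; otherwise uniqueness fails, since $(\tfrac12,\tfrac12)=\sigma^{0}(\tfrac12,0,\tfrac12)=\sigma^{1}(\tfrac12,0,\tfrac12)$ competes with the evident empty-string representation.

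With this in hand, \emph{existence} is immediate: given interior $p\in\Delta^{\alpha}$ and $q\in\Delta^{\beta}$, take $r$ to be the interior point whose breakpoint set is the union of those of $p$ and of $q$, and let $\phi$, $\psi$ be the coarsenings of $r$ onto $p$ and $q$. Then $\phi_{*}r=p$ and $\psi_{*}r=q$ by construction, and a breakpoint of $r$ that is not a $p$-breakpoint is necessarily a $q$-breakpoint, so $T_{\phi}\cap T_{\psi}=\varnothing$; the dictionary turns this into the desired strings. For \emph{uniqueness}: if strings with $\{i_\bullet\}\cap\{j_\bullet\}=\varnothing$ and an interior $r$ realize $p$ and $q$, with associated surjections $\phi$, $\psi$, then each of the $\gamma$ distinct breakpoints of $r$ must be a $p$- or a $q$-breakpoint (else the corresponding index would lie in $T_{\phi}\cap T_{\psi}=\varnothing$), while every $p$- or $q$-breakpoint is a breakpoint of $r$ by the refinement statement; hence the breakpoint set of $r$ is forced to be the union of those of $p$ and $q$, which depends on $(p,q)$ alone, and with it $\gamma$, then $r$, then $\phi$ and $\psi$, and finally the two strings.

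The main obstacle is the first step: establishing the normal form of a composite of collapse maps and its exact bijection with subsets, in particular keeping track of the $+1$ shift built into the definition of $\sigma^{i}$. Everything downstream is routine once the interval picture is available, and indeed this lemma is just the combinatorial core of Milnor's product theorem for geometric realization.
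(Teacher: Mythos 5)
Your proof is correct and takes essentially the same route as the paper's: both construct $r$ by merging the partial-sum (breakpoint) sets of $p$ and $q$ and read the two index strings off from which breakpoints of $r$ are missing from each factor. Your order-preserving-surjection dictionary and explicit uniqueness argument (which rightly leans on the interiority of $r$, a point the paper leaves implicit behind ``the theory of linear equations'') are a more detailed packaging of the same construction rather than a different approach.
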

\begin{proof}
    We prove it by finding the solution.
    Suppose $p=(p_0,p_1,\cdots,p_{\alpha})\in\Delta^{\alpha}\subseteq\R^{\alpha+1}$ and $q=(q_0,q_1,\cdots,q_{\beta})\in\Delta^{\beta}\subseteq\R^{\beta+1}$.
    For every $a=0,1,\cdots,\alpha$ and $b=0,1,\cdots,\beta$, we define
    \begin{equation}
        u_a\equiv\sum_{\ell=0}^a p_{\ell}\,,\qquad v_b\equiv\sum_{\ell=0}^b q_{\ell}\,.
    \end{equation}
    Let $0<t_0<\cdots<t_{N}=1$ be the ordered strings of all elements in $\{u_{\bullet}\}\cup\{v_{\bullet}\}$.
    Clearly, $N\leq\alpha+\beta$.
    Then let $i_{1}<\cdots<i_{K}$ be those $i$'s such that $t_i\notin\{u_{\bullet}\}$ and let $j_{1}<\cdots<j_{M}$ be those $j$'s such that $t_j\notin\{v_{\bullet}\}$.
    Let $r\equiv(t_0,t_1-t_0,t_2-t_1,\cdots,t_N-t_{N-1})\in\Delta^{N}\subseteq\R^{N+1}$.
    We can then verify $p=\sigma^{i_1}\cdots\sigma^{i_K}r$ and $q=\sigma^{j_1}\cdots\sigma^{i_M}r$.
    The uniqueness comes from the theory of linear equations.
\end{proof}
\begin{theorem}\label{the:product}
    $X$ and $Y$ are simplicial sets. Then there is a homeomorphism
    \begin{equation}
        |X\times Y| \cong |X|\times |Y|
    \end{equation}
    if every $X_n$ and every $Y_n$ are at most countable\footnote{
    The countability assumption can be dropped but the price is that we must assign $|X|\times|Y|$ a topology finer than the product topology.
    Systematically, we should do product in the category $\boldsymbol{\mathrm{CGHaus}}$ (compactly generated Hausdorff spaces) rather than $\boldsymbol{\mathrm{Top}}$ (spaces).
    The former is Cartesian closed while the latter is not.
    For our purpose, we do not need such a generality.
    }.
\end{theorem}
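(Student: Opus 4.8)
The plan is to produce an explicit continuous bijection $\Phi\colon |X\times Y|\to|X|\times|Y|$ and then upgrade it to a homeomorphism, with the countability hypothesis entering only at the very last step. For the map itself: the projections $\pi_X\colon X\times Y\to X$ and $\pi_Y\colon X\times Y\to Y$ are simplicial maps, and since geometric realization is functorial they induce continuous maps $|X\times Y|\to|X|$ and $|X\times Y|\to|Y|$, hence a continuous $\Phi$ with $\Phi\bigl([(x,y),p]\bigr)=\bigl([x,p],[y,p]\bigr)$. Continuity of $\Phi$ in this direction is free, because $|X|\times|Y|$ carries the product topology and each component of $\Phi$ is continuous; nothing subtle happens here.

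The second step is bijectivity, which is where Lemma~\ref{the:technical} does the work. By Theorem~\ref{the:non-degenerate}, every point of $|X|$ is represented by a unique pair $(x,p)$ with $x\in X_\alpha$ non-degenerate and $p$ interior in $\Delta^\alpha$, and likewise for $|Y|$. Given a point $\bigl([x,p],[y,q]\bigr)$ of $|X|\times|Y|$ in this normal form, I would invoke Lemma~\ref{the:technical} to obtain the unique disjoint increasing strings $i_1<\cdots<i_K$ and $j_1<\cdots<j_M$ and the unique interior $r\in\Delta^N$ with $p=\sigma^{i_1}\cdots\sigma^{i_K}r$ and $q=\sigma^{j_1}\cdots\sigma^{j_M}r$. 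Applying the corresponding composites of degeneracy operators to $x$ and to $y$ yields $\tilde x\in X_N$ and $\tilde y\in Y_N$, and the defining relation $(\sigma_i a,p)\sim(a,\sigma^i p)$ shows $\Phi\bigl([(\tilde x,\tilde y),r]\bigr)=\bigl([x,p],[y,q]\bigr)$, giving surjectivity. Disjointness of $\{i_\bullet\}$ and $\{j_\bullet\}$ guarantees that no single degeneracy $\sigma_j$ acts on both coordinates, so $(\tilde x,\tilde y)$ is non-degenerate in $X_N\times Y_N$; since $r$ is interior, $[(\tilde x,\tilde y),r]$ is the unique non-degenerate representative of its class, and the uniqueness clauses of Lemma~\ref{the:technical} together with Theorem~\ref{the:non-degenerate} then force injectivity of $\Phi$.

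For the third step, observe that $|X|$ and $|Y|$ are CW complexes by Corollary~\ref{the:CW-complex}, as is $|X\times Y|$. Restricted to any closed cell of $|X\times Y|$, which is compact Hausdorff, the continuous injection $\Phi$ is automatically a homeomorphism onto its compact image; since $|X\times Y|$ carries the weak topology determined by its closed cells, $\Phi^{-1}$ is continuous as soon as $|X|\times|Y|$ carries the weak topology determined by the images of those cells, equivalently by the products of closed cells of $|X|$ and of $|Y|$. I expect \emph{this} reconciliation of the product topology with the weak topology to be the only genuine obstacle. It is false in general, as the footnote indicates, and is precisely what the countability hypothesis repairs: because every $X_n$ and every $Y_n$ is at most countable, $|X|$ and $|Y|$ each have only countably many cells, and then the classical fact that the product of two CW complexes with countably many cells (or with one factor locally finite) again carries the CW topology lets us conclude that $\Phi$ is a homeomorphism.
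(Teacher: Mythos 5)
Your proposal is correct and takes essentially the same route as the paper's proof: the projection-induced continuous map, bijectivity by matching unique non-degenerate representatives (Theorem~\ref{the:non-degenerate}) with the unique factorization of Lemma~\ref{the:technical}, and continuity of the inverse reduced to the fact that a product of CW complexes with countably many cells is again a CW complex. No substantive differences or gaps to flag.
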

\begin{proof}
    Because the canonical projection $\mathrm{pr}_n$ commutes with every $\delta_i$ and $\sigma_i$, $\mathrm{pr}_n\times\id_{\Delta^n}:(X\times Y)_n\times\Delta^n\mapsto X_n\times\Delta^n$ induces a continuous surjective map $|X\times Y|\mapsto|X|$ . 
    Similarly we have $|X\times Y|\mapsto|Y|$.
    Altogether we obtain a continuous surjective map $\eta:|X\times Y|\mapsto|X|\times|Y|$.
    To prove injectivity, we
    trace the non-degenerate representatives under $\eta$ (see Theorem~\ref{the:non-degenerate}).
    Let us consider a non-degenerate $(x,y,r)\in X_n\times Y_n\times\Delta^n$.
    Given the second simplicial identity, let us assume $x=\sigma_{i_K}\cdots\sigma_{i_1}\bar{x}$ with $i_1<\cdots<i_K$ for non-degenerate $\bar{x}$ and $y=\sigma_{j_M}\cdots\sigma_{j_1}\bar{y}$ with $j_1<\cdots<j_M$ for non-degenerate $\bar{y}$.
    Then $\{i_{\bullet}\}\cap\{j_{\bullet}\}=\varnothing$ because once $i_k=j_m$, $(x,y)$ cannot be non-degenerate.
    The non-degenerate representative of $[(x,r)]$ is $(\bar{x},p)$ with $p=\sigma^{i_1}\cdots\sigma^{i_K}r$ and the non-degenerate representative of $[(y,r)]$ is $(\bar{y},q)$ with $q=\sigma^{j_1}\cdots\sigma^{i_M}r$.
    Lemma~\ref{the:technical} says that $p$ and $q$ uniquely determine $r$, $i_{\bullet}$'s, and $j_{\bullet}$'s.
    We thus obtain $\eta^{-1}$ and proved injectivity.

    We need the countability assumption to show the continuity of $\eta^{-1}$.
    Since $\eta^{-1}$ is clearly continuous on each product cell of $|X|\times |Y|$, it is continuous if $|X|\times |Y|$ has the weak topology, i.e., if $|X|\times |Y|$ is a CW complex.
    Hence this is in fact a general-topology problem about CW complexes.
    As proved in~\cite[Theorem~A.6]{Hatcher:478079}, the product of two CW complexes is also a CW complex if there are at most countably many cells.
\end{proof}
After all, we now have all the tools to evaluate the cohomology of $\prod_{k}|\calK(G_k,k)|$.
\begin{theorem}\label{the:cohomology-evaluation}
    $G$ is an extended group and $M$ is an Abelian group.
    Then there is a group isomorphism for each non-negative integer $n$,
    \begin{equation}
        H^{n}(G,M) = H^{n}\left( \prod_{k=1}^{\infty}\Bigl|\calK(G_k,k)\Bigr|,\,M \right)\,.
    \end{equation}
\end{theorem}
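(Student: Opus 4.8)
The plan is to push everything back to simplicial sets, where the extended group cochain complex will reappear verbatim. First I would assemble the product simplicial set $\calK(G)\equiv\prod_{k=1}^{\infty}\calK(G_k,k)$ by iterating Def.~\ref{def:product}; thanks to the dimension cutoff in Def.~\ref{def:extended-group}(ii) and Lemma~\ref{the:trivial-case}, only the factors with $k\leq N$ are nontrivial, so this is really a finite product. Its geometric realization is then computed by iterating Theorem~\ref{the:product}: at every stage the simplicial sets involved are degreewise at most countable — each $\calK(G_k,k)_n=G_k^{[n:k]}$ is a finite power of an at-most-countable group by Def.~\ref{def:extended-group}(iii), and a finite product of countable sets is again countable — so the countability hypothesis survives, yielding a homeomorphism $|\calK(G)|\cong\prod_{k}\bigl|\calK(G_k,k)\bigr|$.

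Next I would apply Theorem~\ref{the:cohomology-unique} to the simplicial set $\calK(G)$, which replaces the topological cohomology $H^n\bigl(\prod_k|\calK(G_k,k)|,M\bigr)=H^n\bigl(|\calK(G)|,M\bigr)$ by the simplicial cohomology $H^n(\calK(G),M)$ built in Lemma~\ref{the:simplicial-cochain}. It then remains to recognize this simplicial cochain complex as the extended group cochain complex of Def.~\ref{def:cochain} and Def.~\ref{def:diffrential}. In degree $n$ one has $\calK(G)_n=\prod_k\calK(G_k,k)_n=\prod_{k=1}^{n}G_k^{[n:k]}$ — the factors with $k>n$ are trivial — so $M^{\calK(G)_n}=C^n(G,M)$ on the nose. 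For the coboundary, the face maps $\delta_j$ of the product act componentwise, and on the factor $\calK(G_k,k)$ in degree $n\!+\!1$ the map $\delta_j$ equals the $\partial_j$ of Def.~\ref{def:face-map} exactly when $k\leq n$ (read off Theorem~\ref{the:K(H,m)} with $m=k$), which is precisely the range of components on which an $n$-cochain depends; the remaining $G_{n+1}$-component is annihilated because $\calK(G_{n+1},n+1)_n$ is trivial. Hence $\sum_j(-)^j\delta_j^*$ reproduces the alternating sum $\sum_j(-)^j c_n(\partial_j g^{(1)}\|\cdots\|\partial_j g^{(n)})$ of Def.~\ref{def:diffrential} term by term, so the two cochain complexes literally coincide and thus have the same cohomology.

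Concatenating the isomorphisms then gives $H^n(G,M)=H^n(\calK(G),M)=H^n(|\calK(G)|,M)=H^n\bigl(\prod_k|\calK(G_k,k)|,M\bigr)$, which is the claim.

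As for difficulty, all the substantive inputs — the product theorem and the equivalence of simplicial and singular cohomology — are already established, so the argument is largely bookkeeping. The one place deserving genuine care is the componentwise identification of the product simplicial set's face operators with the $\partial_j$'s: one must track the degree ranges carefully, in particular that each factor $\calK(G_k,k)$ contributes only a point in degrees below $k$ and so does not perturb the identification $\calK(G)_n=\prod_{k\leq n}G_k^{[n:k]}$, and one must also check that the iterated use of Theorem~\ref{the:product} preserves the countability hypothesis at each step.
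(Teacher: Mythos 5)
Your proposal is correct and follows essentially the same route as the paper's proof: reduce to a finite product via the dimension cutoff and Lemma~\ref{the:trivial-case}, apply Theorem~\ref{the:product} (with the countability check) and Theorem~\ref{the:cohomology-unique} to pass to simplicial cohomology, and then identify the simplicial cochain complex of $\prod_{k}\calK(G_k,k)$ with the extended group cochain complex of Def.~\ref{def:cochain} and Def.~\ref{def:diffrential}. Your componentwise bookkeeping of the face maps and degree ranges is the same identification the paper asserts, just spelled out in more detail.
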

\begin{proof}
    Combining the dimension cutoff $N$ in Def.~\ref{def:extended-group} and Lemma~\ref{the:trivial-case}, we see that the infinite product above is actually a finite product, i.e.,
    \begin{equation}
        \mathrm{r.h.s.}=H^n\left( \prod_{k=1}^{N}\Bigl|\calK(G_k,k)\Bigr|,\,M \right)\,.
    \end{equation}
    Due to Def.~\ref{def:extended-group} and Def.~\ref{def:K(H,m)}, $\calK(G_k,k)_{\ell}$ is at most countable for all $\ell$ and $k$. 
    Hence Theorem~\ref{the:product} is applicable.
    Iterating Theorem~\ref{the:product} $N$ times and then applying Theorem~\ref{the:cohomology-unique}, we transform the cohomology of a space into the cohomology of a simplicial set, i.e.,
    \begin{equation}
        \mathrm{r.h.s.}=H^n\left( \prod_{k=1}^{N}\calK(G_k,k),M \right)\,.
    \end{equation}
    We can readily see that $\bigl( \prod_{k=1}^{N}\calK(G_k,k) \bigr)_{\ell}$ is exactly the argument for the $\ell$-cochains defined in Def.~\ref{def:cochain}.
    Also, we can readily see that the face maps of $\prod_{k=1}^{N}\calK(G_k,k)$ are precisely those used in Def.~\ref{def:diffrential}.
    Therefore, when we evaluate the cochain complex of $\prod_{k=1}^{N}\calK(G_k,k)$ defined by Theorem~\ref{the:simplicial-cochain}, we simply recover the same cochain complex given by Lemma~\ref{the:cochain-complex}.
    We thus proved the theorem.
\end{proof}


\subsection{Step III: Space identification}
\label{sec:step-3}

We finally come to the last step to show $|\calK(H,m)|$ is indeed a $K(H,m)$.
Since Theorem~\ref{the:CW-complex} ensures that $|\calK(H,m)|$ is a CW complex, what we need to do is to compute the homotopy groups of $|\calK(H,m)|$.
Just like cohomology and product, we also want to evaluate $|X|$'s homotopy groups by computing something about $X$.


\subsubsection{Kan condition and homotopy group}

This time, however, we cannot achieve the above goal for arbitrary simplicial sets.
We can only do this for simplicial sets that satisfy the Kan condition~\cite{may1992simplicial, weibel1994introduction, goerss2009simplicial, Sergeraert:2008int}.
\begin{definition}\label{def:horn}
    $X$ is a simplicial set.
    For any non-negative integers $n$ and $k\leq n\!+\!1$, an \textit{$(n,k)$-horn on $X$} is a map
    \begin{equation}
        \lambda\,:\ \bigl\{0,\cdots,k\!-\!1,k\!+\!1,\cdots,n\!+\!1\bigr\}\ \mapsto\ X_n
    \end{equation}
    such that $\delta_i\lambda_j=\delta_{j-1}\lambda_i$ for all $i<j$ and $i,j\neq k$.
\end{definition}
\begin{definition}\label{def:Kan-condition}
    If $\lambda$ is an $(n,k)$-horn on $X$, then $y\in X_{n+1}$ is called a \textit{filler of $\lambda$} if $\lambda_i=\delta_iy$ for all $i\neq k$.
    A simplicial set $X$ is called \textit{fibrant} (or satisfies the \textit{Kan condition}) if every horn on $X$ has a filler. 
\end{definition}

The geometric interpretation of the Kan condition is clear: If all but one $(n\!-\!1)$-facets of an $n$-simplex are there, the entire $n$-simplex including the rest $(n\!-\!1)$-facet is also there.
We shall soon show that $\calK(H,m)$ is fibrant.
We now introduce based homotopy on simplicial sets~\cite{may1992simplicial, weibel1994introduction, goerss2009simplicial, Sergeraert:2008int}.
\begin{definition}
    $X$ is a simplicial set and $*\in X_0$.
    For $a,b\in X_n$, some $y\in X_{n+1}$ is called a \textit{homotopy from $a$ to $b$ based at $*$} if for all $i=0,1,\cdots,n\!+\!1$,
    \begin{equation}
        \delta_i y=\begin{cases}
            \sigma_0^n*\,, & i<n\\
            a\,, &i=n\\
            b\,, &i=n+1
        \end{cases}\,.
    \end{equation}
    If such $y$ exists, we say $a$ and $b$ are \textit{homotopic based at $*$}.
    In particular, when $a,b\in X_0$, the base $*$ does not actually appear in the definition and we thus drop ``based at $*$''.
\end{definition}
The geometric intuition is that an $n$-simplex provides a based homotopy between two adjacent $(n\!-\!1)$-facets by shrinking all other $(n\!-\!1)$-facets to the basepoint.
If we further shrink the rest two $(n\!-\!1)$-facets to the basepoint, we are left with a based sphere $S^n$.
We are thus led to the following definition for the set of based $n$-spheres.
\begin{definition}
    $X$ is a simplicial set and $*\in X_0$.
    For a positive integer $n$, we define
    \begin{equation}
        Z_n(X,*) \equiv \Bigl\{ x\in X_n \,\Big|\, \delta_ix=\sigma_0^{n-1}\!*\text{ for all }i=0,1,\cdots,n \Bigr\}\,.
    \end{equation}
    As the condition above is vacuously true when $n=0$, we also define $Z_0(X)=X_0$.
\end{definition}
\begin{lemma}
    $Z_n(X,*)\neq\varnothing$ since it at least contains $\sigma_0^n*$.
\end{lemma}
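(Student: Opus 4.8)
The plan is to verify directly that $\sigma_0^n *$ — the $0$-simplex $*$ pushed up $n$ times by the first degeneracy map — lies in $Z_n(X,*)$, i.e.\ that $\delta_i\bigl(\sigma_0^n *\bigr)=\sigma_0^{\,n-1}*$ for every $i=0,1,\dots,n$. Since the only structure available is a simplicial set, the entire argument is pure index bookkeeping in the simplicial identities~\eqref{eq:simplicial-identity}, and in fact only the third one~\eqref{eq:simplicial-identity-3} — the one governing $\delta_i\sigma_j$ — is needed.

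Concretely, I would prove the sharper claim by induction on $n\geq 1$: for all $0\leq i\leq n$ one has $\delta_i\sigma_0^{\,n}*=\sigma_0^{\,n-1}*$. For the base case $n=1$, the middle branch of~\eqref{eq:simplicial-identity-3} ($\delta_i\sigma_j=\id$ when $i=j$ or $j+1$, here with $j=0$) applies to both $i=0$ and $i=1$, giving $\delta_i\sigma_0*=*=\sigma_0^{\,0}*$. For the inductive step, write $\sigma_0^{\,n}*=\sigma_0\bigl(\sigma_0^{\,n-1}*\bigr)$ and split on $i$: if $i\in\{0,1\}$, the same $\id$-branch gives $\delta_i\sigma_0\bigl(\sigma_0^{\,n-1}*\bigr)=\sigma_0^{\,n-1}*$ immediately; if $i\geq 2$, the third branch of~\eqref{eq:simplicial-identity-3} gives $\delta_i\sigma_0=\sigma_0\delta_{i-1}$, hence $\delta_i\sigma_0^{\,n}*=\sigma_0\bigl(\delta_{i-1}\sigma_0^{\,n-1}*\bigr)$, and since $1\leq i-1\leq n-1$ the inductive hypothesis yields $\delta_{i-1}\sigma_0^{\,n-1}*=\sigma_0^{\,n-2}*$, so that $\delta_i\sigma_0^{\,n}*=\sigma_0^{\,n-1}*$. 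This closes the induction and shows $\sigma_0^{\,n}*\in Z_n(X,*)$, whence $Z_n(X,*)\neq\varnothing$; the case $n=0$ is trivial because $Z_0(X)=X_0\ni *$ by definition.

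I do not expect any genuine obstacle here; the one point requiring a little care is the case analysis of~\eqref{eq:simplicial-identity-3} with $j=0$ — noting that the ``$i=j$ or $j+1$'' branch covers exactly $i=0,1$ and the ``$i>j+1$'' branch covers the rest, with the ``$i<j$'' branch being vacuous. (An equivalent non-inductive phrasing simply pushes a single $\delta_i$ down through the tower of $\sigma_0$'s, each application of the third branch lowering the index by one, until it meets a $\sigma_0$ it collapses against via the $\id$-branch; the induction above is just a tidy way to organize this.)
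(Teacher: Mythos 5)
Your proof is correct and follows exactly the paper's route: the paper also verifies $\delta_i\sigma_0^n*=\sigma_0^{n-1}*$ by repeatedly applying the third simplicial identity~\eqref{eq:simplicial-identity-3}, which is precisely the case analysis and induction you spell out. Your write-up is just a more detailed organization of that same computation.
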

\begin{proof}
    We verify $\delta_i\sigma_0^n*=\sigma_0^{n-1}*$ by repeating the third simplicial identity.
\end{proof}
Armed with the Kan condition, we can find based homotopy classes of spheres.
\begin{lemma}\label{the:pi-set}
    For a fibrant simplicial set $X$, being homotopic is an equivalence relation on $Z_0(X)$.
    Moreover, being homotopic based at $*$ is an equivalence relation on $Z_n(X,*)$ for any positive integer $n$ and any $*\in X_0$.
\end{lemma}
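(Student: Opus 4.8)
The plan is to prove the three defining properties of an equivalence relation in turn, isolating where fibrancy is actually needed. Reflexivity requires no Kan condition: given $a\in Z_n(X,*)$ I claim $\sigma_n a$ is a homotopy from $a$ to $a$ based at $*$ (on $Z_0(X)$ take $\sigma_0 a$). Indeed the third simplicial identity gives $\delta_n\sigma_n a=\delta_{n+1}\sigma_n a=a$, while for $i<n$ one has $\delta_i\sigma_n a=\sigma_{n-1}\delta_i a=\sigma_{n-1}\sigma_0^{n-1}*=\sigma_0^{n}*$, using $a\in Z_n(X,*)$ together with the elementary relations $\delta_i\sigma_0^{k}*=\sigma_0^{k-1}*$ and $\sigma_j\sigma_0^{k}*=\sigma_0^{k+1}*$ (immediate from the simplicial identities, as in the proof that $Z_n(X,*)$ is nonempty). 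Symmetry and transitivity will instead be produced by constructing explicit horns on $X$ and filling them.

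For symmetry, suppose $y\colon a\simeq b$ based at $*$, so $\delta_i y=\sigma_0^{n}*$ for $i<n$, $\delta_n y=a$, $\delta_{n+1}y=b$. I would take the horn on $X$ obtained from an $(n{+}2)$-simplex by omitting its face $n{+}2$, with faces $\lambda_i=\sigma_0^{n+1}*$ for $i<n$, $\lambda_n=y$, and $\lambda_{n+1}=\sigma_n a$ (the reflexivity homotopy $a\simeq a$). The compatibility identities $\delta_i\lambda_j=\delta_{j-1}\lambda_i$ hold automatically among the degenerate faces, and for the pair $(n,n{+}1)$ the identity reads $\delta_n\sigma_n a=\delta_n y$, i.e. $a=a$; so this is an admissible horn. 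By fibrancy it has a filler $\Omega\in X_{n+2}$, and I would set $z\equiv\delta_{n+2}\Omega$. The identities $\delta_i\delta_{n+2}=\delta_{n+1}\delta_i$ for $i\le n$, $\delta_{n+1}\delta_{n+2}=\delta_{n+1}\delta_{n+1}$, and $\delta_{n+1}\sigma_n=\id$, applied to the specified faces of $\Omega$, yield $\delta_i z=\sigma_0^{n}*$ for $i<n$, $\delta_n z=\delta_{n+1}y=b$, and $\delta_{n+1}z=\delta_{n+1}\sigma_n a=a$; hence $z\colon b\simeq a$ based at $*$.

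Transitivity is entirely parallel. Given $y\colon a\simeq b$ and $z\colon b\simeq c$, I would take the horn from an $(n{+}2)$-simplex omitting face $n{+}1$, with $\lambda_i=\sigma_0^{n+1}*$ for $i<n$, $\lambda_n=y$, $\lambda_{n+2}=z$. The only nontrivial compatibility condition is the one for the pair $(n,n{+}2)$, which becomes $\delta_n z=\delta_{n+1}y$, i.e. $b=b$, so the horn is admissible and has a filler $\Omega\in X_{n+2}$ by fibrancy. Setting $w\equiv\delta_{n+1}\Omega$ and using $\delta_i\delta_{n+1}=\delta_n\delta_i$ for $i\le n$ and $\delta_{n+1}\delta_{n+1}=\delta_{n+1}\delta_{n+2}$, one gets $\delta_i w=\sigma_0^{n}*$ for $i<n$, $\delta_n w=\delta_n y=a$, $\delta_{n+1}w=\delta_{n+1}z=c$, so $w\colon a\simeq c$. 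The case of $Z_0(X)$ is the same argument with the basepoint and the degenerate faces simply absent: reflexivity uses $\sigma_0 a$, and the two horns live in $X_2$ (omitting face $2$ for symmetry, face $1$ for transitivity).

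The only real subtlety — the main obstacle — is the bookkeeping of horn indices: one must choose which face of the $(n{+}2)$-simplex to leave empty so that (i) the horn's compatibility relations $\delta_i\lambda_j=\delta_{j-1}\lambda_i$ reduce precisely to the trivially true equalities $a=a$ (symmetry) and $b=b$ (transitivity), and (ii) the face produced by the filler is a homotopy with exactly the desired endpoints $\delta_n=b,\delta_{n+1}=a$ respectively $\delta_n=a,\delta_{n+1}=c$. Once the indices are pinned down, everything reduces to routine instances of the three simplicial identities and of the basepoint-degeneracy relations above.
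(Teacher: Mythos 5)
Your proposal is correct and follows essentially the same route as the paper: reflexivity via the degeneracy $\sigma_n a$, and then filling $(n\!+\!1,k)$-horns whose faces are the given homotopies padded with $\sigma_0^{n+1}*$, reading off the missing face of the filler. The only cosmetic difference is that the paper fills a single $(n\!+\!1,n\!+\!2)$-horn to prove the combined implication $a\sim b,\ a\sim c\Rightarrow b\sim c$ (your symmetry horn is exactly that horn specialized to $y'=\sigma_n a$), whereas you prove symmetry and transitivity separately, the latter with an $(n\!+\!1,n\!+\!1)$-horn.
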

\begin{proof}
    $Z_0(X)$ and $Z_n(X,*)$ are treated in the same footing so we just write $Z_n$ and allow $n$ to be any non-negative integer.
    We verify reflexivity by observing that for any $a\in Z_n$, $\sigma_na$ is a homotopy from $a$ to itself. 
    Recall that given reflexivity, both symmetry and transitivity are implied by $a\sim b,\,a\sim c\Rightarrow b\sim c$.
    To prove this, consider $a,b,c\in Z_n$, a homotopy $y$ from $a$ to $b$, and a homotopy $y'$ from $a$ to $c$.
    Then the data
    \begin{equation}
        \lambda_0\equiv\lambda_1\equiv\cdots\equiv \lambda_{n-1}\equiv\sigma_0^{n+1}*\,,\qquad\lambda_{n}\equiv y\,,\qquad\lambda_{n+1}\equiv y'
    \end{equation} 
    define a $(n\!+\!1,n\!+\!2)$-horn $\lambda$ on $X$.
    If $\lambda$ has a filler $z\in X_{n+2}$, then $\delta_{n+2}z$ is a homotopy from $b$ to $c$.
\end{proof}
\begin{definition}
    For a fibrant simplicial set $X$ and $*\in X_0$, we define the sets
    \begin{subequations}
    \begin{align}
        &\pi_0(X)\,\equiv\, Z_0(X)/\sim\\
        &\pi_n(X,*)\,\equiv\, Z_n(X,*)/\sim\,,\qquad n>0
    \end{align}
    \end{subequations}
    where $\sim$ is the equivalence relation given by Lemma~\ref{the:pi-set}.
\end{definition}
We now convert these sets into groups in a way akin to the proof of Lemma~\ref{the:pi-set}.
\begin{definition}\label{def:compositor}
    $X$ is a fibrant simplicial set and $*\in X_0$.
    For $a,b\in Z_n(X,*)$, a \textit{compositor of $(a,b)$} is a filler of the $(n,n)$-horn $\lambda$ on $X$ defined by
    \begin{equation}\label{eq:horn-lambda}
        \lambda_0\equiv\lambda_1\equiv\cdots\equiv\lambda_{n-2}\equiv\sigma_0^n*\,,\qquad \lambda_{n-1}\equiv a\,,\qquad \lambda_{n+1}\equiv b\,.
    \end{equation}
    We usually write an unspecified compositor of $(a,b)$ as either $\calC_{a,b}$ or $\calC(a,b)$.
\end{definition}
\begin{theorem}\label{the:group-structure}
    The map $[a]\times[b]\mapsto[\delta_n\calC_{a,b}]$ makes $\pi_n(X,*)$ a group (with identity $[\sigma_0^n*]$).
\end{theorem}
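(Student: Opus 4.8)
The plan is to verify the group axioms in turn, each reduced to a single application of the Kan condition (Def.~\ref{def:Kan-condition}) together with the simplicial identities~\eqref{eq:simplicial-identity}. I would proceed in the order: (i) closure, i.e.\ $\delta_n\calC_{a,b}\in Z_n(X,*)$ for $a,b\in Z_n(X,*)$, so that $[\delta_n\calC_{a,b}]$ is defined; (ii) well-definedness of $[a]\times[b]\mapsto[\delta_n\calC_{a,b}]$ on homotopy classes, meaning independence of the chosen compositor and of the chosen representatives; (iii) the claim that $[\sigma_0^n*]$ is a two-sided identity; (iv) existence of inverses; (v) associativity. Steps (i)--(iv) all follow the same short template; step (v) is where the real work lies.

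For (i), let $z=\calC_{a,b}\in X_{n+1}$ be any filler of the horn~\eqref{eq:horn-lambda}. Computing $\delta_j(\delta_n z)$ for $0\le j\le n$ and commuting the outer face past the inner one with~\eqref{eq:simplicial-identity-1}, one arrives at $\delta_{n-1}(\sigma_0^n*)=\sigma_0^{n-1}*$ when $j\le n-2$, at $\delta_{n-1}a=\sigma_0^{n-1}*$ when $j=n-1$, and at $\delta_n b=\sigma_0^{n-1}*$ when $j=n$ (using $a,b\in Z_n(X,*)$ in the last two cases); hence $\delta_n z\in Z_n(X,*)$. For (ii), the uniform device is: given homotopies $h$ from $a$ to $a'$ and $h'$ from $b$ to $b'$ based at $*$ (the degenerate homotopies $\sigma_n a,\sigma_n b$ cover the case of merely replacing the compositor) and compositors $\calC_{a,b},\calC_{a',b'}$, one assembles a horn on $X$ with filler in $X_{n+2}$ whose prescribed non-degenerate faces are exactly $h,h',\calC_{a,b},\calC_{a',b'}$, placed in the slots forced by~\eqref{eq:simplicial-identity}, with $\sigma_0^{n+1}*$ in the other prescribed slots; the Kan condition supplies a filler $w$, and the unprescribed face of $w$ is, after one more pass through the simplicial identities, a based homotopy from $\delta_n\calC_{a,b}$ to $\delta_n\calC_{a',b'}$. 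The only content here is the bookkeeping check that the horn-compatibility equations $\delta_i\lambda_j=\delta_{j-1}\lambda_i$ actually hold.

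For (iii) I would exhibit explicit compositors. The $(n,n)$-horn computing $[a]\times[\sigma_0^n*]$ (with $\lambda_{n-1}=a$, $\lambda_{n+1}=\sigma_0^n*$, and $\lambda_i=\sigma_0^n*$ otherwise) is filled by $z=\sigma_{n-1}a$, and $\delta_n(\sigma_{n-1}a)=a$ by the $i=j\!+\!1$ case of~\eqref{eq:simplicial-identity-3}, whence $[a]\times[\sigma_0^n*]=[a]$; symmetrically $z=\sigma_n b$ fills the horn computing $[\sigma_0^n*]\times[b]$ and $\delta_n(\sigma_n b)=b$ by the $i=j$ case of~\eqref{eq:simplicial-identity-3}, whence $[\sigma_0^n*]\times[b]=[b]$. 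For (iv), to obtain a right inverse of $[a]$ I read the requirement ``$\delta_n\calC=\sigma_0^n*$'' as prescribing all faces but one of an element $z\in X_{n+1}$: the values $\lambda_i=\sigma_0^n*$ for $i\le n\!-\!2$ and for $i=n$, together with $\lambda_{n-1}=a$, form a horn on $X$ by~\eqref{eq:simplicial-identity}; the Kan condition supplies a filler $z$, and then with $a'\equiv\delta_{n+1}z$ and $\calC_{a,a'}\equiv z$ we get $\delta_n\calC_{a,a'}=\sigma_0^n*$, i.e.\ $[a]\times[a']=[\sigma_0^n*]$. Since a monoid with two-sided identity in which every element has a right inverse is automatically a group, (i)--(iv) together with (v) complete the proof.

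The hard part is (v). Given $a,b,c\in Z_n(X,*)$, pick compositors $d=\calC_{a,b}$, $e=\calC_{b,c}$, and $f=\calC_{\delta_n d,\,c}$, so that $([a][b])[c]=[\delta_n f]$. I would then fill a horn on $X$ with filler $w\in X_{n+2}$ whose prescribed faces are $d,e,f$ in suitable slots and $\sigma_0^{n+1}*$ elsewhere; the Kan condition provides $w$, and a combinatorial reading of the faces of $w$ using~\eqref{eq:simplicial-identity} shows both that the unprescribed face of $w$ is a compositor of $(a,\delta_n e)$ and that $\delta_n f$ is homotopic, based at $*$, to $\delta_n\calC_{a,\delta_n e}$; combined with step (ii) this is exactly $([a][b])[c]=[a]([b][c])$. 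This is the classical simplicial argument (see, e.g., \cite{may1992simplicial}); its only genuine obstacle is arranging the indices so that all the horn-compatibility conditions hold --- a patient but purely mechanical verification with~\eqref{eq:simplicial-identity}. In the actual write-up I would either carry out this bookkeeping explicitly for the chosen horn or simply cite the standard reference.
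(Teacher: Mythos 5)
Steps (i), (iii), (iv) and your associativity plan (v) are sound, and in fact (v) is realizable exactly as in the paper: prescribing $d,e,f$ in slots $n-1,\,n+1,\,n+2$ with $\sigma_0^{n+1}*$ below and horn index $n$ does satisfy all compatibility conditions, and the filler's $n$-th face is a compositor of $(a,\delta_n e)$. The genuine gap is step (ii). You propose a single $(n\!+\!2)$-dimensional filler whose prescribed nondegenerate faces are exactly $h,h',\calC_{a,b},\calC_{a',b'}$, with $\sigma_0^{n+1}*$ in every other prescribed slot, and you correctly identify that the only content is checking $\delta_i\lambda_j=\delta_{j-1}\lambda_i$ — but for general data that check fails, so no such horn exists and the Kan condition gives you nothing to fill. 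Concretely: a compositor has nondegenerate faces $a$, $\delta_n\calC_{a,b}$, $b$ in degrees $n-1,n,n+1$ and basepoints only in degrees $\leq n-2$, while a based homotopy has basepoints in degrees $\leq n-1$. Compatibility between the two compositors at slots $i<j$ reads $\delta_i\calC_{a',b'}=\delta_{j-1}\calC_{a,b}$, which for arbitrary $a,a',b,b'$ forces $i\leq n-2$ and $j\leq n-1$, i.e.\ both compositors sit in slots $\leq n-1$. But a compositor in a slot $\leq n-1$ is incompatible with a basepoint $\sigma_0^{n+1}*$ in any slot of $\{n,n+1,n+2\}$ (that condition would read $\sigma_0^{n}*=\delta_{t-1}\calC_{a,b}$ with $t-1\in\{n-1,n,n+1\}$), so those three slots must be occupied by $h$, $h'$ and the unprescribed face; and then the condition between the compositor (slot $i\leq n-1$) and a homotopy in a slot $j\geq n$ reads $\sigma_0^{n}*=\delta_{j-1}\calC_{a,b}\in\{a,\delta_n\calC_{a,b},b\}$, which is false in general. (For $n=1$ there are not even enough prescribed slots to house four nontrivial faces.)

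The repair is the staged argument of the paper, and it hinges on allowing \emph{non-basepoint degeneracies} among the prescribed faces — exactly what your blanket ``$\sigma_0^{n+1}*$ in the other prescribed slots'' forbids. First prove compositor-independence with the $(n\!+\!1,n)$-horn having $\sigma_n a$ in slot $n-1$ and the two compositors of $(a,b)$ in slots $n+1,n+2$; then prove $b$-independence with the horn having $\sigma_{n-1}a$ in slot $n-1$, $\calC_{a,b'}$ in slot $n$, and a homotopy from $b'$ to $b$ in slot $n+2$, whose filler produces a new compositor of $(a,b)$, and conclude by the already-proved compositor-independence; $a$-independence is analogous. With step (ii) replaced by this staged version, the rest of your outline goes through and coincides in substance with the paper's proof (the paper closes with left identity/left inverse instead of your two-sided identity/right inverse, an immaterial difference).
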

\begin{proof}
    We first show that this map is well-defined.
    We can verify $\delta_n\calC_{a,b}\in Z_n(X,*)$ using the first simplicial identity.
    Then let us show the independence of the choice of a compositor.
    Let us consider another compositor $\calC_{a,b}'$.
    Then the data
    \begin{equation}
        \Lambda_0\equiv\cdots\equiv\Lambda_{n-2}\equiv\sigma_0^{n+1}*\,,\qquad \Lambda_{n-1}\equiv \sigma_na\,,\qquad \Lambda_{n+1}\equiv \calC_{a,b}\,,\qquad \Lambda_{n+2}\equiv \calC_{a,b}'
    \end{equation}
    define an $(n\!+\!1,n)$-horn $\Lambda$.
    If $\Lambda$ has a filler $x\in X_{n+2}$, then $\delta_n x$ is a homotopy from $\delta_n\calC_{a,b}$ to $\delta_n\calC_{a,b}'$.
    Next, let us show the independence of the choice of $b\in[b]$.
    Let us consider another $b'\in[b]$, along with a homotopy $w\in X_{n+1}$ from $b'$ to $b$, and a compositor $\calC_{a,b'}$.
    Then the data
    \begin{equation}
        \Omega_0\equiv\cdots\equiv\Omega_{n-2}\equiv\sigma_0^{n+1}*\,,\qquad \Omega_{n-1}\equiv \sigma_{n-1}x\,,\qquad \Omega_{n}\equiv \calC_{a,b'}\,,\qquad \Omega_{n+2}\equiv w
    \end{equation}
    define an $(n\!+\!1,n\!+\!1)$-horn $\Omega$.
    If $\Omega$ has a filler $y\in X_{n+2}$, then $\delta_{n+1}y$ is a compositor of $(a,b)$.
    Because of the compositor-independence we have just proved, we see $\delta_n\calC_{a,b'}=\delta_n\delta_{n+1}y\sim\delta_n\calC_{a,b}$.
    This proves the $b$-independence and we can prove the $a$-independence in a similar way.

    We now check the group axioms and start from associativity.
    Let us consider $a,b,c\in Z_n(X,*)$ and compositors $\calC_{a,b}$, $\calC_{b,c}$, and $\calC(\delta_n\calC_{a,b}\,,c)$.
    Then the data
    \begin{equation}
        \Gamma_0\equiv\cdots\equiv\Gamma_{n-2}\equiv\sigma_0^{n+1}*\,,\quad \Gamma_{n-1}\equiv\calC_{a,b}\,,\quad \Gamma_{n+1}\equiv\calC(\delta_n\calC_{a,b}\,,c)\,,\quad \Gamma_{n+2}\equiv\calC_{b,c}
    \end{equation}
    define an $(n\!+\!1,n)$-horn $\Gamma$.
    If $\Gamma$ has a filler $z\in X_{n+2}$, then $\delta_{n}z$ is a $\calC(a,\delta_n\calC_{b,c})$.
    Hence $\delta_n\delta_{n}z=\delta_n\calC(\delta_n\calC_{a,b}\,,c)$ proves associativity.
    We now come to the existence of identity and inverse.
    Let us consider any $a\in Z_n(X,*)$.
    Then any compositor $\calC_{\sigma_0^n*,a}$ is a homotopy from $\delta_n\calC_{\sigma_0^n*,a}$ to $a$.
    This makes $[\sigma_0^n*]$ a left identity.
    Besides, the data
    \begin{equation}
        \xi_0\equiv\cdots\equiv\xi_{n-2}\equiv\xi_{n}\equiv\sigma_0^{n}*\,,\qquad \xi_{n+1}\equiv a
    \end{equation}
    define an $(n,n\!-\!1)$-horn $\xi$.
    If $\xi$ has a filler $w\in X_{n+1}$, then $w$ is a $\calC(\delta_{n-1}w,a)$.
    Hence $\delta_nw=\sigma_0^{n}*$ means that $[\delta_{n-1}w]$ is a left inverse of $[a]$.
    Recall that associativity, left identity, and left inverse imply the whole group axioms.
\end{proof}
After heavy combinatory endeavors, we finally obtain something like homotopy groups on simplicial sets.
We would be disappointed if they are not isomorphic to the real homotopy groups of the geometric realization.
\begin{theorem}\label{the:homotopy-group}
    For a fibrant simplicial set $X$, we have a bijection,
    \begin{equation}
        \pi_0(X) = \pi_0(|X|)
    \end{equation}
    and group isomorphisms
    \begin{equation}
        \pi_n(X,*) = \pi_n(|X|,|*|)
    \end{equation}
    for any positive integer $n$ and any $*\in X_0$.
\end{theorem}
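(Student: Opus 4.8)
The plan is to construct a natural comparison morphism $\Phi\colon\pi_n(X,*)\to\pi_n(|X|,|*|)$ by geometric realization and then to prove it bijective, the Kan condition entering crucially only in the second step. A class in $\pi_n(X,*)$ is represented by some $x\in Z_n(X,*)$, i.e.\ an $x\in X_n$ all of whose faces equal $\sigma_0^{n-1}\!*$; the slice $\{x\}\times\Delta^n$ sits inside $\coprod_m X_m\times\Delta^m$ and maps to $|X|$ under the defining quotient, and since every $\delta_i x$ is degenerate the boundary $\partial\Delta^n$ is collapsed all the way down to $|*|$, so $x$ yields a based map $S^n=\Delta^n/\partial\Delta^n\to(|X|,|*|)$ and hence an element $\Phi[x]$; for $n=0$ this is the evident map of pointed sets on path components. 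First I would check that $\Phi$ is well defined on homotopy classes: a based homotopy $y\in X_{n+1}$ from $a$ to $b$ in the sense of Lemma~\ref{the:pi-set} realizes, by the same recipe applied to the extra barycentric coordinate, to a based topological homotopy from $\Phi[a]$ to $\Phi[b]$. Then I would check that $\Phi$ is a homomorphism: a compositor $\calC_{a,b}$ (Def.~\ref{def:compositor}) realizes to an $(n\!+\!1)$-cell whose boundary exhibits $\Phi[\delta_n\calC_{a,b}]$ as the standard pinch-concatenation of $\Phi[a]$ with $\Phi[b]$.

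The core of the argument is bijectivity of $\Phi$. I would deduce it from the sharper statement that, for any finite simplicial set $K$ (finitely many non-degenerate simplices) and any fibrant $X$, realization induces a bijection on based homotopy classes $[K,X]\to[|K|,|X|]$; applying this with $K=\Delta^n/\partial\Delta^n$, whose simplicial self-homotopy classes into $X$ one identifies with $\pi_n(X,*)$ via Lemma~\ref{the:pi-set}, then gives the theorem, and the $\pi_0$ statement is the case $K=\Delta^0\sqcup\Delta^0$. For surjectivity I would argue that a map $f\colon|K|\to|X|$ has compact image, hence factors through a finite subcomplex $|L|$ with $L\subseteq X$ a sub-simplicial set with finitely many non-degenerate simplices (by Theorem~\ref{the:non-degenerate} and Corollary~\ref{the:CW-complex}); fixing a triangulation of $|K|$, simplicial approximation then supplies an integer $r$ and a simplicial map $\mathrm{Sd}^r\!K\to L\hookrightarrow X$ whose realization is homotopic to $f$, and a relative version keeps the basepoint fixed, so $[f]$ is hit. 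For injectivity I would observe that a topological homotopy between the realizations of two simplicial maps is a map $|K|\times[0,1]\to|X|$, and $|K|\times[0,1]$ is again the realization of a finite simplicial set, so the same approximation produces a simplicial homotopy; the Kan condition is then invoked to rectify this generic simplicial homotopy into one of the rigid form demanded by Lemma~\ref{the:pi-set}, by repeatedly filling horns so as to collapse the unwanted faces onto $\sigma_0^{\bullet}\!*$.

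The hard part will be precisely this last rectification: reconciling the generic simplicial homotopies produced by simplicial approximation with the horn-filling notion of homotopy built into the definition of $\pi_n(X,*)$. This is where fibrancy is indispensable and where the combinatorics is densest --- it is the simplicial counterpart of cellular and simplicial approximation taken together. A reader willing to import that machinery could instead proceed as we did for Theorem~\ref{the:cohomology-unique} and cite the standard fact that the unit map $X\to\mathrm{Sing}\,|X|$ is a weak homotopy equivalence for every simplicial set $X$; combined with the fibrancy of $\mathrm{Sing}$ of any space, the hypothesis that $X$ is fibrant, and the tautology $\pi_n(\mathrm{Sing}\,|X|,*)=\pi_n(|X|,|*|)$, this packages the same content more economically and reduces Theorem~\ref{the:homotopy-group} to a black box from the literature.
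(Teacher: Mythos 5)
The paper does not actually prove Theorem~\ref{the:homotopy-group}: exactly as for Theorem~\ref{the:cohomology-unique}, it defers to the literature (\cite[Sec.~16, Theorem~16.5]{may1992simplicial}). So your closing paragraph --- cite that the unit $X\to\mathrm{Sing}\,|X|$ is a weak homotopy equivalence, use fibrancy of $X$ and of $\mathrm{Sing}$ of a space, and unwind definitions --- is essentially the paper's own route, and to that extent your proposal and the paper coincide. The first half of your direct sketch (the comparison map $\Phi$, well-definedness via realizing the rigid homotopies of Lemma~\ref{the:pi-set}, the homomorphism property via compositors of Def.~\ref{def:compositor}) is also sound and is how the standard proofs begin.

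The genuine gap is in your bijectivity argument. Classical simplicial approximation is a statement about maps between polyhedra, i.e.\ realizations of simplicial complexes, and $|L|$ for a sub-simplicial set $L\subseteq X$ is only a CW complex: in the cases this paper cares about it is drastically non-simplicial (e.g.\ $|\calK(H,m)|$ has a single vertex and one $m$-cell per nontrivial element of $H$, with the entire boundary of each such cell collapsed to that vertex), so the approximation theorem you invoke does not apply as stated. There is a simplicial-set version of approximation (Kan's subdivision/$\mathrm{Ex}$ machinery, later Jardine), but it only produces simplicial maps out of some subdivision $\mathrm{Sd}^r K$; the missing and genuinely nontrivial step is descending from $[\mathrm{Sd}^r K,X]$ to $[K,X]$ along the last-vertex map, and that descent is precisely where fibrancy and the $\mathrm{Ex}^\infty$ technology are needed --- for surjectivity just as much as for the homotopy rectification you flag under injectivity. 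So the opening claim that the Kan condition ``enters crucially only in the second step'' is not right, and the surjectivity paragraph as written would not go through. (Minor point: for the $\pi_0$ statement you want $K=\Delta^0$, not $\Delta^0\sqcup\Delta^0$.) As a roadmap of the standard proof your sketch is reasonable, but the part it leaves vague is exactly the content of \cite[Sec.~16]{may1992simplicial}; the economical fix is the reduction you end with, which is what the paper does.
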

\begin{proof}
    This is one of the key results in simplicial homotopy theory and its proof is not as elementary as others in this paper.
    Hence instead of presenting a proof, we direct the readers to the well-established literature~\cite[Sec.~16]{may1992simplicial}, especially~\cite[Theorem~16.5]{may1992simplicial}.
\end{proof}


\subsubsection{Identifying \texorpdfstring{$|\calK(H,m)|$}{|K(H,m)|} as Eilenberg-MacLane}

To apply what we have just learned, we first check the Kan condition on $\calK(H,m)$.
\begin{theorem}\label{the:fibrant}
    $\calK(H,m)$ is fibrant.
\end{theorem}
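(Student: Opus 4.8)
The plan is to reduce the Kan condition to an elementary statement about extending ``locally flat $H$-valued $m$-cochains''. First I would dispose of the small dimensions: a filler of an $(n,k)$-horn lives in $\calK(H,m)_{n+1}$, so if $n+1<m$ the target is the one-point set and fills anything, and if $n+1=m$ then $\calK(H,m)_n=\{1\}$ carries no horn data while every face map $\delta_j\colon\calK(H,m)_m\to\calK(H,m)_{m-1}$ is the trivial map $\boldsymbol 1$, so any element of $\calK(H,m)_m$ is a filler. For the remaining range $n\ge m$, I would pass to the ``$A$-variables'' of Section~\ref{sec:A<-g}: by the bijective Ansatz~\eqref{eq:A<-g} (specialised to the single degree $m$), an element of $\calK(H,m)_n$ is the same datum as a map $A\colon\langle n\!:\!m\rangle\to H$ obeying the degree-$m$ local-flatness condition~\eqref{eq:local-flat}, and under this bijection the face map $\partial_j$ becomes simply the restriction of $A$ to the facet $\delta^j$ (this is exactly what was computed around~\eqref{eq:pullback-A}). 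Filling an $(n,k)$-horn thus becomes: given locally flat $m$-cochains on every codimension-one face of the $(n\!+\!1)$-simplex except the $k$-th, pairwise agreeing on overlaps, show that they glue to a single locally flat $m$-cochain on the whole $(n\!+\!1)$-simplex; its restriction to the $k$-th face is then the required filler.

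For the gluing I would argue by the relative sizes of the dimensions. When $n\ge m+2$, every $m$-facet and every $(m\!+\!1)$-facet of the $(n\!+\!1)$-simplex is contained in at least two codimension-one faces, hence in some face $\ne k$; therefore the horn data unambiguously define $A$ on all $m$-facets (consistency being the overlap compatibility) and the local-flatness of $A$ is inherited condition-by-condition from the faces $\ne k$. When $n=m+1$ the same works, and $A$ is still well-defined on every $m$-facet, except that the single $(m\!+\!1)$-facet $\{0,\ldots,m+2\}\setminus\{k\}$ lies only in $\delta^k$, so its local-flatness relation is not carried by any horn face; here one verifies it directly by cancelling the local-flatness relations attached to the remaining $(m\!+\!1)$-facets against the overlap compatibilities. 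Finally, when $n=m$ the horn gives $m\!+\!1$ of the $m\!+\!2$ values $A_{\ell_0\cdots\ell_m}$ on the $(m\!+\!1)$-simplex and the single local-flatness relation pins down the last one. In all cases a filler exists.

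This argument is uniform in $m$ and never uses commutativity of $H$; for $m=1$ it is precisely the (well-known) statement that a horn in the nerve of the one-object groupoid $H$ has a filler, the $A$-variables being composable strings of group elements. As an alternative one may simply invoke classical results: for $m>1$ each $\calK(H,m)_n=H^{[n:m]}$ is an Abelian group under componentwise operations and $\partial_j,s_j$ are group homomorphisms (the only step needing commutativity is the two-term branch of~\eqref{eq:face-map}), so $\calK(H,m)$ is a simplicial Abelian group and hence fibrant; for $m=1$, $\calK(H,1)$ is the nerve of a groupoid, again fibrant.

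The one genuine computation — and the step I expect to be the main obstacle — is the $n=m+1$ case: showing that the local-flatness relation on the exceptional $(m\!+\!1)$-facet $\{0,\ldots,m+2\}\setminus\{k\}$ follows from the relations on the other $(m\!+\!1)$-facets together with the overlap compatibilities. This is a finite but orientation-sensitive product cancellation, where the signs~\eqref{eq:sign} and (when $m=1$) the non-commutativity of $H$ must be tracked carefully; the remaining sub-cases and the reduction steps are routine.
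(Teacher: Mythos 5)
Your proposal reaches the right conclusion, but by a genuinely different route from the paper. The paper never leaves the $g$-coordinates: it writes the filler down explicitly, producing its components $x_{q_1,\cdots,q_m}$ by two finite inductions (one sweeping up from $\lambda_0$ through $\lambda_{k-1}$, one sweeping down from $\lambda_{n+1}$ through $\lambda_{k+2}$) and then checking the filler property directly against Eq.~\eqref{eq:face-map}; this is constructive, uniform in $n\geq m$, and self-contained. You instead translate back to the geometric picture of Sec.~\ref{sec:A<-g} --- elements of $\calK(H,m)_n$ as locally flat degree-$m$ assignments $A$ on the $n$-simplex, $\partial_j$ as restriction --- and fill the horn by gluing the faces' $A$-data, with the dimension count $n\geq m+2$ / $n=m+1$ / $n=m$ sorting out which flatness relations come for free; your fallback (for $m>1$, $\calK(H,m)$ is a simplicial Abelian group, hence Kan by Moore's theorem, and $\calK(H,1)$ is the nerve of the group $H$) is the standard textbook argument and is complete once those classical results are cited. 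The gluing route is more illuminating about \emph{why} the horn fills, but to make it stand on its own you need two additions. First, it silently uses that Ansatz~\eqref{eq:A<-g} is a \emph{bijection} from $H^{[n:m]}$ onto the locally flat $A$'s: injectivity and flatness are what Sec.~\ref{sec:A<-g} verifies, while surjectivity is supported there only by a degrees-of-freedom count, and you need it to convert the glued cochain on the $(n\!+\!1)$-simplex back into an element of $H^{[n+1:m]}$; a short induction inverting the Ansatz (solving for $g_{q_1,\cdots,q_m}$ facet by facet, as in $g_{1,3}=A_{0,1,2}^{-1}A_{0,1,3}$) should be supplied. Second, the step you flag as the main obstacle --- flatness on the exceptional $(m\!+\!1)$-facet when $n=m+1$ --- is actually immediate: for $m\geq2$ (Abelian $H$) multiply the flatness relations of the faces $i\neq k$ with the alternating signs~\eqref{eq:sign}; all $m$-facet contributions cancel in pairs and what survives is precisely the flatness relation on face $k$ (this is just ``coboundary squared is zero'' on the $(m\!+\!2)$-simplex), while for $m=1$ it is the usual tetrahedron cancellation $A_{0,1}A_{1,3}=A_{0,3}=A_{0,2}A_{2,3}=A_{0,1}A_{1,2}A_{2,3}$, hence $A_{1,3}=A_{1,2}A_{2,3}$ (and similarly for the other $k$), valid because $H$ has inverses. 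With those two points written out, your argument is correct; the trade-off is that the paper's proof needs no appeal to the $A$-picture or to external theorems, whereas yours buys geometric transparency and, via the simplicial-group/nerve observation, a one-line proof for readers who already know the classical results.
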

\begin{proof}
    We shall explicitly find a filler for every $(n,k)$-horn $\lambda$ on $\calK(H,m)$.
    First, $\lambda$ trivially has a filler when $n<m$.
    We then turn to $n\geq m$ and construct a filler $x\in\calK(H,m)_{n+1}=H^{[n+1:m]}$.
    Let us define the components of $x$ by two finite inductions.
    The first induction begins with $\{q_{\bullet}\}\in[n\!+\!1\!:\!m]$ such that $1\notin\{q_{\bullet}\}$, for which we define
    \begin{equation}
        x_{q_1,\cdots,q_{m}}\,\equiv\,(\lambda_{0})_{q_1-1,\cdots,q_{m}-1}\,.
    \end{equation}
    We then define $x_{q_1,\cdots,q_{m}}$ such that $2\notin\{q_{\bullet}\}$ but $1\in\{q_{\bullet}\}$ through
    \begin{equation}
        x_{q_1,\cdots,q_m}\,\equiv\,(\lambda_1)_{q_1,q_2-1,\cdots,q_m-1}\,x_{q_1+1,q_2,\cdots,q_m}^{-1}\,,\qquad q_1=1\,.
    \end{equation}
    Given that we have defined $x_{q_1,\cdots,q_{m}}$ such that $j\notin\{q_{\bullet}\}$ but all $1,2,\cdots,j\!-\!1\in\{q_{\bullet}\}$, we next define $x_{q_1,\cdots,q_{m}}$ such that $j+1\notin\{q_{\bullet}\}$ but $1,2,\cdots,j\in\{q_{\bullet}\}$ through
    \begin{equation}
        x_{q_1,\cdots,q_{m}} \equiv (\lambda_j)_{q_1,\cdots,q_{\bullet},q_{\bullet+1}-1,\cdots,q_{m}-1} \,x_{q_1+1,\cdots,q_{\bullet}+1,q_{\bullet+1},\cdots,q_{m}}^{-1}\,,\qquad q_{\bullet}=j
    \end{equation}
    This induction stops at $j=k-1$.
    The second induction begins with $\{q_{\bullet}\}\in[n\!+\!1\!:\!m]$ such that $n\!+\!1\notin\{q_{\bullet}\}$ excluding those already in the first induction, for which we define
    \begin{equation}
        x_{q_1,\cdots,q_{m}}\,\equiv\,(\lambda_{n+1})_{q_1,\cdots,q_{m}}\,.
    \end{equation}
    We then define $x_{q_1,\cdots,q_{m}}$ such that $n\notin\{q_{\bullet}\}$ but $n\!+\!1\in\{q_{\bullet}\}$, excluding those already defined in the first induction, through
    \begin{equation}
        x_{q_1,\cdots,q_m}\,\equiv\,x_{q_1,\cdots,q_{m-1},q_m-1}^{-1}\,(\lambda_n)_{q_1,\cdots,q_{m-1},q_m-1}\,,\qquad q_m=n+1\,.
    \end{equation}
    Given that we have defined $x_{q_1,\cdots,q_{m}}$ such that $j\notin\{q_{\bullet}\}$ but $j\!+\!1,j\!+\!2,\cdots,n\!+\!1\in\{q_{\bullet}\}$, we next define $x_{q_1,\cdots,q_{m}}$ such that $j-1\notin\{q_{\bullet}\}$ but $j,j\!+\!1,\cdots,n\!+\!1\in\{q_{\bullet}\}$, excluding those already defined in the first induction, through
    \begin{equation}
        x_{q_1,\cdots,q_{m}} \equiv x_{q_1,\cdots,q_{\bullet},q_{\bullet+1}-1,\cdots,q_{m}-1}^{-1}\,(\lambda_{j-1})_{q_1,\cdots,q_{\bullet},q_{\bullet+1}-1,\cdots,q_{m}-1}\,,\qquad q_{\bullet+1}=j\,.
    \end{equation}
    This induction stops at $j=k+2$.
    The two finite inductions exhaust all indices in $[n\!+\!1\!:\!m]$ because a remained index would contain all $1,2,\cdots,n\!+\!1$, which cannot happen due to $n\!+\!1>m$.
    We can then verify that $x$ is indeed a filler of $\lambda$ via direct evaluation based on Eq.~\eqref{eq:face-map}.
\end{proof}
Let us now compute the homotopy groups.
\begin{theorem}\label{the:pi-of-K(H,m)}
    We have a bijection
    \begin{equation}\label{eq:pi0}
        \pi_0\bigl(\calK(H,m)\bigr)=\{1\}\,,
    \end{equation}
    and group isomorphisms for all $n=1,2,\cdots$,
    \begin{equation}
        \pi_n\bigl(\calK(H,m),*\bigr)=\begin{cases}
            H\,, &n=m \\
            \{1\}\,, &n\neq m
        \end{cases}
    \end{equation}
    with $*\in\calK(H,m)_0$.
\end{theorem}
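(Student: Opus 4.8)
The plan is to compute the simplicial homotopy groups of $\calK(H,m)$ straight from the definitions of Section~\ref{sec:step-3}, using that $\calK(H,m)$ is fibrant (Theorem~\ref{the:fibrant}) and that, by construction, $\calK(H,m)_k$ is a single point for $k<m$ and equals $H^{[k:m]}$ for $k\geq m$. The low-degree cases are immediate: for every $n$ with $0\leq n<m$ (in particular $\pi_0$), the set $Z_n(\calK(H,m),*)$ is contained in the one-point set $\calK(H,m)_n$, so it is itself a point and $\pi_n(\calK(H,m),*)=\{1\}$. It remains to handle $n=m$ and $n>m$.

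For $n=m$, the key observation is that $\calK(H,m)_{m-1}=\{1\}$, so each defining equation $\delta_i x=\sigma_0^{m-1}*$ of $Z_m(\calK(H,m),*)$ inside $\calK(H,m)_m=H^{[m:m]}$ holds automatically; thus the single-coordinate evaluation identifies $Z_m(\calK(H,m),*)$ with $H$ as a set. I would then show the based-homotopy relation on $Z_m$ is trivial: a homotopy $y\in\calK(H,m)_{m+1}=H^{[m+1:m]}$ from $a$ to $b$ has exactly $m+1$ components, and plugging the two-term face-map formula of Definition~\ref{def:face-map} into $\delta_i y=\sigma_0^m*$ for $i<m$ pins these components down recursively, after which $\delta_m y=a$ and $\delta_{m+1}y=b$ force $a=b$. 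So $\pi_m(\calK(H,m),*)\cong H$ as a set. Finally I would match the group structure of Theorem~\ref{the:group-structure} with that of $H$: for $a,b\in Z_m$, choose a compositor $\calC_{a,b}$ (Definition~\ref{def:compositor}) using the explicit filler built in the proof of Theorem~\ref{the:fibrant}, and compute $\delta_m\calC_{a,b}$; under the above identification it is the product in $H$ (when $m=1$ and $H$ is nonabelian the bookkeeping may yield the opposite multiplication, which is still a group isomorphic to $H$). This gives the degree-$m$ isomorphism.

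The substantive case is $n>m$, where I claim $Z_n(\calK(H,m),*)$ is again a single point, so $\pi_n(\calK(H,m),*)=\{1\}$. Let $x\in\calK(H,m)_n=H^{[n:m]}$ satisfy $\partial_i x=\boldsymbol 1$ for all $i=0,\dots,n$. I would prove by induction on $j=0,1,\dots,m$ that $x_S=1$ for every $S\in[n:m]$ with $\{1,2,\dots,j+1\}\not\subseteq S$. The base case $j=0$ is precisely $\partial_0 x=\boldsymbol 1$, which says $x_S=1$ whenever $1\notin S$. For the step, given $S$ with $\{1,\dots,j\}\subseteq S$ and $j+1\notin S$, set $T\in[n-1:m]$ to be $1,\dots,j$ followed by the elements of $S\setminus\{1,\dots,j\}$ each lowered by $1$; then the $q_\bullet=j$ clause of Definition~\ref{def:face-map} yields $(\partial_j x)_T=x_S\,x_{S'}$, where $S'$ is $S$ with $j$ replaced by $j+1$, so $\{1,\dots,j\}\not\subseteq S'$ and $x_{S'}=1$ by the inductive hypothesis; hence $x_S=(\partial_j x)_T=1$. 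Because $|S|=m<m+1$, no $S$ contains $\{1,\dots,m+1\}$, so at $j=m$ we get $x_S=1$ for all $S$, i.e.\ $x=\sigma_0^n*$. Assembling the three ranges of $n$ proves the theorem; composing with Theorem~\ref{the:homotopy-group} then transfers the computation to $\pi_\bullet(|\calK(H,m)|)$, which is what Step~III ultimately needs.

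I expect the inductive step for $n>m$ to be the main obstacle: one has to check that for each still-undetermined component $x_S$ there is an index $T$ for which the equation $\partial_j x=\boldsymbol 1$ isolates $x_S$ after dropping a cofactor already eliminated by the inductive hypothesis, and this demands a careful — though purely mechanical — tracking of the index shifts in Definition~\ref{def:face-map}, in the case $q_\bullet<j<q_{\bullet+1}$ versus the case $q_\bullet=j$. The other point needing care is the order of multiplication in the compositor computation when $m=1$.
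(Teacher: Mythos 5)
Your proposal is correct and follows essentially the same route as the paper: compute $Z_n\bigl(\calK(H,m),*\bigr)$ directly from the face maps, identify $Z_m\cong H$ with a trivial based-homotopy relation, and read off the group law from an explicit compositor (the paper's computation gives $[a]\times[b]\mapsto[ba]$, i.e.\ $H^{\mathrm{op}}\cong H$, matching your caveat about possibly getting the opposite multiplication). The only notable difference is the case $n>m$, where your induction on $j$ is in fact more careful than the paper's one-line argument: the paper picks a single $j\notin\{q_{\bullet}\}$ and asserts $h_{q_1,\cdots,q_m}=1$, which for components such as $h_{1,3}$ (with $m=2$, $n=3$) only yields a two-term product and implicitly requires exactly the chaining your inductive hypothesis supplies.
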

\begin{proof}
    Eq.~\ref{eq:pi0} is obvious since $\calK(H,m)_0$ has only one element $*$.
    To find the homotopy groups, let us evaluate $Z_n\bigl(\calK(H,m),*\bigr)$ which we abbreviate as $Z_n$.
    When $n<m$, $Z_n$ is trivial since $\calK(H,m)_n$ also has only one element.
    $Z_m=\calK(H,m)_m=H^{[m:m]}$ as sets because $\calK(H,m)_{m-1}$ is trivial.
    To show $Z_n$ is trivial when $n>m$, we first notice that $\sigma_0^n*$ is the identity element of $H^{[n:m]}$.
    Then let us consider $h\in H^{[n:m]}$.
    For any $\{q_1,\cdots,q_m\}\in[n\!:\!m]$, due to $n>m$, we can always find an integer $j\neq q_1,\cdots,q_m$ but $1\leq j\leq n$.
    According to Eq.~\eqref{eq:face-map}, $\delta_jh=\sigma_0^{n-1}*$ implies $h_{q_1,\cdots,q_m}=1$.
    Therefore, $h\in Z_n$ implies that all of $h$'s components are $1$.
    That is to say, $Z_n$ contains only the trivial $\sigma_0^{n}*$ when $n>m$.
    We hence proved the triviality of $\pi_n\bigl(\calK(H,m),*\bigr)$ for all $n\neq m$.

    Now we focus on the only nontrivial $Z_m=H^{[m:m]}$.
    Since $H^{[m:m]}\sim H$, we shall drop the index of its elements.
    Also, for $x\in\calK(H,m)_{m+1}= H^{[m+1:m]}\sim H^{m+1}$, we shall relabel its index, which is an $m$-tuple, by its complementary 1-tuple with respect to $m\!+\!1$.
    We did the same thing in the proof of Corollary~\ref{cor:immediate}.
    Now, if $x\in\calK(H,m)_{m+1}$ is a homotopy between $a,b\in Z_m$, then Eq.~\eqref{eq:face-map} implies 
    \begin{equation}
        x_1=x_2x_1=\cdots=x_{m}x_{m-1}=1\,,\qquad x_{m+1}x_{m}=a\,,\qquad x_{m+1}=b\,,
    \end{equation}
    which has a solution if and only if $a=b$.
    Hence $a\sim b$ implies $a=b$, i.e., each element in $Z_m$ belongs to a distinct homotopy class.
    We finally check the group structure.
    We can verify that $y\in\calK(H,m)_{m+1}$ defined by the components
    \begin{equation}
        y_1=\cdots=y_{m-1}=1\,,\qquad y_{m}=a\,,\qquad y_{m+1}=b\,.
    \end{equation}
    is a compositor of $(a,b)$.
    Because $\delta_m y=ba$, we obtain a group composition $[a]\times[b]\mapsto[ba]$ according to Theorem~\ref{the:group-structure}.
    Hence $\pi_m\bigl(\calK(H,m),*\bigr)=H^{op}$, the opposite group of $H$.
    Recall that $H^{op}$ is isomorphic to $H$ via $h\mapsto h^{-1}$, or via $h\mapsto h$ when $H$ is Abelian.
\end{proof}

\begin{theorem}\label{the:space-identify}
    $|\calK(H,m)|$ is an Eilenberg-MacLane space $K(H,m)$.
\end{theorem}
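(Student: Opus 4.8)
The plan is to verify, one by one, the three defining conditions of an Eilenberg-MacLane space in Definition~\ref{def:Eilenberg-MacLane} for the space $|\calK(H,m)|$, drawing entirely on results already established. Condition~(i), that $|\calK(H,m)|$ has the homotopy type of a CW complex, is immediate and needs no separate argument: Corollary~\ref{the:CW-complex} shows that the geometric realization of \emph{any} simplicial set is in fact a CW complex, with one $n$-cell for each non-degenerate $n$-simplex. So the only content of the theorem lies in the homotopy-group conditions~(ii) and~(iii).

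For condition~(iii), the computation of the higher homotopy groups, I would first invoke Theorem~\ref{the:fibrant} to know that $\calK(H,m)$ is fibrant. Fibrancy is exactly what is needed to make the combinatorial homotopy groups $\pi_n\bigl(\calK(H,m),*\bigr)$ well-defined groups (Theorem~\ref{the:group-structure}) and, crucially, to apply Theorem~\ref{the:homotopy-group}, which identifies these combinatorial groups with the genuine topological homotopy groups $\pi_n\bigl(|\calK(H,m)|,|*|\bigr)$ of the realization. Then Theorem~\ref{the:pi-of-K(H,m)} hands us the values on the simplicial side: $\pi_m = H$ and $\pi_n = \{1\}$ for every positive $n \neq m$. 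Composing these two theorems gives $\pi_n\bigl(|\calK(H,m)|,|*|\bigr) = H$ for $n=m$ and trivial otherwise, which is condition~(iii) at the basepoint $|*|$ coming from the unique vertex $* \in \calK(H,m)_0$; since Definition~\ref{def:Eilenberg-MacLane}(iii) asks for the homotopy groups \emph{at a basepoint}, this one basepoint suffices.

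Condition~(ii), path-connectedness, follows from the bijection $\pi_0\bigl(\calK(H,m)\bigr) = \pi_0\bigl(|\calK(H,m)|\bigr)$ in Theorem~\ref{the:homotopy-group} together with the observation, also recorded in Theorem~\ref{the:pi-of-K(H,m)}, that $\pi_0\bigl(\calK(H,m)\bigr)$ is a single point — indeed $\calK(H,m)_0$ has just the one element $*$. With (i), (ii), (iii) all verified, $|\calK(H,m)|$ satisfies the definition of $K(H,m)$, and the uniqueness of such a space up to homotopy equivalence (cited right after Definition~\ref{def:Eilenberg-MacLane}) justifies calling it \emph{the} Eilenberg-MacLane space $K(H,m)$; combined with Definition~\ref{def:BG}, this also fixes the homotopy type of $\calB G$ as $\prod_k |\calK(G_k,k)|$. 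I expect no real obstacle in this last step — all the substantial work was front-loaded into Theorems~\ref{the:fibrant},~\ref{the:homotopy-group}, and~\ref{the:pi-of-K(H,m)}. The only point demanding a moment's care is the bookkeeping around basepoints: one must note that $\calK(H,m)$ possesses a single $0$-simplex, so Theorem~\ref{the:homotopy-group} is necessarily applied at that unique vertex, and then rely on path-connectedness (or simply on the ``for a basepoint'' phrasing of the definition) to conclude.
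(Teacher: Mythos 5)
Your proposal is correct and follows essentially the same route as the paper: cite Corollary~\ref{the:CW-complex} for the CW structure, then combine Theorems~\ref{the:fibrant},~\ref{the:homotopy-group}, and~\ref{the:pi-of-K(H,m)} to obtain path-connectedness and the required homotopy groups. Your extra remarks on the basepoint bookkeeping are fine but add nothing beyond what the paper's one-line proof already implicitly uses.
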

\begin{proof}
    Corollary~\ref{the:CW-complex} says $|\calK(H,m)|$ is a CW-complex.
    Combining Theorems~\ref{the:homotopy-group},~\ref{the:fibrant}, and~\ref{the:pi-of-K(H,m)}, we see that $|\calK(H,m)|$ is path-connected and its homotopy groups do satisfy the requirement of Def.~\ref{def:Eilenberg-MacLane}.
    Therefore, we conclude that $|\calK(H,m)|$ is a $K(H,m)$.
\end{proof}


\subsection{Remark}
\label{sec:Dold-Kan}

A few remarks follow.

\subsubsection*{Simplicial Abelian group and Dold-Kan correspondence}

In a simplicial set, if all the sets are groups and all the maps are group homomorphisms, we obtain the notion of a simplicial group.
Similarly, we can have a simplicial Abelian group, a simplicial topological space, and so on.
To some extent, the structure of a simplicial Abelian group is completely understood through the Dold-Kan correspondence~\cite[Corollary~2.3]{goerss2009simplicial}, which says that there is an equivalence between the category of simplicial Abelian groups and the category of connective chain complexes.
The homotopy groups (Theorem~\ref{the:homotopy-group}) of a simplicial Abelian group are isomorphic to the homology groups of its corresponding chain complex via the Dold-Kan correspondence~\cite[Corollary~2.7]{goerss2009simplicial}.
Moreover, the geometric realization of any simplicial Abelian group is always a product of Eilenberg-MacLane spaces~\cite[Proposition~2.20]{goerss2009simplicial}.

$\calK(H,m)$ we built is exactly a simplicial Abelian group except $m=1$ with non-Abelian $H$.
Its Dold-Kan correspondence is the chain complex such that only its $m$-th group is nontrivial and is $H$. 
More generally, for an extended group $G$ with Abelian $G_1$, the product $\prod_{k}\calK(G_k,k)$ is also a simplicial Abelian group.
Its Dold-Kan correspondence is the chain complex such that its $k$-th group is $G_k$ and all the differentials are trivial.
$\calK(H,1)$ with non-Abelian $H$ becomes an exception not only because of its failure to be a simplicial Abelian group.
Actually, it fails to be a simplicial group in the first place; one can check that the face maps defined in Eq.~\eqref{eq:face-map} are not group homomorphisms when $H$ is non-Abelian.

\subsubsection*{\texorpdfstring{$\infty$}{infty}-groupoid and homotopy hypothesis}

Our narrative centers on topological spaces and describes simplicial sets (Def.~\ref{def:simplicial-set}) merely as tools to construct spaces.
Nevertheless, simplicial sets themselves make great sense.
In particular, fibrant simplicial sets (Def.~\ref{def:Kan-condition}), also called \textit{Kan complexes}, provide the most well-studied definition for $\infty$-groupoids.
An $\infty$-groupoid ought to be an $\infty$-category whose $n$-morphisms are invertible up to equivalence for all $n$.
More concretely, in a fibrant simplicial set $X$, $X_0$ specifies the objects, $X_n$ specifies the $n$-morphisms, and a composition of $n$-morphisms is provided by a horn-filler in a way that generalizes Def.~\ref{def:compositor} and Theorem~\ref{the:group-structure}.
The geometric realization (Def.~\ref{def:realization}) eventually establishes a bijective correspondence between equivalence classes of $\infty$-groupoids and weak homotopy types of topological spaces.
This is the homotopy hypothesis, which is a proved theorem when one defines $\infty$-groupoids by fibrant simplicial sets.

Given the triumph of fibrant simplicial sets, people found that, as long as we properly weaken the Kan condition (Def.~\ref{def:Kan-condition}), we can allow $n$-morphisms to be non-invertible while keeping all other desired features.
In such a way, simplicial sets can provide a pretty good foundation for general higher-categories.
This topological-stream approach to higher-category is usually dubbed quasi-category (see e.g.~\cite{joyal2008theory}).


\newpage
\section{'t Hooft anomaly: Algebraic approach}
\label{sec:anomaly}

In the case of finite extended-group symmetries and ordinary anomalies, anomaly inflow can be characterized in a very intuitive and precise manner via the algebraic approach.
In this section, the extended group $G$ will always be assumed finite.


\subsection{Anomaly inflow}
\label{sec:anomaly-inflow}

Let us take a compact $(n\!+\!1)$-dimensional spacetime $\calM$ and consider an invertible field theory on $\calM$.
In the discrete formulation, as we analyzed in Sec.~\ref{sec:lattice-lagrangian}, when $\calM$ is closed, the $(n\!+\!1)$-cocycle condition ensures the triangulation-independence, i.e., gauge invariance.
But this is longer true if $\partial\calM\neq\varnothing$.
Let us consider an elementary triangulation rearrangement that annihilates an $(n\!+\!1)$-simplex on the boundary $\partial\calM$, as illustrated by
\begin{equation}
\label{pic:triangle-move}
\begin{split}
\begin{tikzpicture}
    \node (L) at ( -1.3, 2.4) {};
    \node (R) at ( 4.7, 2.4) {};
    \node (r-1) at ( -1.3, -0.3) {};
    \node (r0) at ( 0, 0) {};
    \node (r1) at ( 1, -1.2) {};
    \node (r2) at ( 2.1, 0.1) {};
    \node (r3) at ( 3.6, -0.7) {};
    \node (r4) at ( 4.7, -0.6) {};
    \node (s-1) at ( -0.7, 1.1) {};
    \node (s0) at ( 1.1, 1.4) {};
    \node (s1) at ( 3.2, 1) {};
    \node (s2) at ( 4.2, 1.8) {};
    \node (p1) at ( -1.3, 0.7) {};
    \node (p2) at ( -1.3, 2.) {};
    \node (p3) at ( -0.1, 2.4) {};
    \node (p4) at ( 0.7, 2.4) {};
    \node (p5) at ( 2, 2.4) {};
    \node (p6) at ( 2.4, 2.4) {};
    \node (p7) at ( 3, 2.3) {};
    \node (p8) at ( 4.5, 2.4) {};
    \node (p9) at ( 4.7, 1.3) {};
    \node (p10) at ( 4.7, 0.6) {};
    \node (p11) at ( 4.7, 0.3) {};
    \fill[fill=RoyalBlue1!16] (r0.center)--(r1.center)--(r2.center); 
    \fill[fill=DarkGoldenrod1!16] (L.center)--(r-1.center)--(r0.center)--(r2.center)--(r3.center)--(r4.center)--(R.center); 
    \draw[RoyalBlue2, line width = 3pt] (r-1) -- (r0);
    \draw[RoyalBlue2, line width = 3pt] (r0) -- (r1);
    \draw[RoyalBlue2, line width = 3pt] (r1) -- (r2);
    \draw[RoyalBlue2, line width = 3pt] (r2) -- (r3);
    \draw[RoyalBlue2, line width = 3pt] (r3) -- (r4);
    \draw[DarkGoldenrod1, ultra thick] (r0) -- (r2);
    \draw[DarkGoldenrod1, ultra thick] (s-1) -- (r0);
    \draw[DarkGoldenrod1, ultra thick] (s0) -- (r0);
    \draw[DarkGoldenrod1, ultra thick] (s0) -- (s-1);
    \draw[DarkGoldenrod1, ultra thick] (s0) -- (r2);
    \draw[DarkGoldenrod1, ultra thick] (s0) -- (s1);
    \draw[DarkGoldenrod1, ultra thick] (r2) -- (s1);
    \draw[DarkGoldenrod1, ultra thick] (r3) -- (s1);
    \draw[DarkGoldenrod1, ultra thick] (s-1) -- (p1);
    \draw[DarkGoldenrod1, ultra thick] (s-1) -- (p2);
    \draw[DarkGoldenrod1, ultra thick] (s-1) -- (p3);
    \draw[DarkGoldenrod1, ultra thick] (s0) -- (p4);
    \draw[DarkGoldenrod1, ultra thick] (s0) -- (p5);
    \draw[DarkGoldenrod1, ultra thick] (s1) -- (p6);
    \draw[DarkGoldenrod1, ultra thick] (s1) -- (s2);
    \draw[DarkGoldenrod1, ultra thick] (p7) -- (s2);
    \draw[DarkGoldenrod1, ultra thick] (p8) -- (s2);
    \draw[DarkGoldenrod1, ultra thick] (p9) -- (s2);
    \draw[DarkGoldenrod1, ultra thick] (s1) -- (p10);
    \draw[DarkGoldenrod1, ultra thick] (r3) -- (p11);
    \filldraw[gray] (r0) circle (2pt);
    \filldraw[gray] (r1) circle (2pt);
    \filldraw[gray] (r2) circle (2pt);
    \filldraw[gray] (r3) circle (2pt);
    \filldraw[gray] (s-1) circle (2pt);
    \filldraw[gray] (s0) circle (2pt);
    \filldraw[gray] (s1) circle (2pt);
    \filldraw[gray] (s2) circle (2pt);
\end{tikzpicture} 
\end{split}
\longrightarrow
\begin{split}
\begin{tikzpicture}
    \node (L) at ( -1.3, 2.4) {};
    \node (R) at ( 4.7, 2.4) {};
    \node (r-1) at ( -1.3, -0.3) {};
    \node (r0) at ( 0, 0) {};
    \node (r1) at ( 1, -1.2) {};
    \node (r2) at ( 2.1, 0.1) {};
    \node (r3) at ( 3.6, -0.7) {};
    \node (r4) at ( 4.7, -0.6) {};
    \node (s-1) at ( -0.7, 1.1) {};
    \node (s0) at ( 1.1, 1.4) {};
    \node (s1) at ( 3.2, 1) {};
    \node (s2) at ( 4.2, 1.8) {};
    \node (p1) at ( -1.3, 0.7) {};
    \node (p2) at ( -1.3, 2.) {};
    \node (p3) at ( -0.1, 2.4) {};
    \node (p4) at ( 0.7, 2.4) {};
    \node (p5) at ( 2, 2.4) {};
    \node (p6) at ( 2.4, 2.4) {};
    \node (p7) at ( 3, 2.3) {};
    \node (p8) at ( 4.5, 2.4) {};
    \node (p9) at ( 4.7, 1.3) {};
    \node (p10) at ( 4.7, 0.6) {};
    \node (p11) at ( 4.7, 0.3) {};
    \fill[fill=DarkGoldenrod1!16] (L.center)--(r-1.center)--(r0.center)--(r2.center)--(r3.center)--(r4.center)--(R.center); 
    \draw[RoyalBlue2, line width = 3pt] (r-1) -- (r0);
    \draw[RoyalBlue2, line width = 3pt] (r0) -- (r2);
    \draw[RoyalBlue2, line width = 3pt] (r2) -- (r3);
    \draw[RoyalBlue2, line width = 3pt] (r3) -- (r4);
    \draw[DarkGoldenrod1, ultra thick] (s-1) -- (r0);
    \draw[DarkGoldenrod1, ultra thick] (s0) -- (r0);
    \draw[DarkGoldenrod1, ultra thick] (s0) -- (s-1);
    \draw[DarkGoldenrod1, ultra thick] (s0) -- (r2);
    \draw[DarkGoldenrod1, ultra thick] (s0) -- (s1);
    \draw[DarkGoldenrod1, ultra thick] (r2) -- (s1);
    \draw[DarkGoldenrod1, ultra thick] (r3) -- (s1);
    \draw[DarkGoldenrod1, ultra thick] (s-1) -- (p1);
    \draw[DarkGoldenrod1, ultra thick] (s-1) -- (p2);
    \draw[DarkGoldenrod1, ultra thick] (s-1) -- (p3);
    \draw[DarkGoldenrod1, ultra thick] (s0) -- (p4);
    \draw[DarkGoldenrod1, ultra thick] (s0) -- (p5);
    \draw[DarkGoldenrod1, ultra thick] (s1) -- (p6);
    \draw[DarkGoldenrod1, ultra thick] (s1) -- (s2);
    \draw[DarkGoldenrod1, ultra thick] (p7) -- (s2);
    \draw[DarkGoldenrod1, ultra thick] (p8) -- (s2);
    \draw[DarkGoldenrod1, ultra thick] (p9) -- (s2);
    \draw[DarkGoldenrod1, ultra thick] (s1) -- (p10);
    \draw[DarkGoldenrod1, ultra thick] (r3) -- (p11);
    \filldraw[gray] (r0) circle (2pt);
    \filldraw[gray] (r2) circle (2pt);
    \filldraw[gray] (r3) circle (2pt);
    \filldraw[gray] (s-1) circle (2pt);
    \filldraw[gray] (s0) circle (2pt);
    \filldraw[gray] (s1) circle (2pt);
    \filldraw[gray] (s2) circle (2pt);
\end{tikzpicture} 
\end{split}\,.
\end{equation}
As an $(n\!+\!1)$-simplex is annihilated, the action changes by one piece of lagrangian, i.e.,
\begin{equation}\label{eq:phase-factor}
    \calS\ \to\ \calS-\calL\,.
\end{equation}
Hence gauge invariance is lost.
To remedy it, we need to see that the $(n\!+\!1)$-simplex annihilation also causes a triangulation rearrangement on the boundary $\partial\calM$.
More precisely, let us suppose this $(n\!+\!1)$-simplex initially has $\alpha$ $n$-facets on the boundary.
After annihilation, its other $\beta$ $n$-facets are instead exposed on the boundary, with $\alpha +\beta = n+2$.
If on the boundary lives an anomalous theory whose partition function changes as
\begin{equation}
    \calZ \to \calZ\,\e^{\i\calL}\,,
\end{equation}
under the boundary triangulation rearrangement $\alpha\to\beta$, then gauge invariance is restored.
The entire system, comprised of the $(n\!+\!1)$-dimensional invertible field theory on $\calM$ and the $n$-dimensional anomalous theory on $\partial\calM$, has an invariant partition function under the triangulation rearrangement, i.e.,
\begin{equation}
    \calZ\,\e^{\i\calS}\ \to\ \calZ\,\e^{\i\calS}\,.
\end{equation}
The anomalous theory itself need not be formulated discretely; we just describe its background gauge fields using the discrete formulation.

What we have described above is precisely anomaly inflow, which establishes a bijective correspondence between $n$-dimensional anomalies and $(n\!+\!1)$-dimensional invertible field theories.
In this way, 't Hooft anomalies are characterized and classified by extended group cohomology.
A few important remarks follow.
\paragraph{(I)}
We emphasize the compatibility among different types of elementary rearrangements.
If we count $\alpha\to\beta$ and its inverse $\beta\to\alpha$ only once, there are $\left\lfloor\frac{n+2}{2}\right\rfloor$ types of elementary rearrangements.
Even without referring to anomaly inflow, from the anomalous phase of one elementary rearrangement, we should still be able to derive that of another.
\paragraph{(II)}
In general, we can consider an interface between two invertible field theories.
An elementary rearrangement causes $\calS_1\to\calS_1-\calL_1$ and $\calS_2\to\calS_2+\calL_2$ in the two regions and $\calZ\to\calZ\,\e^{\i(\calL_1-\calL_2)}$ on the interface.
This does not bring anything new due to invertibility.
However, if we replace invertible field theories by symmetry protected topological phases, interfaces are more appropriate settings (see our discussion in Sec.~\ref{sec:introduction}).
\paragraph{(III)}
Many readers may be more familiar with the dual picture of what we have described.
Let us dualize a triangulation to obtain a ``cotriangulation'' where a $k$-simplex is replaced by a codimension-$k$ polytope.
$G_k$ elements now live on codimension-$k$ polytopes, which are nothing but the topological operators of the $(k\!-\!1)$-form symmetry $G_k$, i.e.,
\begin{equation}
    \text{$k$-simplex with gauge field}\ \longleftrightarrow\ \text{codimension-$k$ topological operator} 
\end{equation}
Hence our discrete formulation is exactly the same as a web of multidimensional topological operators (see Sec.~\ref{sec:introduction}).
Anomalies are the extra phase factors accompanying rearrangements of the web, i.e., gauge transformations.


\subsection{Low-dimensional illustration}

We now use ``cotriangulations'' [Remark (III) above] to illustrate the general behavior of $n$-dimensional anomalies with $n\leq3$.
Our discussion will be schematic rather than concrete.
In what follows, $-^{\mathrm{ab}}$ denotes the Abelianization.


\subsubsection{1-dimensional anomaly}
\label{sec:anomaly-1}

Let us start with 1-dimensional anomalous theories and 2-dimensional invertible field theories.
Since $3=2+1$, there is only one type of elementary rearrangement $2\rightleftharpoons 1$.
Dualizing the 2-simplex~\eqref{pic:cochain-2} and cut out a 1-dimensional edge, we obtain the anomaly inflow,
\begin{equation}
\label{pic:anomaly-1}
\begin{split}
\begin{tikzpicture}
    \node (c) at (0,0.3) {};
    \node (l) at (-1.5,-0.8) {};
    \node (r) at (1.5,-0.8) {};
    \node (L) at (-3,0) {};
    \node (R) at (3,0) {};
    \fill[fill=gray!6] (-3,2.5)--(L.center) .. controls (-2.2,0) and (-2,-0.5) ..  (l.center) .. controls (-1,-1.4) and (1,-1.4) .. (r.center) .. controls (2,-0.5) and (2.2,0) .. (R.center) -- (3,2.5); 
    \fill[fill=DarkGoldenrod1] (0,1.2) -- (0,1.6) -- (0.15,1.4);
    \fill[fill=DarkGoldenrod1] (-0.589,-0.132) -- (-0.911,-0.368) -- (-0.661,-0.371);
    \fill[fill=DarkGoldenrod1] (0.589,-0.132) -- (0.911,-0.368) -- (0.839,-0.129);
    \draw[DarkGoldenrod1, ultra thick, ] (c) -- (0,2.5) node[midway, anchor=east ]{$g_1g_2$};
    \draw[DarkGoldenrod1, ultra thick, ] (c) -- (l) node[midway, anchor=east]{$g_1$};
    \draw[DarkGoldenrod1, ultra thick, ] (c) -- (r) node[midway, anchor=west ]{$g_2$};
    \draw[RoyalBlue2, line width = 2pt] (L.center) .. controls (-2.2,0) and (-2,-0.5) .. (l.center);
    \draw[RoyalBlue2, line width = 2pt] (l.center) .. controls (-1,-1.4) and (1,-1.4) .. (r.center);
    \draw[RoyalBlue2, line width = 2pt] (R.center) .. controls (2.2,0) and (2,-0.5) .. (r.center);
    \filldraw[Green4] (c) circle (3pt) node[anchor=south west]{$g_{1,2}$};
    \filldraw[RoyalBlue2] (l) circle (3pt) node[anchor=north east]{$g_1$};
    \filldraw[RoyalBlue2] (r) circle (3pt) node[anchor=north west]{$g_2$};
\end{tikzpicture} 
\end{split}
\ \xrightleftharpoons{\quad\ }\ 
\begin{split}
\begin{tikzpicture}
    \node (c) at (0,1.3) {};
    \node (L) at (-3,0) {};
    \node (R) at (3,0) {};
    \node (b) at (0,-1.4) {};
    \fill[fill=gray!6] (-3,2.5) -- (L.center) .. controls (-2,0) and (-1.5,1.3) .. (c.center) .. controls (1.5,1.3) and (2,0) .. (R.center) -- (3,2.5); 
    \fill[fill=DarkGoldenrod1] (0,1.7) -- (0,2.1) -- (0.15,1.9);
    \draw[DarkGoldenrod1, ultra thick, ] (c) -- (0,2.5) node[midway, anchor=east ]{$g_1g_2$};
    \draw[RoyalBlue2, line width = 2pt] (L.center) .. controls (-2,0) and (-1.5,1.3) .. (c.center) .. controls (1.5,1.3) and (2,0) .. (R.center);
    \filldraw[RoyalBlue2] (c) circle (3pt) node[anchor=north]{$g_1g_2$};
\end{tikzpicture} 
\end{split}\,.
\end{equation}
The 2nd cohomology of an extended group $G$ canonically splits in the following way,
\begin{equation}
    H^2\Bigl(G,U(1) \Bigr)\,=\,H^2\Bigl(BG_1, U(1)\Bigr)\,\oplus_{\Z}\,\Hom_{\Z}\Bigl(G_2, U(1)\Bigr)\,.
\end{equation}
$H^2\bigl(BG_1, M\bigr)$ classifies central extensions of $G_1$ by $M$.
Hence 0-form anomalies are well-known to mean a projective representation: We obtain an anomalous phase factor when we fuse $g_1$ and $g_2$ into $g_1g_2$ on the boundary.

1-form anomalies $\Hom_{\Z}\bigl(G_2, U(1)\bigr)$ are rather peculiar since $1$-form symmetry is absent in $1$-dimensional spacetime.
Yet the anomaly inflow does exist and awaits to be received by a boundary theory.
In general, extended group cohomology canonically splits as follows at any dimension,
\begin{equation}
    H^{n+1}\Bigl(G,U(1) \Bigr)\,=\,\Hom_{\Z}\Bigl(G_{n+1}, U(1)\Bigr)\,\oplus_{\Z}\,\Bigl\{\text{terms irrelevant to }G_{n+1}\Bigr\}\,.
\end{equation}
Since no $n$-form symmetry is present in $n$-dimensional spacetime, the nature of boundary theories awaits to be clarified.
We refer to this peculiar circumstance as anomaly overflow.


\subsubsection{2-dimensional anomaly}
\label{sec:anomaly-2}

We now turn to 2-dimensional anomalous theories and 3-dimensional invertible field theories.
This dimension, having been extensively studied, is illuminating and yet simple.
We have a canonical splitting,
\begin{equation}\label{eq:anomaly-2}
\begin{gathered}
    H^3\Bigl(G,U(1)\Bigr)\,=\\
    H^3\Bigl(BG_1, U(1)\Bigr)\,\oplus_{\Z}\,\Hom_{\Z}\Bigl(G_1^{\mathrm{ab}}\!\otimes_{\Z}\!G_2,\,U(1)\Bigr)\,\oplus_{\Z}\,\Hom_{\Z}\Bigl(G_3,U(1)\Bigr)\,,
\end{gathered}
\end{equation}
corresponding to 0-form anomalies, mixed 0\&1-form anomalies, and overflow 2-form anomalies, accordingly.

Because $4=2+2=3+1$, there are two types of elementary rearrangements, namely $2\rightleftharpoons 2$ and $3\rightleftharpoons 1$.
We give up drawing an illustration of anomaly inflow like Eq.~\eqref{pic:anomaly-1}.
Instead, we directly draw the rearrangements of topological operators on the boundary.
Decode Ansatz~\eqref{eq:A<-g} into geometry, we obtain for $2\rightleftharpoons 2$,
\begin{equation}
\label{pic:2<->2}
\begin{split}
\begin{tikzpicture}
    \node (c) at (0,-0.8) {};
    \node (l) at (-1.5,-1.9) {};
    \node (r) at (1.5,-1.9) {};
    \node (C) at (0,0.8) {};
    \node (L) at (-1.5,1.9) {};
    \node (R) at (1.5,1.9) {};
    \fill[fill=DarkGoldenrod1] (0,-0.2) -- (0,0.2) -- (0.15,0);
    \fill[fill=DarkGoldenrod1] (-0.589,-1.232) -- (-0.911,-1.468) -- (-0.661,-1.471);
    \fill[fill=DarkGoldenrod1] (0.589,-1.232) -- (0.911,-1.468) -- (0.839,-1.229);
    \fill[fill=DarkGoldenrod1] (-0.589,1.232) -- (-0.911,1.468) -- (-0.661,1.471);
    \fill[fill=DarkGoldenrod1] (0.589,1.232) -- (0.911,1.468) -- (0.839,1.229);
    \draw[DarkGoldenrod1, ultra thick] (c.center) -- (C.center) node[midway, anchor=east ]{$g_1g_2$};
    \draw[DarkGoldenrod1, ultra thick] (c.center) -- (l.center) node[midway, anchor=east]{$g_1$};
    \draw[DarkGoldenrod1, ultra thick] (c.center) -- (r.center) node[midway, anchor=west ]{$g_2$};
    \draw[DarkGoldenrod1, ultra thick] (C.center) -- (L.center) node[midway, anchor=east]{$g_1g_2g_3$};
    \draw[DarkGoldenrod1, ultra thick] (C.center) -- (R.center) node[midway, anchor=west ]{$g_3^{-1}$};
    \filldraw[RoyalBlue2] (c) circle (3pt) node[anchor=south west]{$g_{1,2}$};
    \filldraw[RoyalBlue2] (C) circle (3pt) node[anchor=north west]{$g_{1,3}g_{2,3}$};
\end{tikzpicture} 
\end{split}
\quad\xrightleftharpoons{\quad\ \ }\quad
\begin{split}
\begin{tikzpicture}
    \node (c) at (-0.8,0) {};
    \node (b) at (-1.9,-1.5) {};
    \node (t) at (-1.9,1.5) {};
    \node (C) at (0.8,0) {};
    \node (B) at (1.9,-1.5) {};
    \node (T) at (1.9,1.5) {};
    \fill[fill=DarkGoldenrod1] (-0.2,0) -- (0.2,0) -- (0,0.15);
    \fill[fill=DarkGoldenrod1] (-1.232,-0.589) -- (-1.468,-0.911) -- (-1.229,-0.839);
    \fill[fill=DarkGoldenrod1] (-1.232,0.589) -- (-1.468,0.911) -- (-1.229,0.839);
    \fill[fill=DarkGoldenrod1] (1.232,-0.589) -- (1.468,-0.911) -- (1.471,-0.661);
    \fill[fill=DarkGoldenrod1] (1.232,0.589) -- (1.468,0.911) -- (1.471,0.661);
    \draw[DarkGoldenrod1, ultra thick] (c.center) -- (C.center) node[midway, anchor=south ]{$g_2g_3$};
    \draw[DarkGoldenrod1, ultra thick] (c.center) -- (b.center) node[midway, anchor=west]{$g_1$};
    \draw[DarkGoldenrod1, ultra thick] (c.center) -- (t.center) node[midway, anchor=east ]{$g_1g_2g_3$};
    \draw[DarkGoldenrod1, ultra thick] (C.center) -- (B.center) node[midway, anchor=east]{$g_2$};
    \draw[DarkGoldenrod1, ultra thick] (C.center) -- (T.center) node[midway, anchor=east ]{$g_3^{-1}$};
    \filldraw[RoyalBlue2] (c) circle (3pt) node[anchor=east]{$g_{1,2}g_{1,3}$};
    \filldraw[RoyalBlue2] (C) circle (3pt) node[anchor=west]{$g_{2,3}$};
\end{tikzpicture} 
\end{split}\,,
\end{equation}
and for $3\rightleftharpoons 1$,
\begin{equation}
\label{pic:3<->1}
\begin{split}
\begin{tikzpicture}
    \node (t) at (0,0.8) {};
    \node (T) at (0,2.2) {};
    \node (l) at (-0.693,-0.4) {};
    \node (L) at (-1.906,-1.1) {};
    \node (r) at (0.693,-0.4) {};
    \node (R) at (1.906,-1.1) {};
    \node[DarkGoldenrod1, anchor=south east] (lt) at (-0.693,0.4) {$g_1g_2$};
    \node[DarkGoldenrod1, anchor=south west] (rt) at (0.693,0.4) {$g_1$};
    \node[DarkGoldenrod1, below=2pt] (b) at (0,-0.8) {$g_1g_2g_3$};
    \fill[fill=DarkGoldenrod1] (0,1.3) -- (0,1.7) -- (-0.15,1.5);
    \fill[fill=DarkGoldenrod1] (1.126,-0.65) -- (1.473,-0.85) -- (1.3745,-0.62);
    \fill[fill=DarkGoldenrod1] (-1.126,-0.65) -- (-1.473,-0.85) -- (-1.2245,-0.88);
    \fill[fill=DarkGoldenrod1] (-0.2,-0.8) -- (0.2,-0.8) -- (0,-0.95);
    \fill[fill=DarkGoldenrod1] (-0.593,0.573) -- (-0.793,0.227) -- (-0.823,0.475);
    \fill[fill=DarkGoldenrod1] (0.593,0.573) -- (0.793,0.227) -- (0.823,0.475);
    \draw[DarkGoldenrod1, ultra thick] (0,0) circle (0.8);
    \draw[DarkGoldenrod1, ultra thick] (t.center) -- (T.center) node[midway, above left=2pt and 0pt ]{$g_2$};
    \draw[DarkGoldenrod1, ultra thick] (l.center) -- (L.center) node[midway, below left=8pt and 0pt]{$g_3$};
    \draw[DarkGoldenrod1, ultra thick] (r.center) -- (R.center) node[midway, below right=8pt and -4pt]{$g_3^{-1}g_2^{-1}$};
    \filldraw[RoyalBlue2] (t) circle (3pt) node[anchor=south west]{$g_{1,2}$};
    \filldraw[RoyalBlue2] (l) circle (3pt) node[above left=-6pt and 1pt]{$g_{1,3}g_{2,3}$};
    \filldraw[RoyalBlue2] (r) circle (3pt) node[above right=-6pt and 1pt]{$g_{1,3}^{-1}g_{1,2}^{-1}$};
\end{tikzpicture} 
\end{split}
\quad\xrightleftharpoons{\quad\ \ }\quad
\begin{split}
\begin{tikzpicture}
    \node (T) at (0,2.2) {};
    \node (L) at (-1.906,-1.1) {};
    \node (R) at (1.906,-1.1) {};
    \fill[fill=DarkGoldenrod1] (0,0.9) -- (0,1.3) -- (-0.15,1.1);
    \fill[fill=DarkGoldenrod1] (0.780,-0.45) -- (1.126,-0.65) -- (1.028,-0.42);
    \fill[fill=DarkGoldenrod1] (-0.780,-0.45) -- (-1.126,-0.65) -- (-0.878,-0.680);
    \draw[DarkGoldenrod1, ultra thick] (0,0) -- (T.center) node[midway, above left=2pt and 0pt ]{$g_2$};
    \draw[DarkGoldenrod1, ultra thick] (0,0) -- (L.center) node[midway, above left=0pt and 0pt]{$g_3$};
    \draw[DarkGoldenrod1, ultra thick] (0,0) -- (R.center) node[midway, below right=8pt and -18pt]{$g_3^{-1}g_2^{-1}$};
    \filldraw[RoyalBlue2] (0,0) circle (3pt) node[anchor=south west]{$g_{2,3}$};
\end{tikzpicture} 
\end{split}\,.
\end{equation}
0-form anomalies $H^3\bigl(BG_1, U(1)\bigr)$ are well-known.
From the angle of non-invertible symmetry, a finite 0-form symmetry is described by a fusion category, i.e., a coherent set of F-symbols (see e.g.~\cite{Bhardwaj:2017xup, Chang:2018iay}).
Then the $2\rightleftharpoons2$ diagram~\eqref{pic:2<->2} corresponds to the monoidal structure (i.e., associator) while the $3\rightleftharpoons1$ diagram~\eqref{pic:3<->1} is related to the rigidity structure (i.e., quantum dimension).
The axioms of fusion categories ensure them to give identical phase factors when the fusion rule is invertible.

The mixed 0\&1-form anomalies $\Hom_{\Z}\bigl(G_1^{\mathrm{ab}}\!\otimes_{\Z}\!G_2,\,U(1)\bigr)$ are simpler and yet also interesting.
The diagrams where merely $g_1$ and $g_{2,3}$ are nontrivial can detect these anomalies.
Under this setting, the $2\rightleftharpoons2$ diagram says that we obtain an anomalous phase if we fuse a line topological operator with a point topological operator.
The $3\rightleftharpoons1$ diagram says that we obtain an anomalous phase if we contract a line topological operator equipped with a point topological operator.
These anomalies mean that the line and the point topological operators act mutually-projectively on the Hilbert spaces.
The infrared of gapped phases matching these anomalies are 2-dimensional BF models.

In general, for any non-negative integers $p,q$ such that
\begin{equation}
    p\neq q,\quad p+q=n-1\,,
\end{equation}
$H^{n+1}\bigl(G,U(1)\bigr)$ always contains a direct summand $\Hom_{\Z}\bigl(G_{p+1}^{\mathrm{ab}}\!\otimes_{\Z}\!G_{q+1}^{\mathrm{ab}},\,U(1)\bigr)$.
This corresponds to mixed $p$\&$q$-form anomalies in dimension $n$, implying that the codimension-$(p\!+\!1)$ and the codimension-$(q\!+\!1)$ topological operators act mutually-projectively on the Hilbert spaces.
The infrared of gapped phases matching these anomalies are $n$-dimensional BF models with dynamical $p$-form and $q$-form gauge fields.


\subsubsection{3-dimensional anomaly}
\label{sec:anomaly-3}

We then come to 2-dimensional anomalous theories and 3-dimensional invertible field theories.
This dimension starts to get intricate.
We have a canonical splitting
\begin{equation}\label{eq:anomaly-3-1}
\begin{gathered}
    H^4\Bigl(G,U(1)\Bigr)\,=\,\Hom_{\Z}\Bigl(G_1^{\mathrm{ab}}\!\otimes_{\Z}\!G_3,\,U(1)\Bigr)\,\oplus_{\Z}\,\Hom_{\Z}\Bigl(G_4,\,U(1)\Bigr)\\
    \oplus_{\Z}\,H^4\Bigl(BG_1, U(1)\Bigr)\,\oplus_{\Z}\,H^4\Bigl(B^2G_2, U(1)\Bigr)\,\oplus_{\Z}\,\Hom_{\Z}\Bigl( H_2\bigl(BG_1,G_2\bigr),\,U(1) \Bigr)
\end{gathered}
\end{equation}
corresponding to mixed 0\&2-form anomalies, overflow 3-form anomalies, 0-form anomalies, 1-form anomalies, and mixed 0\&1-form anomalies, accordingly.
There are two rather different types of mixed 0\&1-form anomalies according to
\begin{equation}\label{eq:anomaly-3-2}
\begin{gathered}
    0\,\to\,H_2\bigl(BG_1,\Z\bigr)\!\otimes_{\Z}\!G_2\,\to\,H_2\bigl(BG_1,G_2\bigr)\,\to\,\Tor^{\Z}_1\bigl(G_1^{\mathrm{ab}},G_2\bigr)\,\to\,0\,.
\end{gathered}
\end{equation}
This short exact sequence splits, but not canonically.

Because $5=2+3=4+1$, there are two types of elementary rearrangements, namely $2\rightleftharpoons 3$ and $4\rightleftharpoons 1$.
Decoding Ansatz~\eqref{eq:A<-g} into geometry, we can draw the illustration of the rearrangement $2\rightleftharpoons 3$ as follows:
\begin{equation}
\label{pic:2<->3}
\begin{gathered}
\begin{aligned}
\begin{tikzpicture}
    \node (c1) at (0,1) {};
    \node (c2) at (0,-1) {};
    \node (l1) at (-2, 2.5) {};
    \node (b1) at (0.25, 2.3) {};
    \node (r1) at (1.8, 2.8) {};
    \node (l2) at (-2, -2.25) {};
    \node (b2) at (0.25, -2.75) {};
    \node (r2) at (1.8, -1.75) {};
    \draw[Seashell4, ultra thick] (c1.center) -- (c2.center) node[midway, anchor=east ]{$g_{2,3}$};
    \draw[Seashell4, ultra thick, arrows={-Latex[length=10pt]}] (c2.center) -- (0,0.2);
    \draw[Seashell4, ultra thick, arrows={-Latex[length=10pt]}] (c1.center) -- (l1.center) node[at end, above]{$g_{2,3}g_{2,4}$};
    \draw[Seashell4, ultra thick, arrows={-Latex[length=10pt]}] (c1.center) -- (r1.center) node[at end, above]{$g_{3,4}^{-1}g_{2,4}^{-1}$};
    \draw[Seashell4, ultra thick, arrows={Latex[length=10pt]-}] (-1,-1.625) -- (l2.center);
    \draw[Seashell4, ultra thick] (c2.center) -- (l2.center) node[at end, below]{$g_{1,2}$};
    \draw[Seashell4, ultra thick, arrows={Latex[length=10pt]-}] (0.9,-1.375) -- (r2.center);
    \draw[Seashell4, ultra thick] (c2.center) -- (r2.center) node[at end, below]{$g_{1,3}^{-1}g_{1,2}^{-1}$};
    \filldraw[Green3] (c1) circle (3.5pt) node[left=2pt]{$g_{2,3,4}$};
    \filldraw[Green3] (c2) circle (3.5pt) node[left=2pt]{$g_{1,2,3}$};
    \draw[Seashell4, ultra thick, arrows={-Latex[length=10pt]}] (c1.center) -- (b1.center) node[at end, above]{$g_{3,4}$};
    \draw[Seashell4, ultra thick, arrows={Latex[length=10pt]-}] (0.125,-1.875) -- (b2.center);
    \draw[Seashell4, ultra thick] (c2.center) -- (b2.center) node[at end, below]{$g_{1,3}g_{2,3}$};
    \filldraw[Green3] (c1) circle (0.8pt);
    \filldraw[Green3] (c2) circle (0.8pt);
\end{tikzpicture} 
\end{aligned}
+\quad
\begin{aligned}
\begin{tikzpicture}
    \node (c1) at (0,1) {};
    \node (c2) at (0,-1) {};
    \node (l1) at (-2, 2.5) {};
    \node (b1) at (0.5, 2.3) {};
    \node (r1) at (1.8, 2.8) {};
    \node (l2) at (-2, -2.25) {};
    \node (b2) at (0.5, -2.75) {};
    \node (r2) at (1.8, -1.75) {};
    \fill[fill=RoyalBlue2!30] (c1.center) -- (l1.center) -- (r1.center);
    \fill[fill=DarkGoldenrod1!30] (c1.center) -- (l1.center) -- (l2.center) -- (c2.center);
    \fill[fill=red!30] (c1.center) -- (r1.center) -- (r2.center) -- (c2.center);
    \fill[fill=RoyalBlue2!30] (c2.center) -- (l2.center) -- (r2.center);
    \node (b-) at (-0.2,-1.6) [RoyalBlue2]{$g_1$};
    \draw[Seashell4, ultra thick] (c1.center) -- (l1.center);
    \draw[Seashell4, ultra thick] (c1.center) -- (r1.center);
    \draw[Seashell4, ultra thick] (c2.center) -- (l2.center);
    \draw[Seashell4, ultra thick] (c2.center) -- (r2.center);
    \draw[Seashell4, ultra thick] (c1.center) -- (c2.center);
    \filldraw[Green3] (c1) circle (3.5pt);
    \filldraw[Green3] (c2) circle (3.5pt);
    \fill[fill=red, opacity=0.3] (c1.center) -- (l1.center) -- (b1.center);
    \fill[fill=DarkGoldenrod1, opacity=0.3] (c1.center) -- (b1.center) -- (r1.center);
    \fill[fill=red, opacity=0.3] (c2.center) -- (l2.center) -- (b2.center);
    \fill[fill=DarkGoldenrod1, opacity=0.3] (c2.center) -- (b2.center) -- (r2.center);
    \draw[Seashell4, ultra thick] (c1.center) -- (b1.center);
    \draw[Seashell4, ultra thick] (c2.center) -- (b2.center);
    \filldraw[Green3] (c1) circle (1pt);
    \filldraw[Green3] (c2) circle (1pt);
    \fill[fill=RoyalBlue2, opacity=0.3] (c1.center) -- (b1.center) -- (b2.center) -- (c2.center);
    \node (y0) at (-2,0) [anchor=west, DarkGoldenrod2]{$g_2$};
    \node (b0) at (0,-0.25) [right=3pt, RoyalBlue3]{$g_3$};
    \node (r0) at (1.8,0.5) [, Red2]{$g_3^{-1}g_2^{-1}$};
    \node (y+) at (1.2,2.2) [anchor=east, DarkGoldenrod2]{$g_4$};
    \node (b+) at (0,2.9) [RoyalBlue2]{$g_2g_3g_4$};
    \node (r+) at (-0.3,2) [Red3]{$g_3g_4$};
    \node (y-) at (1.6,-2.4) [DarkGoldenrod1]{$g_1g_2g_3$};
    \node (r-) at (-0.8,-2.7) [Red3]{$g_1g_2$};
\end{tikzpicture} 
\end{aligned}\\
\rotatebox[origin=c]{270}{$\xrightleftharpoons{\quad\ \ }$}\\
\begin{aligned}
\begin{tikzpicture}
    \node (c1) at (-1.5, 0) {};
    \node (u1) at (-2.5, 1.5) {};
    \node (d1) at (-2.5, -1.25) {};
    \node (c2) at (0.3, -0.375) {};
    \node (u2) at (0.8, 1.225) {};
    \node (d2) at (0.8, -2) {};
    \node (c3) at (1.35, 0.375) {};
    \node (u3) at (2.35, 1.9) {};
    \node (d3) at (2.35, -0.6) {};
    \draw[Seashell4, ultra thick, arrows={Latex[length=10pt]-}]  (-0.0725, 0.1875) -- (c1.center);
    \draw[Seashell4, ultra thick] (c3.center) -- (c1.center) node[midway, above, black]{$\gamma$};
    \draw[Seashell4, ultra thick, arrows={-Latex[length=10pt]}] (c1.center) -- (u1.center) node[at end, above]{$g_{2,3}g_{2,4}$};
    \draw[Seashell4, ultra thick, arrows={Latex[length=10pt]-}] (-2, -0.625) -- (d1.center);
    \draw[Seashell4, ultra thick] (c1.center) -- (d1.center) node[at end, below left]{$g_{1,2}$};
    \draw[Seashell4, ultra thick, arrows={-Latex[length=10pt]}] (c3.center) -- (u3.center) node[at end, above]{$g_{3,4}^{-1}g_{2,4}^{-1}$};
    \draw[Seashell4, ultra thick, arrows={Latex[length=10pt]-}] (1.85, -0.1125) -- (d3.center);
    \draw[Seashell4, ultra thick] (c3.center) -- (d3.center) node[at end, below ]{$g_{1,3}^{-1}g_{1,2}^{-1}$};
    \draw[white, line width=5pt] (c2.center) -- (u2.center);
    \draw[white, line width=2.3pt] (c1.center) -- (c2.center);
    \filldraw[Green3] (c1) circle (3.5pt) node[left=2pt]{$g_{1,2,3}g_{1,2,4}$};
    \filldraw[Green3] (c3) circle (3.5pt) node[above right=-7pt and 3pt]{$g_{1,3,4}^{-1}g_{1,2,4}^{-1}$};
    \draw[Seashell4, ultra thick, arrows={Latex[length=10pt]-}] (-0.6, -0.1875) -- (c2.center);
    \draw[Seashell4, ultra thick] (c1.center) -- (c2.center) node[midway, below=2pt, black]{$\alpha$};
    \draw[Seashell4, ultra thick, arrows={Latex[length=10pt]-}] (0.825, 0) -- (c3.center);
    \draw[Seashell4, ultra thick] (c2.center) -- (c3.center) node[midway, below right=-3pt and -3pt, black]{$\beta$};
    \filldraw[Green3] (c2) circle (3.5pt) node[below left=8pt and -4pt]{$g_{1,3,4}g_{2,3,4}$};
    \draw[Seashell4, ultra thick, arrows={-Latex[length=10pt]}] (c2.center) -- (u2.center) node[at end, above]{$g_{3,4}$};
    \draw[Seashell4, ultra thick, arrows={Latex[length=10pt]-}] (0.55, -1.1875) -- (d2.center);
    \draw[Seashell4, ultra thick] (c2.center) -- (d2.center) node[at end, below]{$g_{1,3}g_{2,3}$};
    \filldraw[Green3] (c2) circle (1pt);
\end{tikzpicture} 
\end{aligned}
\ +\ 
\begin{aligned}
\begin{tikzpicture}
    \node (c1) at (-1.5, 0) {};
    \node (u1) at (-2.5, 1.5) {};
    \node (d1) at (-2.5, -1.25) {};
    \node (c2) at (0.3, -0.375) {};
    \node (u2) at (0.8, 1.225) {};
    \node (d2) at (0.8, -2) {};
    \node (c3) at (1.35, 0.375) {};
    \node (u3) at (2.35, 1.9) {};
    \node (d3) at (2.35, -0.6) {};
    \draw[Seashell4, ultra thick] (c3.center) -- (c1.center);
    \draw[white, line width=5pt] (c2.center) -- (u2.center);
    \draw[white, line width=2.3pt] (c1.center) -- (c2.center);
    \fill[fill=RoyalBlue2, opacity=0.3] (u1.center) -- (c1.center) -- (c3.center) -- (u3.center);
    \fill[fill=RoyalBlue2, opacity=0.3] (d1.center) -- (c1.center) -- (c3.center) -- (d3.center);
    \fill[fill=Azure4, opacity=0.6] (c1.center) -- (c2.center) -- (c3.center);
    \fill[fill=DarkGoldenrod1!30] (c1.center) -- (u1.center) -- (d1.center);
    \fill[fill=red!30] (c3.center) -- (u3.center) -- (d3.center);
    \node (b-) at (-0.2,-0.6) [RoyalBlue2]{$g_1$};
    \node (c) at (0.1,-0.05) [black]{$\omega$};
    \draw[Seashell4, ultra thick] (c1.center) -- (u1.center);
    \draw[Seashell4, ultra thick] (-2, -0.625) -- (d1.center);
    \draw[Seashell4, ultra thick] (c1.center) -- (d1.center);
    \draw[Seashell4, ultra thick] (c3.center) -- (u3.center);
    \draw[Seashell4, ultra thick] (c3.center) -- (d3.center);
    \fill[fill=red, opacity=0.3] 
    (u1.center) -- (c1.center) -- (c2.center) -- (u2.center);
    \fill[fill=red, opacity=0.3] 
    (d1.center) -- (c1.center) -- (c2.center) -- (d2.center);
    \fill[fill=DarkGoldenrod1, opacity=0.3] 
    (u3.center) -- (c3.center) -- (c2.center) -- (u2.center);
    \fill[fill=DarkGoldenrod1, opacity=0.3] 
    (d3.center) -- (c3.center) -- (c2.center) -- (d2.center);
    \filldraw[Green3] (c1) circle (3.5pt);
    \filldraw[Green3] (c3) circle (3.5pt);
    \draw[Seashell4, ultra thick] (c1.center) -- (-0.24, -0.2625);
    \draw[Seashell4, ultra thick] (c1.center) -- (c2.center);
    \draw[Seashell4, ultra thick] (c2.center) -- (1.035, 0.15);
    \draw[Seashell4, ultra thick] (c2.center) -- (c3.center);
    \filldraw[Green3] (c2) circle (3.5pt);
    \draw[Seashell4, ultra thick] (c2.center) -- (u2.center);
    \draw[Seashell4, ultra thick] (c2.center) -- (d2.center);
    \filldraw[Green3] (c2) circle (1pt);
    \fill[fill=RoyalBlue2, opacity=0.3] (c2.center) -- (u2.center) -- (d2.center);
    \node (y0) at (-2.5,0) [anchor=west, DarkGoldenrod2]{$g_2$};
    \node (b0) at (1.1,-1) [anchor=east, RoyalBlue3]{$g_3$};
    \node (r0) at (2.3,0.5) [Red2]{$g_3^{-1}g_2^{-1}$};
    \node (y+) at (1.8,1.2) [anchor=east, DarkGoldenrod2]{$g_4$};
    \node (b+) at (0.2,2) [RoyalBlue2]{$g_2g_3g_4$};
    \node (r+) at (-0.8,1) [Red3]{$g_3g_4$};
    \node (y-) at (2.2,-1.4) [DarkGoldenrod1]{$g_1g_2g_3$};
    \node (r-) at (-1,-1.8) [Red3]{$g_1g_2$};
\end{tikzpicture} 
\end{aligned}
\end{gathered}
\end{equation}
where
\begin{equation}
    \alpha\equiv g_{1,3}g_{1,4}g_{2,3}g_{2,4}\,,\qquad
    \beta\equiv g_{1,4}g_{2,4}g_{3,4}\,,\qquad
    \gamma\equiv g_{1,2}g_{1,3}g_{1,4}\,,\qquad
    \omega\equiv g_1g_2g_3g_4\,,
\end{equation}
and where the orientations of 2-dimensional topological operators are all (tilted) upward except for the three vertical operators, namely $g_2$, $g_3$, and $g_3^{-1}g_2^{-1}$, whose orientations obey the right-handed rule with respect to the upward direction, instead.

The first line of Eq.~\eqref{eq:anomaly-3-1} gives examples of aforementioned cases, so we focus on the second line, i.e., anomalies about 0-form and 1-form symmetries.
Let us start with the 1-form anomalies.
It is well-known that $H^4\bigl(B^2G_2, M\bigr)$ classifies $M$-valued quadratic forms on $G_2$~\cite{Eilenberg:1954kpi}.
We can detect 1-form anomalies $H^4\bigl(B^2G_2, U(1)\bigr)$ by setting only $g_{1,2}$ and $g_{3,4}$ nontrivial in the $2\rightleftharpoons 3$ diagram~\eqref{pic:2<->3}.
We can even further require $g_{1,2}=g_{3,4}$.
Under this setting, the diagram just says that we obtain an anomalous phase when two line topological operators intersect.

From the angle of non-invertible symmetry, a finite 1-form symmetry is described by a braided fusion category, i.e., a coherent set of F-symbols and R-symbols; see~\cite[Chapter~8]{etingof:2016ten} for a mathematical account and see~\cite[Appendix~E]{Kitaev:2005hzj} or~\cite{Moore:1988qv} for a physical account.
The infrared of gapped phases matching the anomaly can be produced by the Reshetikhin-Turaev-Witten construction.
With an invertible fusion rule $G_2$, a braided fusion category reduces to an element $\in H^{\mathrm{ab}}\bigl(G_2,U(1)\bigr)$, i.e., $G_2$'s Abelian cohomology~\cite[Sec.~8.4]{etingof:2016ten}.
The isomorphism
\begin{equation}
    H^4(B^2G_2,M)=H^{\mathrm{ab}}\bigl(G_2,M\bigr)
\end{equation}
is implicitly proved in~\cite{Eilenberg:1953kpi}, and the extraction of F-symbols and R-symbols from the algebraic cochains can be found in~\cite[Sec.~6.2]{Delcamp:2019fdp}.
The infrared of gapped phases matching the 1-form anomalies $H^4\bigl(B^2G_2, U(1)\bigr)$ are bosonic Abelian Chern-Simons theories, i.e., bosonic Abelian topological orders.

We can detect mixed 0\&1-form anomalies
$\Hom_{\Z}\bigl( H_2\bigl(BG_1,G_2\bigr),\,U(1) \bigr)$ by setting only $g_1$, $g_2$, and $g_{3,4}$ nontrivial in the $2\rightleftharpoons 3$ diagram~\eqref{pic:2<->3}.
Under this setting, the diagram says that we obtain an anomalous phase when the intersection line of two surface topological operators intersects with a line topological operator.
In particular, to detect the mixed 0\&1-form anomalies of the type $\Hom_{\Z}\bigl(\Tor^{\Z}_1\bigl(G_1^{\mathrm{ab}},G_2\bigr),\,U(1) \bigr)$, it is sufficient to keep only $g_1$ and $g_{3,4}$ nontrivial.
Then the diagram just says that we obtain an anomalous phase when a line topological operator fuses onto a surface topological operator.
This type of mixed 0\&1-form anomalies implies that the defected sector of the 1-form symmetry is charged under the 0-form symmetry, and vice versa.
Tracking 0-form anomalies $H^4\bigl(BG_1,U(1)\bigr)$ is more intricate in general.

From the angle of non-invertible symmetry, a finite symmetry that contains 0-form and 1-form ingredients is described by a fusion 2-category.
$\boldsymbol{\mathrm{Hom}}(1,1)$ of a fusion 2-category is exactly a braided fusion category.
When both the 0-form and the 1-form fusion rules are invertible, the sophisticated axioms of fusion 2-categories accommodate the extensions (i.e., 2-group structures) and the anomalies.
These axioms, as far as we know, are still more or less under debate; a promising solution is provided by~\cite{Douglas:2018qfz}.

The rearrangement $4\rightleftharpoons 1$ is much harder to draw because of the too many overlapped 2-dimensional operators when projected onto a paper plane.
Hence we merely schematically draw its 1-dimensional skeleton:
\begin{equation}
\begin{aligned}
\begin{tikzpicture}
    \node (u) at (0, 0.9) {};
    \node (U) at (0, 1.8) {};
    \node (l) at (-0.9, -0.5) {};
    \node (L) at (-1.8, -1) {};
    \node (b) at (0.2, -0.75) {};
    \node (B) at (0.4, -1.5) {};
    \node (r) at (0.9, -0.25) {};
    \node (R) at (1.8, -0.5) {};
    \draw[Seashell4, ultra thick] (u.center) -- (U.center);
    \draw[Seashell4, ultra thick] (l.center) -- (L.center);
    \draw[Seashell4, ultra thick] (r.center) -- (R.center);
    \draw[Seashell4, ultra thick] (b.center) -- (B.center);
    \draw[Seashell4, ultra thick] (l.center) -- (r.center);
    \draw[white, line width = 6pt] (u.center) -- (b.center);
    \draw[Seashell4, ultra thick] (u.center) -- (b.center);
    \draw[Seashell4, ultra thick] (u.center) -- (l.center);
    \draw[Seashell4, ultra thick] (u.center) -- (r.center);
    \draw[Seashell4, ultra thick] (b.center) -- (l.center);
    \draw[Seashell4, ultra thick] (b.center) -- (r.center);
\end{tikzpicture} 
\end{aligned}
\qquad\xrightleftharpoons{\quad\ \ }\quad
\begin{aligned}
\begin{tikzpicture}
    \node (U) at (0, 1.8) {};
    \node (L) at (-1.8, -1) {};
    \node (B) at (0.4, -1.5) {};
    \node (R) at (1.8, -0.5) {};
    \draw[Seashell4, ultra thick] (0,0) -- (U.center);
    \draw[Seashell4, ultra thick] (0,0) -- (L.center);
    \draw[Seashell4, ultra thick] (0,0) -- (R.center);
    \draw[Seashell4, ultra thick] (0,0) -- (B.center);
\end{tikzpicture} 
\end{aligned}\,.
\end{equation}
Starting from the next dimension, i.e., 4-dimensional anomalous theories and 5-dimensional invertible field theories, there are at least 3 types of elementary rearrangement, e.g., $6=3+3=4+2=5+1$.
It is also almost impossible to draw actual geometric illustrations.
Nevertheless, even if we lose the control of geometric intuition, Ansatz~\eqref{eq:A<-g} always enables us to write down the less intuitive but perfectly accurate mathematical formulae for the anomalies.


\newpage
\section{Outlook}
\label{sec:outlook}

Here are some prospects for future developments.

\subsection*{Computation}
Cartan~\cite{Cartan:1956sem} systematically evaluated the cohomology of Eilenberg-MacLane spaces based on the bar construction~\cite{Eilenberg:1953kpi}.
Besides, a wide range of low-degree cohomology can be conveniently evaluated by Leray-Serre spectral sequences.
Eilenberg and MacLane established the connection between the bar construction and the algebraic cochains implicitly based on the W construction~\cite{Eilenberg:1953kpi}.
In principle, these results enable us to extract representative algebraic cochains of a given class in $H^{\bullet}(G,M)$, despite the high complexity.
Meanwhile, it would also be great to establish a method to compute $H^{\bullet}(G,M)$ from algebraic cochains and algebraic differentials.

It is also crucial to establish some methods to compute the anomaly of actual physical models.
it would be useful to establish a relationship like that given in Ref.~\cite{Else:2014vma}.

\subsection*{Generalization}
We are also eager to generalize our framework to other corners in the realm of invertible symmetries, including two directions.
First, we need generalized cohomologies to capture fermionic systems, discrete spacetime symmetries, gravitational anomalies, etc.
Fermionic parity can be captured in a way that generalizes group supercohomologies~\cite{Gu:2012ib, Wang:2017moj, Kapustin:2017jrc, Tachikawa:2019lec}.
Discrete spacetime symmetry can be captured by suitable $\Z_2$ actions on the cochains~\cite{Gaiotto:2017zba}.
In general, we speculate that an invertible field theory can be captured by algebraic cochains as long as its partition function is nontrivial on a certain mapping torus.
Nevertheless, it seems implausible that the mere algebraic cochains can capture full bordism cohomologies [e.g.~the $(E_8)_1$ Chern-Simons theory, namely Kitaev's $E_8$ phase]; some decoration-type constructions like those in Ref.~\cite{Chen:2014zvm} are probably needed.

Second, we hope to describe the most general invertible symmetry.
The generalization to nontrivial higher-group structures is conceptually not as hard as above.
What we need to know is just how to arrange a higher-group on a simplex, as well as a convenient Ansatz to work with.
Some knowledge is already known about 2-groups~\cite{Kapustin:2013uxa}.
It is more difficult to go beyond 2-groups because the Postnikov tower is build layer by layer.
The generalization to continuous symmetries may look unnecessary at the first glance because their gauge fields are better expressed by connections with curvatures instead of a web of topological operators.
Nevertheless, there are often mixed anomalies between a continuous symmetry and a discrete symmetry, and webs of topological operators still make sense.


\newpage
\appendix
\section{Explicit differentials at low dimensions}
\label{app:differential}

We explicit enumerate the low-dimensional algebraic differentials defined in Def.~\ref{def:diffrential} up to dimension $\leq 4$ based on the low-dimensional face maps enumerated in Table~\ref{tab:face-map}.
Each column in Table~\ref{tab:face-map} gives a summand in Eq.~\eqref{eq:differential} and the contributions of adjacent columns differ by a minus sign.

\begin{table}[h]
\centering
\begin{tabular}{c||c|c|c}
    $2\to1$ & $\bullet=0$ & $\bullet=1$ & $\bullet=2$ \\\hline\hline
    $(\partial_{\bullet}g)_{1}$ & $g_2$ & $g_1g_2$ & $g_1$
\end{tabular}\\
$ $\\
$ $\\
\begin{tabular}{c||c|c|c|c}
    $3\to2$ & $\bullet=0$ & $\bullet=1$ & $\bullet=2$ & $\bullet=3$ \\\hline\hline
    $(\partial_{\bullet}g)_{1}$ & $g_2$ & $g_1g_2$ & $g_1$ & $g_1$ \\
    $(\partial_{\bullet}g)_{2}$ & $g_3$ & $g_3$ & $g_2g_3$ & $g_2$ \\\hline
    $(\partial_{\bullet}g)_{1,2}$ & $g_{2,3}$ & $g_{1,3}g_{2,3}$ & $g_{1,2}g_{1,3}$ & $g_{1,2}$ 
\end{tabular}\\
$ $\\
$ $\\
\begin{tabular}{c||c|c|c|c|c}
    $4\to3$ & $\bullet=0$ & $\bullet=1$ & $\bullet=2$ & $\bullet=3$ & $\bullet=4$ \\\hline\hline
    $(\partial_{\bullet}g)_{1}$ & $g_2$ & $g_1g_2$ & $g_1$ & $g_1$ & $g_1$ \\
    $(\partial_{\bullet}g)_{2}$ & $g_3$ & $g_3$ & $g_2g_3$ & $g_2$ & $g_2$ \\
    $(\partial_{\bullet}g)_{3}$ & $g_4$ & $g_4$ & $g_4$ & $g_3g_4$ & $g_3$ \\\hline
    $(\partial_{\bullet}g)_{1,2}$ & $g_{2,3}$ & $g_{1,3}g_{2,3}$ & $g_{1,2}g_{1,3}$ & $g_{1,2}$ & $g_{1,2}$ \\
    $(\partial_{\bullet}g)_{1,3}$ & $g_{2,4}$ & $g_{1,4}g_{2,4}$ & $g_{1,4}$ & $g_{1,3}g_{1,4}$ & $g_{1,3}$ \\
    $(\partial_{\bullet}g)_{2,3}$ & $g_{3,4}$ & $g_{3,4}$ & $g_{2,4}g_{3,4}$ & $g_{2,3}g_{2,4}$ & $g_{2,3}$ \\\hline
    $(\partial_{\bullet}g)_{1,2,3}$ & $g_{2,3,4}$ & $g_{1,3,4}g_{2,3,4}$ & $g_{1,2,4}g_{1,3,4}$ & $g_{1,2,3}g_{1,2,4}$ & $g_{1,2,3}$
\end{tabular}\\
$ $\\
$ $\\
\begin{tabular}{c||c|c|c|c|c|c}
    $5\to4$ & $\bullet=0$ & $\bullet=1$ & $\bullet=2$ & $\bullet=3$ & $\bullet=4$ & $\bullet=5$ \\\hline\hline
    $(\partial_{\bullet}g)_{1}$ & $g_2$ & $g_1g_2$ & $g_1$ & $g_1$ & $g_1$ & $g_1$ \\
    $(\partial_{\bullet}g)_{2}$ & $g_3$ & $g_3$ & $g_2g_3$ & $g_2$ & $g_2$ & $g_2$ \\
    $(\partial_{\bullet}g)_{3}$ & $g_4$ & $g_4$ & $g_4$ & $g_3g_4$ & $g_3$ & $g_3$ \\
    $(\partial_{\bullet}g)_{4}$ & $g_5$ & $g_5$ & $g_5$ & $g_5$ & $g_4g_5$ & $g_4$ \\\hline
    $(\partial_{\bullet}g)_{1,2}$ & $g_{2,3}$ & $g_{1,3}g_{2,3}$ & $g_{1,2}g_{1,3}$ & $g_{1,2}$ & $g_{1,2}$ & $g_{1,2}$ \\
    $(\partial_{\bullet}g)_{1,3}$ & $g_{2,4}$ & $g_{1,4}g_{2,4}$ & $g_{1,4}$ & $g_{1,3}g_{1,4}$ & $g_{1,3}$ & $g_{1,3}$ \\
    $(\partial_{\bullet}g)_{2,3}$ & $g_{3,4}$ & $g_{3,4}$ & $g_{2,4}g_{3,4}$ & $g_{2,3}g_{2,4}$ & $g_{2,3}$ & $g_{2,3}$ \\
    $(\partial_{\bullet}g)_{1,4}$ & $g_{2,5}$ & $g_{1,5}g_{2,5}$ & $g_{1,5}$ & $g_{1,5}$ & $g_{1,4}g_{1,5}$ & $g_{1,4}$ \\
    $(\partial_{\bullet}g)_{2,4}$ & $g_{3,5}$ & $g_{3,5}$ & $g_{2,5}g_{3,5}$ & $g_{2,5}$ & $g_{2,4}g_{2,5}$ & $g_{2,4}$ \\
    $(\partial_{\bullet}g)_{3,4}$ & $g_{4,5}$ & $g_{4,5}$ & $g_{4,5}$ & $g_{3,5}g_{4,5}$ & $g_{3,4}g_{3,5}$ & $g_{3,4}$ \\\hline
    $(\partial_{\bullet}g)_{1,2,3}$ & $g_{2,3,4}$ & $g_{1,3,4}g_{2,3,4}$ & $g_{1,2,4}g_{1,3,4}$ & $g_{1,2,3}g_{1,2,4}$ & $g_{1,2,3}$ & $g_{1,2,3}$ \\
    $(\partial_{\bullet}g)_{1,2,4}$ & $g_{2,3,5}$ & $g_{1,3,5}g_{2,3,5}$ & $g_{1,2,5}g_{1,3,5}$ & $g_{1,2,5}$ & $g_{1,2,4}g_{1,2,5}$ & $g_{1,2,4}$ \\
    $(\partial_{\bullet}g)_{1,3,4}$ & $g_{2,4,5}$ & $g_{1,4,5}g_{2,4,5}$ & $g_{1,4,5}$ & $g_{1,3,5}g_{1,4,5}$ & $g_{1,3,4}g_{1,3,5}$ & $g_{1,3,4}$ \\
    $(\partial_{\bullet}g)_{2,3,4}$ & $g_{3,4,5}$ & $g_{3,4,5}$ & $g_{2,4,5}g_{3,4,5}$ & $g_{2,3,5}g_{2,4,5}$ & $g_{2,3,4}g_{2,3,5}$ & $g_{2,3,4}$ \\\hline
    $(\partial_{\bullet}g)_{1,2,3,4}$ & $g_{2,3,4,5}$ & $g_{1,3,4,5}g_{2,3,4,5}$ & $g_{1,2,4,5}g_{1,3,4,5}$ & $g_{1,2,3,5}g_{1,2,4,5}$ & $g_{1,2,3,4}g_{1,2,3,5}$ & $g_{1,2,3,4}$
\end{tabular}
$ $\\
\caption{The face maps $\partial_{\bullet}:G_k^{[n:k]}\mapsto G_k^{[n-1:k]}$ for small $n$.}
\label{tab:face-map}
\end{table}


\subsection{\texorpdfstring{$\d_1:C^1\to C^2$}{d1}}
\vspace{1em}
\begin{equation}
\begin{gathered}
    \d_1c_1\Bigl({g_1\,\big|\,g_2}\,\Big\|\,{g_{1,2}}\Bigr)\ =\ c_1\Bigl({g_2}\Bigr)\,-c_1\Bigl({g_1g_2}\Bigr)\,+c_1\Bigl({g_1}\Bigr)\,.
\end{gathered}
\end{equation}
\vspace{1em}


\subsection{\texorpdfstring{$\d_2:C^2\to C^3$}{d2}}
\vspace{1em}
\begin{equation}
\begin{gathered}
    \d_2c_2\Bigl({g_1\,\big|\,g_2\,\big|\,g_3}\,\Big\|\, {g_{1,2}\,\big|\,g_{1,3}\,\big|\,g_{2,3}}\,\Big\|\,{g_{1,2,3}}\Bigr)\ \\
    \\
    =\ c_2\Bigl({g_2\,\big|\,g_3}\,\Big\|\,{g_{2,3}}\Bigr)\,    -\,c_2\Bigl({g_1g_2\,\big|\,g_3}\,\Big\|\,{g_{1,3}g_{2,3}}\Bigr)\,\\
    \\
    +\ c_2\Bigl({g_1\,\big|\,g_2g_3}\,\Big\|\,{g_{1,2}g_{1,3}}\Bigr)\,
    -\,c_2\Bigl({g_1\,\big|\,g_2}\,\Big\|\,{g_{1,2}}\Bigr)\,.
\end{gathered}
\end{equation}
\vspace{1em}


\subsection{\texorpdfstring{$\d_3:C^3\to C^4$}{d3}}
\vspace{1em}
\begin{equation}
\begin{gathered}
    \d_3c_3\Bigl({g_1\,\big|\,g_2\,\big|\,g_3\,\big|\,g_4}\,\Big\|\,{g_{1,2}\,\big|\,g_{1,3}\,\big|\,g_{2,3}\,\big|\,g_{1,4}\,\big|\,g_{2,4}\,\big|\,g_{3,4}}\,\Big\|\,{g_{1,2,3}\,\big|\,g_{1,2,4}\,\big|\,g_{1,3,4}\,\big|\,g_{2,3,4}}\,\Big\|\,{g_{1,2,3,4}}\Bigr)\\
    \\
    =\ c_3\Bigl({g_2\,\big|\,g_3\,\big|\,g_4}\,\Big\|\, {g_{2,3}\,\big|\,g_{2,4}\,\big|\,g_{3,4}}\,\Big\|\,{g_{2,3,4}}\Bigr)\\
    \\
    -\,c_3\Bigl({g_1g_2\,\big|\,g_3\,\big|\,g_4}\,\Big\|\, {g_{1,3}g_{2,3}\,\big|\,g_{1,4}g_{2,4}\,\big|\,g_{3,4}}\,\Big\|\,{g_{1,2,4}g_{2,3,4}}\Bigr)\\
    \\
    +\,c_3\Bigl({g_1\,\big|\,g_2g_3\,\big|\,g_4}\,\Big\|\, {g_{1,2}g_{1,3}\,\big|\,g_{1,4}\,\big|\,g_{2,4}g_{3,4}}\,\Big\|\,{g_{1,2,4}g_{1,3,4}}\Bigr)\\
    \\
    -\,c_3\Bigl({g_1\,\big|\,g_2\,\big|\,g_3g_4}\,\Big\|\, {g_{1,2}\,\big|\,g_{1,3}g_{1,4}\,\big|\,g_{2,3}g_{2,4}}\,\Big\|\,{g_{1,2,3}g_{1,2,4}}\Bigr)\\
    \\
    +\,c_3\Bigl({g_1\,\big|\,g_2\,\big|\,g_3}\,\Big\|\, {g_{1,2}\,\big|\,g_{1,3}\,\big|\,g_{2,3}}\,\Big\|\,{g_{1,2,3}}\Bigr)
\end{gathered}
\end{equation}
\vspace{1em}


\subsection{\texorpdfstring{$\d_4:C^4\to C^5$}{d4}}

\vspace{1em}
\begin{equation}
\begin{gathered}
    \begin{aligned}
        \d_4c_4\Bigl(&{g_1\,\big|\,g_2\,\big|\,g_3\,\big|\,g_4\,\big|\,g_5}\,\Big\|\,{g_{1,2}\,\big|\,g_{1,3}\,\big|\,g_{2,3}\,\big|\,g_{1,4}\,\big|\,g_{2,4}\,\big|\,g_{3,4}\,\big|\,g_{1,5}\,\big|\,g_{2,5}\,\big|\,g_{3,5}\,\big|\,g_{4,5}}\Big\|\\ 
        &{g_{1,2,3}\,\big|\,g_{1,2,4}\,\big|\,g_{1,3,4}\,\big|\,g_{2,3,4}\,\big|\,g_{1,2,5}\,\big|\,g_{1,3,5}\,\big|\,g_{1,4,5}\,\big|\,g_{2,3,5}\,\big|\,g_{2,4,5}\,\big|\,g_{3,4,5}}\Big\|\\ 
        &{g_{1,2,3,4}\,\big|\,g_{1,2,3,5}\,\big|\,g_{1,2,4,5}\,\big|\,g_{1,3,4,5}\,\big|\,g_{2,3,4,5}}\,\Big\|\,{g_{1,2,3,4,5}}\Bigr)
    \end{aligned}\\
    \\
    \begin{aligned}
        =\ c_4\Bigl(&{g_2\,\big|\,g_3\,\big|\,g_4\,\big|\,g_5}\,\Big\|\, {g_{2,3}\,\big|\,g_{2,4}\,\big|\,g_{3,4}\,\big|\,g_{2,5}\,\big|\,g_{3,5}\,\big|\,g_{4,5}}\Big\|\\ &{g_{2,3,4}\,\big|\,g_{2,3,5}\,\big|\,g_{2,4,5}\,\big|\,g_{3,4,5}}\,\Big\|\, {g_{2,3,4,5}}\Bigr)
    \end{aligned}\\
    \\
    \begin{aligned}
        -\ c_4\Bigl(&{g_1g_2\,\big|\,g_3\,\big|\,g_4\,\big|\,g_5}\,\Big\|\, {g_{1,3}g_{2,3}\,\big|\,g_{1,4}g_{2,4}\,\big|\,g_{3,4}\,\big|\,g_{1,5}g_{2,5}\,\big|\,g_{3,5}\,\big|\,g_{4,5}}\Big\|\\ &{g_{1,3,4}g_{2,3,4}\,\big|\,g_{1,3,5}g_{2,3,5}\,\big|\,g_{1,4,5}g_{2,4,5}\,\big|\,g_{3,4,5}}\,\Big\|\, {g_{1,3,4,5}g_{2,3,4,5}}\Bigr)
    \end{aligned}\\
    \\
    \begin{aligned}
        +\ c_4\Bigl(&{g_1\,\big|\,g_2g_3\,\big|\,g_4\,\big|\,g_5}\,\Big\|\, {g_{1,2}g_{1,3}\,\big|\,g_{1,4}\,\big|\,g_{2,4}g_{3,4}\,\big|\,g_{1,5}\,\big|\,g_{2,5}g_{3,5}\,\big|\,g_{4,5}}\Big\|\\ &{g_{1,2,4}g_{1,3,4}\,\big|\,g_{1,2,5}g_{1,3,5}\,\big|\,g_{1,4,5}\,\big|\,g_{2,4,5}g_{3,4,5}}\,\Big\|\, {g_{1,2,4,5}g_{1,3,4,5}}\Bigr)
    \end{aligned}\\
    \\
    \begin{aligned}
        -\ c_4\Bigl(&{g_1\,\big|\,g_2\,\big|\,g_3g_4\,\big|\,g_5}\,\Big\|\, {g_{1,2}\,\big|\,g_{1,3}g_{1,4}\,\big|\,g_{2,3}g_{2,4}\,\big|\,g_{1,5}\,\big|\,g_{2,5}\,\big|\,g_{3,5}g_{4,5}}\Big\|\\ &{g_{1,2,3}g_{1,2,4}\,\big|\,g_{1,2,5}\,\big|\,g_{1,3,5}g_{1,4,5}\,\big|\,g_{2,3,5}g_{2,4,5}}\,\Big\|\, {g_{1,2,3,5}g_{1,2,4,5}}\Bigr)
    \end{aligned}\\
    \\
    \begin{aligned}
        +\ c_4\Bigl(&{g_1\,\big|\,g_2\,\big|\,g_3\,\big|\,g_4g_5}\,\Big\|\, {g_{1,2}\,\big|\,g_{1,3}\,\big|\,g_{2,3}\,\big|\,g_{1,4}g_{1,5}\,\big|\,g_{2,4}g_{2,5}\,\big|\,g_{3,4}g_{3,5}}\Big\|\\ &{g_{1,2,3}\,\big|\,g_{1,2,4}g_{1,2,5}\,\big|\,g_{1,3,4}g_{1,3,5}\,\big|\,g_{2,3,4}g_{2,3,5}}\,\Big\|\, {g_{1,2,3,4}g_{1,2,3,5}}\Bigr)
    \end{aligned}\\
    \\
    \begin{aligned}
        -\ c_4\Bigl(&{g_1\,\big|\,g_2\,\big|\,g_3\,\big|\,g_4}\,\Big\|\, {g_{1,2}\,\big|\,g_{1,3}\,\big|\,g_{2,3}\,\big|\,g_{1,4}\,\big|\,g_{2,4}\,\big|\,g_{3,4}}\Big\|\\ &{g_{1,2,3}\,\big|\,g_{1,2,4}\,\big|\,g_{1,3,4}\,\big|\,g_{2,3,4}}\,\Big\|\, {g_{1,2,3,4}}\Bigr)\,.
    \end{aligned}
    \\
    \\
\end{gathered}
\end{equation}
\vspace{1em}


\newpage
\acknowledgments

The author is grateful to Linhao Li and Yuya Tanizaki for their numerous beneficial communications and their invaluable comments on the early draft.
He thanks Aleksey Cherman for his meticulous suggestions that have vastly improved the paper's readability.
He also thanks Yuji Tachikawa for sharing his encouraging opinions on the manuscript.
This work was supported by Simons Foundation through the Collaboration on Confinement and QCD Strings under award number 994302.

\bibliographystyle{JHEP}
\bibliography{biblio.bib}

\end{document}